\documentclass[11pt,a4paper]{article}
\usepackage{amssymb}
\usepackage{amsmath}
\usepackage{amsfonts}
\usepackage{dsfont}
\usepackage{amsthm}
\usepackage{mathrsfs}
\usepackage{hyperref}
\usepackage{color}
\usepackage[margin=2.41cm]{geometry}
\usepackage[all,cmtip]{xy}
\usepackage[utf8]{inputenc}
\usepackage{graphicx}
\usepackage{varwidth}
\usepackage{comment}


\usepackage{upgreek}
\usepackage{rotating}

\usepackage{tikz}
\usetikzlibrary{shapes.geometric}

\usepackage{tikz-cd}
\usetikzlibrary{matrix,arrows,calc,decorations.pathmorphing,fit,shapes.geometric,decorations.pathreplacing,positioning}

\usepackage[shortlabels]{enumitem}



\definecolor{darkred}{rgb}{0.8,0.1,0.1}
\hypersetup{
     colorlinks=false,         
     linkcolor=darkred,
     citecolor=blue,
}

\theoremstyle{plain}
\newtheorem{theo}{Theorem}[section]
\newtheorem{lem}[theo]{Lemma}
\newtheorem{propo}[theo]{Proposition}
\newtheorem{cor}[theo]{Corollary}

\theoremstyle{definition}
\newtheorem{defi}[theo]{Definition}

\newenvironment{rem}
  {\pushQED{\qed}\remm}
  {\popQED\endremm}

\numberwithin{equation}{section}

\newcommand{\CongTo}{\xrightarrow{\raisebox{-0.3 em}{\smash{\ensuremath{\text{\tiny$\cong$}}}}}}
\newcommand{\SimTo}{\xrightarrow{\raisebox{-0.3 em}{\smash{\ensuremath{\text{\tiny$\sim$}}}}}}

\def\nn{\nonumber}

\def\bbK{\mathbb{K}}
\def\bbR{\mathbb{R}}
\def\bbC{\mathbb{C}}

\def\bbZ{\mathbb{Z}}
\def\bbT{\mathbb{T}}
\def\bbS{\mathbb{S}}

\def\Hom{\mathrm{Hom}}

\def\End{\mathrm{End}}

\def\Der{\mathrm{Der}}

\def\Sym{\mathrm{Sym}}

\DeclareMathOperator{\ad}{ad}

\def\id{\mathrm{id}}

\def\dd{\mathrm{d}}

\def\cc{\mathrm{c}}

\def\dim{\mathrm{dim}}
\def\1{\mathbf{1}}
\def\oone{\mathds{1}}
\def\op{\mathrm{op}}

\def\pr{\mathrm{pr}}

\def\Set{\mathbf{Set}}
\def\Alg{\mathbf{Alg}}
\def\CoAlg{\mathbf{CoAlg}}

\def\Ch{\mathbf{Ch}}

\def\Op{\mathbf{Op}}
\def\CoOp{\mathbf{CoOp}}

\def\SymSeq{\mathbf{SymSeq}}

\def\dR{\mathrm{dR}}
\def\CE{\mathrm{CE}}

\def\X{\mathsf{X}}
\def\Y{\mathsf{Y}}

\def\g{\mathfrak{g}}

\def\CC{\mathcal{C}}

\def\PP{\mathcal{P}}

\def\FFF{\mathfrak{F}}

\def\H{\mathsf{H}}

\def\conil{\mathrm{conil}}
\def\aug{\mathrm{aug}}

\newcommand\und[1]{\underline{#1}}

\def\sk{\vspace{1mm}}

\makeatletter
\let\@fnsymbol\@alph
\makeatother

%


\title{%
Universal first-order Massey product of a prefactorization algebra
}

\author{%
Simen Bruinsma$^{1,a}$, Alexander Schenkel$^{2,b}$\ and\ Beno\^{\i}t Vicedo$^{1,c}$\vspace{4mm}\\
{\small ${}^1$ Department of Mathematics, University of York,}\\
{\small Heslington, York YO10 5DD, United Kingdom.}\vspace{2mm}\\
{\small ${}^2$ School of Mathematical Sciences, University of Nottingham,}\\
{\small University Park, Nottingham NG7 2RD, United Kingdom.}\vspace{4mm}\\
{\small \begin{tabular}{ll}
Email: & ${}^a$~\texttt{simen.bruinsma@york.ac.uk}\\
& ${}^b$~\texttt{alexander.schenkel@nottingham.ac.uk}\\
& ${}^c$~\texttt{benoit.vicedo@gmail.com}\\
\vspace{2mm}
\end{tabular}
}
}

\date{July 2023}


\begin{document}

\maketitle

\begin{abstract}
\noindent This paper studies the universal first-order Massey product of a prefactorization algebra, which encodes higher algebraic operations on the cohomology. Explicit computations of these structures are carried out in the locally constant case, with applications to factorization envelopes on $\mathbb{R}^m$ and a compactification of linear Chern-Simons theory on $\mathbb{R}^2\times \mathbb{S}^1$.
\end{abstract}

\paragraph*{Keywords:} Prefactorization algebras, dg-operads, minimal models, Massey products.

\paragraph*{MSC 2020:} 81T70, 18M70, 55S30

\renewcommand{\baselinestretch}{0.8}\normalsize
\tableofcontents
\renewcommand{\baselinestretch}{1.0}\normalsize



\section{\label{sec:intro}Introduction and summary}
Prefactorization algebras \cite{CostelloGwilliam,CostelloGwilliam2}
are a modern and versatile approach to quantum field theory (QFT) with a broad 
range of applications, e.g.\ in topological \cite{GwilliamGrady,GwilliamWilliamsTFT,ElliottSafronov,ElliottGwilliam}, 
holomorphic \cite{Williams,Williams2,GwilliamWilliams}
and Lorentzian \cite{GwilliamRejzner,BPS,BMS,GwilliamRejzner2} settings.
They are designed to axiomatize the algebraic structure of
observables in a QFT on a manifold $M$, possibly
with additional geometric structure, such as an orientation, a metric or a complex structure. 
In its most basic form, this algebraic structure is relatively simple:
To every suitable open subset $U\subseteq M$
is associated a cochain complex $\FFF(U)$ of observables and to every 
mutually disjoint family $(U_1,\dots,U_n) \subseteq U$ of suitable open subsets is associated 
a cochain map $\bigotimes_{i=1}^n\FFF(U_i)\to \FFF(U)$ that combines observables
in the small opens $U_i$ to an observable in the big open $U$. 
From a mathematical perspective, prefactorization algebras 
are algebras over a colored dg-operad $\PP_M$ whose objects are suitable opens 
$U\subseteq M$ and whose operations are mutually disjoint families of inclusions $(U_1,\dots,U_n) \subseteq U$. 
Depending on the flavor of QFT that one intends to describe, there
may be additional axioms. Most notably, in a topological QFT,
one demands that the structure map $\FFF(U)\SimTo \FFF(U^\prime)$
is a quasi-isomorphism for every isotopy equivalence $U\subseteq U^\prime$,
which formalizes the intuition that observables in a topological field
theory depend only on the `shape', but not on the `size', of the subset $U\subseteq M$.
Prefactorization algebras $\FFF$ with this property
are called locally constant and they are related to 
factorization homology \cite{AyalaFrancis,CFM}.
\sk

In the discussion above, we have intentionally kept vague the term `suitable opens' $U\subseteq M$ 
for the objects of the prefactorization operad $\PP_M$ in order to 
accommodate for the different choices which appear in the literature. 
In the books of Costello and Gwilliam \cite{CostelloGwilliam,CostelloGwilliam2}
and follow-up works such as \cite{PFAKoszul}, the default choice is to 
consider \textit{all} open subsets $U\subseteq M$. In contrast to this,
Lurie considers only topological open disks $D\subseteq M$, i.e.\ open subsets which 
are homeomorphic to $\bbR^m$ with $m=\dim(M)$, see \cite[Definition 5.4.5.6 and Remark 5.4.5.7]{HigherAlgebra}.
The latter choice is consistent with the one of Ayala and Francis 
by noting that what they define in \cite[Definition 2.9]{AyalaFrancis}
is \textit{not} the operad $\PP_M$ but rather its monoidal envelope $\PP_M^\otimes$,
which is a universally constructed symmetric monoidal category whose objects are tuples 
of the objects of $\PP_M$. (In this case these are finite disjoint unions of disks.) 
In our paper we follow \cite{HigherAlgebra,AyalaFrancis}
and focus on disks $D\subseteq M$ rather than general open subsets $U\subseteq M$.
\sk

Cochain complexes appear in 
prefactorization algebras as a manifestation of the BV formalism
from theoretical physics. They are necessary to capture the rich 
and interesting homological phenomena that arise from gauge 
symmetries and the complicated dynamical behavior of QFTs.
The world of cochain complexes is naturally $\infty$-categorical,
with higher morphisms given by (higher) cochain homotopies and
equivalences given by quasi-isomorphisms, which leads to conceptual
and also practical difficulties when working with prefactorization algebras.
The main complication is that cochain complexes may contain redundancies,
e.g.\ one can replace the complex of observables $\FFF(U)$ with a much bigger quasi-isomorphic 
complex (in physics terminology, this corresponds to introducing `auxiliary fields'),
hence it is difficult to give a concrete interpretation of elements in $\FFF(U)$.
One way to circumvent such issues is to take the cohomology $\H\FFF$ of the prefactorization
algebra, which produces a prefactorization algebra that takes values in graded vector spaces (i.e.\ 
cochain complexes with trivial differential). Unfortunately, this construction in general 
forgets/truncates some of the structure of the original prefactorization algebra $\FFF$,
except in very special cases where $\FFF$ is formal. 
Using more sophisticated technology from operad theory, there are ways to 
recover the entire structure of $\FFF$ at the level of the cohomology $\H\FFF$.
The relevant concept is the so-called \textit{homotopy transfer theorem} (see e.g.\ \cite{LodayVallette}), 
or more specifically the \textit{minimal model} construction, which allows one to transfer the prefactorization 
algebra structure on $\FFF$ to an $\infty$-prefactorization algebra structure on $\H\FFF$, such that
there exists an equivalence $\H\FFF \stackrel{\sim}{\rightsquigarrow} \FFF$. 
\sk

The main goal of this paper is to describe the first, potentially non-trivial, 
higher structure of the cohomology $\infty$-prefactorization algebra $\H\FFF$,
and to illustrate this structure by simple examples. Abstractly, it is given by the
\textit{universal first-order Massey product} \cite{UniversalMassey1,UniversalMassey2}.
It is realized by a cohomology class, constructed from any choice of minimal model, 
that describes the first-order obstruction of $\FFF$ being formal \cite{Dimitrova}.
The advantage of this cohomology class, in contrast to a minimal model, is that it is insensitive to the 
choice of minimal model, hence it does not suffer from any redundancy or dependence of
auxiliary choices. We would like to mention that there also exist successive higher obstruction classes 
\cite{Dimitrova}, or in other words higher-order Massey products \cite{UniversalMassey3},
which encode higher-order obstructions to $\FFF$ being formal. In the present paper,
we restrict our attention to the simplest case of universal \textit{first-order} Massey products,
but we hope to come back to their higher-order variants in a future work.
\sk

The main mathematical tools that we use for formulating and proving our results
are from homotopical algebra and operad theory, see e.g.\ the influential works \cite{Hinich1,Hinich2}
and the comprehensive monograph \cite{LodayVallette}. Similar techniques have been used previously
in the context of prefactorization algebras. For instance, Carmona, Flores and Muro \cite{CFM} 
describe the abstract homotopy theory of prefactorization algebras and factorization algebras, 
which are variants satisfying a descent condition, 
using model category theory and Bousfield localizations. 
A more concrete and computational approach, based on Koszul duality of dg-operads, 
was studied recently by Idrissi and Rabinovich in \cite{PFAKoszul}. Their main
result is a proof that a certain variant of the prefactorization operad (whose objects are
all open subsets and not only disks) is Koszul,
which implies that there exists a quite concrete model for homotopy-coherent
prefactorization algebras, a homotopy transfer theorem, and hence a concept of 
minimal models in this case.
\sk

The outline of the remainder of our paper is as follows. In Section \ref{sec:prelim},
we recall the relevant background on operadic homological algebra, following mainly
the presentation of Loday and Vallette \cite{LodayVallette}, but slightly generalizing
their constructions to the case of colored dg-operads. More specifically,
we recall the operadic bar-cobar adjunction, which allows us to determine
a semi-free resolution $\PP_{\infty}\SimTo\PP$ of any augmented colored dg-operad $\PP$,
and hence a concept of homotopy-coherent $\PP$-algebras, also known as $\PP_{\infty}$-algebras.
With these methods we also obtain a homotopy transfer theorem, as well as a minimal
model construction. We conclude this section
with an explicit and computationally accessible description of the cohomology class 
that describes the universal first-order
Massey product, slightly generalizing the constructions in 
\cite{Dimitrova} and \cite{UniversalMassey1,UniversalMassey2} to the case of colored operads.
\sk

In Section \ref{sec:PFA}, we apply these homological techniques to prefactorization algebras
on a manifold $M$, thereby obtaining an explicit description of the minimal model
and the universal first-order Massey product of any prefactorization algebra $\FFF$.
In the special case of a locally constant prefactorization algebra $\FFF$ on $M=\bbR^m$, which
as explained above describes a topological QFT on the Cartesian space, we prove some non-trivial 
results that provide a simplified model for the cocycle that determines
the universal first-order Massey product, see in particular Proposition \ref{prop: improved Masseys}.
In Subsection \ref{subsec:2dinvariant} we determine a very simple and computationally
accessible invariant for a locally constant prefactorization algebra on the $2$-dimensional
Cartesian space $\bbR^2$, which provides sufficient conditions for proving
non-triviality of the universal first-order Massey product. In contrast to  
the entire cohomology class that defines the universal first-order Massey product, 
our invariant has pleasant algebraic properties, namely it is given by a degree $-1$
Poisson bracket, i.e.\ a $\mathbb{P}_2$-algebra structure, see Theorem \ref{theo:LPoisson}. This allows us 
to make contact with the result of Lurie \cite[Theorem 5.4.5.9]{HigherAlgebra}, 
see also \cite{AyalaFrancis} and \cite{CFM}, that locally constant 
prefactorization algebras on $\bbR^m$ are equivalent to $\mathbb{E}_m$-algebras.
(See Remark \ref{rem:PFAEmPm} for further explanations and comments on this point.)
We would like to emphasize that our Poisson bracket invariant has the practical advantage 
that it is described very explicitly, hence it can be computed in examples.
A more physical approach towards such higher structures, in the context of supersymmetric
QFTs, has previously appeared in \cite{Beem}.
\sk

In Section \ref{sec:examples}, we illustrate and apply our results 
to the factorization envelope $\FFF = \mathfrak{U} \g^{\bbR^m}$ on the 
$m$-dimensional Cartesian space $\bbR^m$, where $\g^{\bbR^m} = \g\otimes \Omega_{\bbR^m}^{\bullet}$
denotes the local dg-Lie algebra determined by an ordinary Lie algebra $\g$
and the sheaf of de Rham complexes $\Omega_{\bbR^m}^{\bullet}$ on $\bbR^m$.
In $m=1$ dimensions, we prove that $\mathfrak{U} \g^{\bbR}$ is formal, i.e.\
there do not exist higher structures. We also show by an explicit computation,
using homological perturbation theory and homotopy transfer, that the cohomology 
of $\mathfrak{U} \g^{\bbR}$ is equivalent to the associative and unital algebra $\big(\Sym(\g),\star,1\big)$ 
with multiplication $\star$ given by the Gutt star-product \cite{Gutt}. This provides
an alternative proof for the result in \cite[Proposition 3.4.1]{CostelloGwilliam}.
In $m\geq 3$ dimensions, we prove that the cohomology of  $\mathfrak{U} \g^{\bbR^m}$ 
is simply given by the associative, unital and commutative algebra $\Sym(\g[1-m])$
and that the universal first-order Massey product is trivial.
We expect that in this case there are non-trivial higher obstruction classes 
\cite{Dimitrova}, i.e.\ higher-order Massey products \cite{UniversalMassey3},
but we do not attempt to describe these in our work. The most interesting case for us is $m=2$
dimensions, where we prove that the cohomology of $\mathfrak{U} \g^{\bbR^2}$ is
the associative, unital and commutative algebra $\Sym(\g[-1])$ and that the 
universal first-order Massey product is non-trivial. This non-triviality follows
from a detailed investigation of the simple $2$-dimensional Poisson bracket 
invariant from Subsection \ref{subsec:2dinvariant}. Hence, with our methods
we are able to show that the $2$-dimensional factorization envelope $\mathfrak{U} \g^{\bbR^2}$
is a non-formal prefactorization algebra, and we are able to compute explicitly
to first order the higher structures that cause this non-formality.
\sk

In Section \ref{sec:CS} we study linear Chern-Simons theory $\FFF_{\mathrm{CS}}^{}$ 
with structure group $G=\bbR$, 
another simple example of a locally constant prefactorization algebra. 
When defined on $M=\bbR^3$, we find similarly to
the case of factorization envelopes in Section \ref{sec:examples} that the universal first-order 
Massey product is trivial, but we again expect that there are non-trivial higher obstruction classes,
i.e.\ higher-order Massey products. In order to exhibit a non-trivial universal first-order Massey product,
we consider the compactification of linear Chern-Simons theory on $M = \bbR^2\times \bbS^1$,
which we regard as a $2$-dimensional prefactorization algebra along $\bbR^2$. We show that
the cohomology of $\FFF_{\mathrm{CS}}^{}$ is the associative, unital and commutative algebra
$\Sym\big(\H^\bullet_{\dR}(\bbS^1)\big)$, with $\H^\bullet_{\dR}(\bbS^1)$ the de Rham cohomology of the circle $\bbS^1$,
and that the universal first-order Massey product is non-trivial. 
For the latter conclusion we again compute explicitly 
the $2$-dimensional Poisson bracket invariant from Subsection \ref{subsec:2dinvariant}.


\section{\label{sec:prelim}Background on operadic homological algebra}
In this section we recall some relevant concepts and homological
tools from operad theory that are needed for this work. This also allows us to fix our notations 
and conventions. We refer the reader to \cite{LodayVallette} for further details.

\subsection{Cochain complexes} 
Let us fix a field $\bbK$ of characteristic $0$. We work with
cohomological degree conventions and denote by $\Ch$
the category of cochain complexes of $\bbK$-vector spaces. 
This category is closed symmetric monoidal with respect to the 
usual structures that we shall briefly recall. The tensor product $V\otimes W\in\Ch$
of two cochain complexes $V$ and $W$ is the cochain complex defined by
\begin{subequations}
\begin{flalign}
(V\otimes W)^i\,:=\,\bigoplus_{j\in\bbZ}\Big( V^j\otimes W^{i-j}\Big)\quad,
\end{flalign}
for all $i\in\bbZ$, and the differential 
\begin{flalign}
\dd_{V\otimes W}^{}(v\otimes w) \,:= \,
(\dd_V^{} v)\otimes w + (-1)^{\vert v\vert} \, v\otimes (\dd_W^{} w)\quad,
\end{flalign} 
\end{subequations}
for all homogeneous $v\in V$ and $w\in W$, 
where $\vert v\vert\in\bbZ$ denotes the cohomological degree.
The monoidal unit is $\bbK\in\Ch$, regarded as a cochain complex concentrated 
in degree $0$ and endowed with the trivial differential. The symmetric braiding
is given by the Koszul sign rule
\begin{flalign}\label{eqn:braiding}
\tau\,:\, V \otimes W\,\longrightarrow \, W\otimes V~,~~v\otimes w\,\longmapsto\,(-1)^{\vert v\vert\,\vert w\vert}\,w\otimes v
\end{flalign}
on homogeneous elements $v\in V$ and $w\in W$. The internal hom $[V,W]\in\Ch$
between two cochain complexes $V$ and $W$ is the cochain complex defined by
\begin{subequations}\label{eqn:Chinternalhom}
\begin{flalign}
[V,W]^i\,:=\,\prod_{j\in\bbZ} \Hom_\bbK\big(V^j,W^{j+i}\big)\quad,
\end{flalign}
for all $i\in\bbZ$, where $\Hom_\bbK$ denotes the vector space of linear maps,
and the differential
\begin{flalign}
\partial L\,:=\,\dd_W^{}\, L - (-1)^{\vert L\vert}\,L\, \dd_V^{}\quad,
\end{flalign}
\end{subequations}
for all homogeneous $L\in[V,W]$. 
\sk

Given any integer $p\in\bbZ$ and cochain complex $V\in\Ch$,
our convention for the $p$-shifted cochain complex $V[p]\in\Ch$
is $V[p]^i := V^{i+p}$, for all $i\in\bbZ$, and $\dd_{V[p]}^{}:= (-1)^{p}\,\dd_V^{}$.

\subsection{Operads and cooperads} 
Let us fix a non-empty set $C\in\Set$, which 
will play the role of the set of colors (or objects) for our operads. 
We denote by $\Sigma_C$ the groupoid whose objects are (possibly empty) tuples
$\und{c}:=(c_1,\dots,c_n)$ of elements in $C$ and whose morphisms
are right permutations $\und{c}= (c_1,\dots,c_n) \to \und{c}\sigma:= (c_{\sigma(1)},\dots,c_{\sigma(n)})$,
where $\sigma\in \Sigma_n$ is an element of the permutation group on $n$ letters.
The category of symmetric sequences is defined as the functor category
\begin{flalign}
\SymSeq_C\,:=\, \Ch^{\Sigma_C\times C}\quad.
\end{flalign}
Explicitly, this means that an object is a functor $X: \Sigma_C\times C\to \Ch$
assigning to each $(\und{c},c) = ((c_1,\dots,c_n),c)$ a cochain complex $X\big(\substack{c\\\und{c}}\big)\in\Ch$
and to each permutation $(\und{c},c) \to (\und{c}\sigma,c)$ a cochain map 
$X\big(\substack{c\\\und{c}}\big)\to X\big(\substack{c\\\und{c}\sigma}\big)$,
and that a morphism $X\to Y$ is a family of cochain maps $X\big(\substack{c\\\und{c}}\big)\to
Y\big(\substack{c\\\und{c}}\big)$, for all $(\und{c},c)$, that is equivariant 
with respect to permutations.
The category $\SymSeq_C$ is monoidal with respect to the circle-product $X\circ Y\in\SymSeq_C$,
which is defined object-wise by the coend formula
\begin{flalign}\label{eqn:circle}
(X\circ Y)\big(\substack{c\\ \und{c}}\big)\,:=\, 
\int^{\und{a}\in\Sigma_C} \int^{(\und{b}_1,\dots,\und{b}_k)\in\Sigma_C^k}\Sigma_C\big((\und{b}_1,\dots,\und{b}_k),\und{c}\big)\otimes X\big(\substack{c\\ \und{a}}\big)\otimes Y\big(\substack{a_1\\ \und{b}_1}\big)\otimes\cdots\otimes Y\big(\substack{a_k\\\und{b}_k}\big)\quad,
\end{flalign}
for all $(\und{c},c)$,
where $k$ denotes the length of the tuple $\und{a} = (a_1,\dots,a_k)$
and $\Sigma_C\big((\und{b}_1,\dots,\und{b}_k),\und{c}\big)\in\Set$ 
the Hom-set in $\Sigma_C$.  The $\Set$-tensoring on cochain complexes
is defined by $\otimes:\Set\times\Ch\to\Ch\,,~S\times V\mapsto S\otimes V := \bigoplus_{s\in S}V$.
The monoidal unit $I_\circ\in\SymSeq_C$ for the circle-product is
\begin{flalign}
I_\circ\big(\substack{c\\ \und{c}}\big)\,:=\,\Sigma_C(\und{c},c)\otimes \bbK\,=\,\begin{cases}
\bbK&~,~~\text{if }\und{c}=c~~,\\
0&~,~~\text{else }~~,
\end{cases}
\end{flalign}
for all $(\und{c},c)$.
\begin{defi}
\begin{itemize}
\item[(a)] A ($C$-colored symmetric dg-)operad is an associative and unital monoid
$\PP = \big(\PP,\gamma:\PP\circ\PP\to\PP,\oone:I_\circ\to\PP\big)$ 
in the monoidal category $(\SymSeq_C,\circ,I_\circ)$.
We denote the category of operads by $\Op_C$. An augmented operad is
a pair $(\PP,\epsilon)$ consisting of an operad $\PP$
and an operad morphism $\epsilon : \PP\to I_\circ$. We denote the category
of augmented operads and augmentation preserving morphisms by $\Op_C^{\aug}$.

\item[(b)] A ($C$-colored symmetric dg-)cooperad is a coassociative and counital
comonoid $\CC=\big(\CC,\Delta:\CC\to\CC\circ\CC,\epsilon:\CC\to I_\circ\big)$ 
in the monoidal category $(\SymSeq_C,\circ,I_\circ)$.
We denote the category of cooperads by $\CoOp_C$. A coaugmented
cooperad is a pair $(\CC,\oone)$ consisting of a cooperad $\CC$
and a cooperad morphism $\oone : I_\circ\to \CC$. We denote the category
of coaugmented cooperads and coaugmentation preserving morphisms by
$\CoOp_{C}^{\mathrm{coaug}}$. The full subcategory of conilpotent
cooperads is denoted by $\CoOp_C^{\conil}\subseteq \CoOp_{C}^{\mathrm{coaug}}$, 
see e.g.\ \cite[Section 5.8]{LodayVallette} for the definition of a conilpotent cooperad.
\end{itemize}
\end{defi}

We briefly recall the construction of free augmented operads and of cofree conilpotent cooperads
in terms of tree modules, see \cite[Section 5]{LodayVallette} for the details. 
Given any $X\in \SymSeq_C$, we define inductively a family of $\SymSeq_C$-objects by
\begin{subequations}\label{eqn:treemodule}
\begin{flalign}
\bbT_0 X\,:=\, I_\circ\quad,\qquad \bbT_{n}X\,:=\, I_\circ\oplus \big(X\circ \bbT_{n-1}X\big)\quad,
\end{flalign}
for all $n\geq 1$, and a family of $\SymSeq_C$-morphism $\iota_n : \bbT_{n-1}X\to \bbT_{n}X$
by the inclusion map $\iota_1:\bbT_0 X= I_\circ\to I_\circ\oplus X=\bbT_1 X$ and
\begin{flalign}
\iota_n\,:=\id\oplus \big(\id\circ \iota_{n-1}\big)\,:\, I_\circ\oplus \big(X\circ \bbT_{n-2}X\big)\,\longrightarrow\,
I_\circ\oplus \big(X\circ \bbT_{n-1}X\big)  \quad, 
\end{flalign}
for all $n\geq 2$. The tree module is then defined as the colimit
\begin{flalign}
\bbT X\,:=\, \mathrm{colim}_{n\geq 0}^{~} \bbT_n X\,\in\SymSeq_C
\end{flalign}
\end{subequations}
in the category of symmetric sequences. This admits a graphical 
interpretation in terms of $C$-colored rooted trees whose vertices are ordered and decorated
by elements of $X$ that match the color pattern given by the tree.
\sk

A model for the free augmented operad associated with $X\in\SymSeq_C$ is given by
\begin{flalign}\label{eqn:freeOp}
\bbT X \,:=\, \big(\bbT X,\gamma,\oone,\epsilon\big)\,\in\,\Op_C^{\aug}
\end{flalign}
with composition $\gamma:\bbT X\circ \bbT X\to \bbT X$ given by grafting of trees, unit 
$\oone : I_\circ\to \bbT X$ the inclusion of $I_\circ  = \bbT_0 X$ into the colimit,
and augmentation $\epsilon:\bbT X\to I_\circ$ induced by the projections
$\pr_1 : \bbT_n X =  I_\circ\oplus \big(X\circ \bbT_{n-1}X\big) \to I_\circ$.
The universal property of the free augmented operad functor
$\bbT: \SymSeq_C\to \Op_C^{\aug}$ is that it is \textit{left} adjoint to
the forgetful functor $\mathbb{U} : \Op_C^{\aug}\to\SymSeq_C\,,~(\PP,\gamma,\oone,\epsilon)\mapsto
\overline{\PP}:= \ker\big(\epsilon:\PP\to I_\circ\big)$ that assigns the augmentation ideal.
\sk 

A model for the cofree conilpotent cooperad associated with $X\in\SymSeq_C$ is given by
\begin{flalign}\label{eqn:cofreeCoOp}
\bbT^\cc X\,:=\,\big(\bbT X,\Delta,\epsilon,\oone\big)\, \in\, \CoOp_C^{\conil}
\end{flalign}
with decomposition $\Delta : \bbT X\to\bbT X\circ \bbT X$ given by degrafting
of trees, counit $\epsilon : \bbT X\to I_\circ$ induced by the projections
$\pr_1 : \bbT_n X =  I_\circ\oplus \big(X\circ \bbT_{n-1}X\big) \to I_\circ$,
and coaugmentation $\oone : I_\circ\to \bbT X$ the inclusion of $I_\circ  = \bbT_0 X$ into the colimit.
The universal property of the cofree conilpotent cooperad functor
$\bbT^\cc: \SymSeq_C\to \CoOp_C^{\conil}$ is that it is \textit{right} adjoint to
the forgetful functor $\mathbb{U}^\cc : \CoOp_C^{\conil}\to\SymSeq_C\,,~(\CC,\Delta,\oone,\epsilon)\mapsto
\overline{\CC}:= \mathrm{coker}(\oone: I_\circ \to \CC)$ that assigns the coaugmentation coideal.

\subsection{Operadic bar-cobar adjunction} 
The operadic bar-cobar adjunction
\begin{flalign}\label{eqn:barcobar}
\Omega\,:\,\xymatrix@C=3.5em{
\CoOp_C^{\conil} \ar@<1.25ex>[r]_-{\perp}~&~\ar@<1.25ex>[l] \Op_C^{\aug}
}\,:\,\mathsf{B}
\end{flalign}
introduces a fruitful interplay between
augmented operads and conilpotent cooperads.
Its relevance in the context of our paper is that it allows us to construct a
semi-free resolution 
\begin{flalign}
\PP_\infty\,:=\,\Omega\mathsf{B}\PP\stackrel{\sim}{\longrightarrow}\PP
\end{flalign}
of an augmented operad $\PP\in\Op_C^{\aug}$, which provides a concept of homotopy-coherent
$\PP$-algebras (in terms of ordinary $\PP_\infty=\Omega\mathsf{B}\PP$-algebras) 
that is stable under homotopy transfer. 
A detailed description of the bar-cobar adjunction is spelled out in \cite[Section 6.5]{LodayVallette},
but we will briefly sketch the relevant steps in order to fix our notations.
\sk

The bar construction $\mathsf{B}$ assigns to an
augmented operad $\PP =(\PP,\gamma,\oone,\epsilon)\in \Op_C^{\aug}$
the semi-cofree conilpotent cooperad
\begin{flalign}\label{eqn:barconstruction}
\mathsf{B}\PP\,:=\, \bbT^\cc \overline{\PP}[1]_\gamma\,:=\,\big(\bbT^\cc\overline{\PP}[1], \dd_{\bbT^\cc \overline{\PP}[1]} 
+ \dd_\gamma\big)\,\in\,\CoOp_C^{\conil}
\end{flalign}
that is given by forming the cofree conilpotent cooperad \eqref{eqn:cofreeCoOp}
over the $1$-shifted augmentation ideal $\overline{\PP}[1] = \ker(\epsilon)[1]\in\SymSeq_C$
and then deforming its differential by a coderivation $\dd_\gamma$ that is constructed
out of the operadic composition map $\gamma$ of $\PP$.
\sk

The cobar construction $\Omega$ assigns to a conilpotent cooperad
$\CC = (\CC,\Delta,\epsilon,\oone)\in\CoOp_C^{\conil}$ the semi-free
augmented operad
\begin{flalign}\label{eqn:cobarconstruction}
\Omega\CC\,:=\, \bbT\overline{\CC}[-1]_\Delta\,:=\,\big(\bbT\overline{\CC}[-1] , \dd_{\bbT\overline{\CC}[-1]}+ \dd_{\Delta}\big)\,\in\,\Op_C^{\aug}
\end{flalign}
that is given by forming the free augmented operad \eqref{eqn:freeOp}
over the $(-1)$-shifted coaugmentation coideal $\overline{\CC}[-1] = \mathrm{coker}(\oone)[-1]\in\SymSeq_C$
and then deforming its differential by a derivation $\dd_\Delta$ that is constructed
out of the cooperadic decomposition map $\Delta$ of $\CC$.
\sk

One proves that \eqref{eqn:barcobar} is indeed an adjunction 
by using an intermediate step
\begin{flalign}\label{eqn:barcobarhom}
\Hom_{\Op_C^{\aug}}^{}\big(\Omega\CC, \PP\big)\,\cong\,\mathrm{Tw}\big(\CC,\PP\big)
\,\cong\, \Hom_{\CoOp_C^{\conil}}^{}\big(\CC,\mathsf{B}\PP\big)
\end{flalign}
that relates morphisms in these categories to so-called operadic twisting morphisms.
The latter are degree $1$ elements $\alpha\in[\CC,\PP]^1$ in the cochain complex
\begin{flalign}
[\CC,\PP]\,:=\, \int_{(\und{c},c)\in \Sigma_C\times C}\,\big[\CC\big(\substack{c\\ \und{c}}\big),
\PP\big(\substack{c\\ \und{c}}\big)\big]\,\in\,\Ch
\end{flalign}
of natural transformations between $\CC,\PP\in\SymSeq_C$, constructed as usual in terms of 
an end over the internal hom complexes \eqref{eqn:Chinternalhom}, that annihilate
the (co)augmentations, i.e.\  $\alpha\, \oone_\CC =0$ and $\epsilon_\PP\,\alpha=0$,
and satisfy the Maurer-Cartan equation
\begin{subequations}\label{eqn:MCtwisting}
\begin{flalign}
\partial \alpha + \alpha\star\alpha \,=\,0
\end{flalign}
with respect to the convolution product
\begin{flalign}
\alpha\star\alpha\,:\,\xymatrix@C=3em{
\CC \ar[r]^-{\Delta_{(1)}}~&~\CC\circ_{(1)}\CC \ar[r]^-{\alpha\circ_{(1)}\alpha}~&~ \PP\circ_{(1)}\PP \ar[r]^-{\gamma_{(1)}}~&~\PP
}\quad.
\end{flalign}
\end{subequations}
The subscripts ${}_{(1)}$ denote the infinitesimal composites from  
\cite[Section 6.1]{LodayVallette}, which are linearizations of the 
circle-product \eqref{eqn:circle} in the right entry.
The identifications in \eqref{eqn:barcobarhom} then work as follows:
By semi-freeness of $\Omega\CC$, the datum of an $\Op_C^{\aug}$-morphism 
$\Omega\CC\to \PP$ is equivalent to a degree $0$ map $\overline{\CC}[-1]\to \overline{\PP}$,
or equivalently a degree $1$ map $\CC\to \overline{\CC}\to \overline{\PP}\to \PP$
annihilating the (co)augmentations,
that due to the definition of the differential on \eqref{eqn:cobarconstruction}
satisfies the Maurer-Cartan equation \eqref{eqn:MCtwisting}.
Similarly, by semi-cofreeness of $\mathsf{B}\PP$, the datum of a $\CoOp_{C}^{\conil}$-morphism
$\CC\to\mathsf{B}\PP$ is equivalent to a degree $0$ map $\overline{\CC}\to \overline{\PP}[1]$,
or equivalently a degree $1$ map $\CC\to \overline{\CC}\to \overline{\PP}\to \PP$
annihilating the (co)augmentations,
that due to the definition of the differential on \eqref{eqn:barconstruction}
satisfies the Maurer-Cartan equation \eqref{eqn:MCtwisting}.
\begin{rem}\label{rem:nonaugmented}
The bijections in \eqref{eqn:barcobarhom} can be generalized to the case where the 
operads are not augmented. Given any non-augmented operad $\PP=(\PP,\gamma,\oone)\in \Op_C$,
one defines similarly to \eqref{eqn:barconstruction}, but without taking an augmentation ideal,
\begin{flalign}
\mathsf{B}^{\mathrm{na}}\PP\,:=\, 
\bbT^\cc \PP[1]_\gamma\,:=\,\big(\bbT^\cc\PP[1], \dd_{\bbT^\cc \PP[1]} 
+ \dd_\gamma\big)\,\in\,\CoOp_C^{\conil} \quad.
\end{flalign}
We further write $\Omega\CC\in \Op_C$ for the non-augmented
operad given by forgetting the augmentation of the cobar construction
in \eqref{eqn:cobarconstruction}. Then one has bijections
\begin{flalign}\label{eqn:barcobarhomna}
\Hom_{\Op_C}^{}\big(\Omega\CC, \PP\big)\,\cong\,\mathrm{Tw}^{\mathrm{na}}\big(\CC,\PP\big)
\,\cong\, \Hom_{\CoOp_C^{\conil}}^{}\big(\CC,\mathsf{B}^{\mathrm{na}}\PP\big)\quad,
\end{flalign}
where the relevant twisting morphisms in this case
are degree $1$ elements $\alpha\in[\CC,\PP]^1$ that annihilate the 
coaugmentation $\alpha\,\oone_{\CC}=0$, but of course not an augmentation since there is none,
and satisfy the Maurer-Cartan equation \eqref{eqn:MCtwisting}.
It is important to stress that, even though one has an adjunction
$\Omega\dashv \mathsf{B}^{\mathrm{na}}$ in the non-augmented case, 
its counit $\Omega\mathsf{B}^{\mathrm{na}}\PP \stackrel{\not\sim}{\longrightarrow}\PP$
does \textit{not} define a resolution of the non-augmented operad $\PP\in\Op_C$. 
This is inessential for the purpose of our paper since we are using
\eqref{eqn:barcobarhomna} only as a tool to relate the different equivalent models for 
homotopy-coherent algebras over an augmented operad. As a side-remark, resolutions of non-augmented
operads can be constructed from the curved bar-cobar adjunction \cite{Hirsh,LeGrignou}, but 
this is not needed in our paper.
\end{rem}

\subsection{Algebras and coalgebras} 
We have to introduce some further terminology before 
we can define the important concepts of (co)algebras over (co)operads.
A symmetric sequence
$N\in \SymSeq_C$ is called a $C$-colored object if
\begin{flalign}
N\big(\substack{c\\ \und{c}}\big) \,=\, 0\,\in\,\Ch \quad,
\end{flalign}
for all $(\und{c},c)$ with $\und{c}=(c_1,\dots,c_n)$ a tuple of length $n\geq 1$.
The full subcategory of $C$-colored objects is thus given by the functor category 
$\Ch^C \subseteq \SymSeq_C$, i.e.\ an object $N\in\Ch^C$ is simply a family of cochain complexes 
$\{N(c)= N\big(\substack{c\\ \varnothing}\big)\in\Ch\}_{c\in C}$.
From the definition of the circle-product \eqref{eqn:circle}, one immediately observes that
it restricts to a functor
\begin{subequations}
\begin{flalign}
\circ\,:\,\SymSeq_C\times\Ch^C~\longrightarrow~\Ch^C\quad,
\end{flalign}
which for $X\in\SymSeq_C$ and $N\in \Ch^C$ is given explicitly by
\begin{flalign}
(X\circ N)(c)\,=\,\int^{\und{a}\in\Sigma_C} X\big(\substack{c \\ \und{a}}\big)\otimes N(a_1)\otimes\cdots\otimes N(a_k)\quad,
\end{flalign}
\end{subequations}
for all $c\in C$. This implies that $(\Ch^C,\circ)$ is a left-module category over the monoidal category
$(\SymSeq_C,\circ,I_\circ)$.
\begin{defi}
\begin{itemize}
\item[(a)] Let $\PP = (\PP,\gamma,\oone)\in\Op_C$ be an operad. A $\PP$-algebra
is a pair $(A,\gamma_A)$ consisting of an object $A\in\Ch^C$ and a $\Ch^C$-morphism
$\gamma_A : \PP\circ A\to A$ (called left action) that satisfies the axioms of a left module, i.e.\
\begin{flalign}
\begin{gathered}
\xymatrix{
\ar[d]_-{\gamma\circ\id}\PP\circ\PP\circ A \ar[r]^-{\id\circ \gamma_A} ~&~\PP\circ A\ar[d]^-{\gamma_A} ~&~I_\circ\circ A \ar[r]^-{\oone\circ \id}\ar[dr]_-{\cong} ~&~\PP\circ A\ar[d]^-{\gamma_A}\\
\PP\circ A \ar[r]_-{\gamma_A}~&~A ~&~ ~&~A
}
\end{gathered}\qquad.
\end{flalign}
A morphism $f : (A,\gamma_A)\to (B,\gamma_B)$ of $\PP$-algebras is a $\Ch^C$-morphism
$f:A\to B$ that preserves the left actions, i.e.\ $\gamma_B\,(\id\circ f) = f\,\gamma_A$.
We denote the category of $\PP$-algebras by $\Alg_\PP$.
If $\PP = (\PP,\gamma,\oone,\epsilon)\in\Op_C^{\aug}$ is an augmented operad,
the category of $\PP$-algebras is defined as the category $\Alg_{\PP}$
of algebras over the underlying operad $(\PP,\gamma,\oone)$. 

\item[(b)]  Let $\CC = (\CC,\Delta,\epsilon)\in\CoOp_C$ be a cooperad. A $\CC$-coalgebra
is a pair $(D,\Delta_D)$ consisting of an object $D\in\Ch^C$ and a $\Ch^C$-morphism
$\Delta_D : D\to \CC\circ D$ (called left coaction) that satisfies the axioms of a left comodule, i.e.\
\begin{flalign}
\begin{gathered}
\xymatrix{
\ar[d]_-{\Delta_D}D \ar[r]^-{\Delta_D}~&~ \CC\circ D\ar[d]^-{\id\circ\Delta_D} ~&~ \ar[dr]_-{\cong}D\ar[r]^-{\Delta_D} ~&~\CC\circ D\ar[d]^-{\epsilon\circ \id}\\
\CC\circ D \ar[r]_-{\Delta\circ\id}~&~\CC\circ\CC\circ D ~&~ ~&~I_\circ \circ D
}
\end{gathered}\qquad.
\end{flalign}
A morphism $f : (D,\Delta_D)\to (E,\Delta_E)$ of $\CC$-coalgebras is a $\Ch^C$-morphism
$f:D\to E$ that preserves the left coactions, i.e.\ $(\id\circ f)\,\Delta_D = \Delta_E\,f$.
We denote the category of $\CC$-coalgebras by $\CoAlg_\CC$. If 
$\CC = (\CC,\Delta,\epsilon,\oone)\in\CoOp_C^{\conil/\mathrm{coaug}}$ 
is a conilpotent (or coaugmented) cooperad,
the category of $\CC$-coalgebras is defined as the category $\CoAlg_{\CC}$
of coalgebras over the underlying cooperad $(\CC,\Delta,\epsilon)$. 
\end{itemize}
\end{defi}
\begin{rem}\label{rem:Endoperad}
It is sometimes useful to work with an equivalent description of
$\PP$-algebras that uses the concept of endomorphism operads. For this we recall
that the monoidal category $(\SymSeq_\CC,\circ,I_\circ)$ is right closed,
i.e.\ there exists an internal hom functor $[-,-]_\circ : \SymSeq^{\op}_C\times\SymSeq_C\to\SymSeq_C$
such that
\begin{flalign}
\Hom_{\SymSeq_C}\big(X\circ Y,Z\big)\,\cong\, \Hom_{\SymSeq_C}\big(X,[Y,Z]_\circ\big)\quad,
\end{flalign}
for all $X,Y,Z\in\SymSeq_C$. When applied to a left action $\gamma_A : \PP\circ A\to A$,
this defines an $\Op_C$-morphism (denoted with abuse of notation by the same symbol)
\begin{flalign}
\gamma_A \,:\,\PP\,\longrightarrow\, \End_A \,:=\, [A,A]_\circ
\end{flalign}
to the endomorphism operad associated with $A\in\Ch^C$. The endomorphism operad 
$\End_A = (\End_A,\gamma,\oone)$
is given concretely by the symmetric sequence that is defined by
\begin{flalign}
\End_A\big(\substack{c\\ \und{c}}\big)\,=\, \big[A(c_1)\otimes\cdots\otimes A(c_n), A(c)\big]\,\in\,\Ch\quad,
\end{flalign}
for all $(\und{c},c)$, where on the right-hand side
$[-,-]$ denotes the internal hom \eqref{eqn:Chinternalhom} in $\Ch$,
with $\gamma:\End_A\circ \End_A\to \End_A$ defined by composing maps and 
$\oone:I_\circ \to \End_A$ assigning the identity maps.
\end{rem}

\subsection{Homotopy-coherent $\PP$-algebras} 
Let $\PP = (\PP,\gamma,\oone,\epsilon)\in\Op_C^{\aug}$
be an augmented operad. Using the bar-cobar adjunction \eqref{eqn:barcobar}, one obtains a semi-free resolution
$\PP_\infty:=\Omega\mathsf{B}\PP\SimTo\PP$ of $\PP$ that allows
one to introduce a concept of homotopy-coherent $\PP$-algebras.
\begin{defi}
A $\PP_\infty$-algebra is an algebra $(A,\gamma_A : \PP_\infty\circ A\to A)\in\Alg_{\PP_\infty}$ 
over the augmented operad $\PP_\infty:=\Omega\mathsf{B}\PP\in\Op_C^{\aug}$ given by the
semi-free resolution $\Omega\mathsf{B}\PP \SimTo \PP$.
A strict morphism $f : (A,\gamma_A)\to (B,\gamma_B)$ of $\PP_\infty$-algebras
is an ordinary $\PP_\infty$-algebra morphism, i.e.\ a $\Ch^C$-morphism
$f:A\to B$ satisfying $ \gamma_B\,(\id\circ f)= f\,\gamma_A$.
\end{defi}

Although conceptually very clear, this definition is impractical
for concrete computations because $\PP_\infty=\Omega\mathsf{B}\PP\in\Op_C^{\aug}$
is a complicated operad that is obtained from $\PP$ by taking twice a tree module,
see \eqref{eqn:barconstruction} and \eqref{eqn:cobarconstruction}, i.e.\ 
it consists of trees whose vertices are decorated by trees whose vertices are decorated by elements of $\PP$.
This can be simplified considerably by making use of the endomorphism operads from Remark \ref{rem:Endoperad}
and the variant of the bar-cobar adjunction from Remark \ref{rem:nonaugmented}. 
Indeed, the datum of a left action $\gamma_A : \Omega\mathsf{B}\PP\circ A\to A$
is equivalent to an $\Op_C$-morphism (denoted with abuse of notation 
by the same symbol) $\gamma_A : \Omega\mathsf{B}\PP\to\End_A$
to the endomorphism operad associated with $A$. Via bar-cobar, the latter can be expressed
in the following three equivalent ways
\begin{flalign}\label{eqn:RosettaStone}
\Hom_{\Op_C}\big(\Omega\mathsf{B}\PP,\End_A\big)\,\cong\,\mathrm{Tw}^{\mathrm{na}}\big(\mathsf{B}\PP,\End_A\big)\,\cong\,
\Hom_{\CoOp_C^{\conil}}\big(\mathsf{B}\PP,\mathsf{B}^{\mathrm{na}}\End_A\big)\quad.
\end{flalign}
The phenomenon of having available multiple useful descriptions
of $\PP_\infty$-algebra structures is known as the 
Rosetta Stone in \cite[Theorem 10.1.13]{LodayVallette}. The latter comes with one further
equivalent description that we will briefly recall. Given a $\PP_\infty$-algebra
structure that is described, say, by a twisting morphism $\alpha\in \mathrm{Tw}^{\mathrm{na}}\big(\mathsf{B}\PP,\End_A\big)$,
i.e.\ a degree $1$ element $\alpha\in [\mathsf{B}\PP,\End_A]^1$ that annihilates the coaugmentation
$\alpha\,\oone_{\mathsf{B}\PP} =0$ and satisfies the Maurer-Cartan equation \eqref{eqn:MCtwisting},
one can use that $\End_A =[A,A]_\circ$ is given by the internal hom in $\SymSeq_C$ 
in order to identify $\alpha$ with a degree $1$ element (denoted by the same symbol)
$\alpha\in [\mathsf{B}\PP\circ A,A]^1$. Interpreting $\mathsf{B}\PP\circ A$
as a cofree coalgebra over $\mathsf{B}\PP\in \CoOp_C^{\conil}$,
with left coaction $\Delta_{\mathsf{B}\PP} \circ \id : \mathsf{B}\PP\circ A \to
\mathsf{B}\PP\circ \mathsf{B}\PP\circ A$, one can extend the element
$(\epsilon_{\mathsf{B}\PP} \circ\id)\, \dd_{\mathsf{B}\PP\circ A} + \alpha \in [\mathsf{B}\PP\circ A,A]^1$
given by adding $\alpha$ to the differential on $\mathsf{B}\PP\circ A$ to a deformed
differential on $\mathsf{B}\PP\circ A$. (This differential squares to zero
as a consequence of the Maurer-Cartan equation, see \cite[Proposition 10.1.11]{LodayVallette}.)
Hence, the datum of a $\PP_\infty$-algebra structure on $A$ is equivalent
to a deformation of the cofree coalgebra $\mathsf{B}\PP\circ A\in\CoAlg_{\mathsf{B}\PP}$
to a semi-cofree one, which we denote by $(\mathsf{B}\PP\circ A)_\alpha\in\CoAlg_{\mathsf{B}\PP}$. 
The latter point of view is useful to introduce a weaker concept of morphisms between $\PP_\infty$-algebras.
\begin{defi}
Consider two $\PP_\infty$-algebras $(A,\gamma_A),(B,\gamma_B)\in \Alg_{\PP_\infty}$
and denote by $\alpha,\,\beta$ the twisting morphisms corresponding to the left actions $\gamma_A,\,\gamma_B$.
An $\infty$-morphism $\zeta : (A,\gamma_A)\rightsquigarrow (B,\gamma_B)$ between $\PP_\infty$-algebras
is an ordinary $\CoAlg_{\mathsf{B}\PP}$-morphism 
$\zeta: (\mathsf{B}\PP\circ A)_{\alpha}\to (\mathsf{B}\PP\circ B)_{\beta}$
between the associated semi-cofree coalgebras. An $\infty$-morphism is called
an $\infty$-quasi-isomorphism if the composite morphism 
\begin{flalign}
\xymatrix{
A \,\cong\,I_\circ \circ A \ar[r]^-{\oone_{\mathsf{B}\PP}\circ\id} ~&~(\mathsf{B}\PP\circ A)_\alpha 
\ar[r]^-{\zeta}~&~(\mathsf{B}\PP\circ B)_\beta\ar[r]^-{\epsilon_{\mathsf{B}\PP}\circ\id}~&~I_\circ\circ B\,\cong \, B
}
\end{flalign}
is a quasi-isomorphism in $\Ch^C$, i.e.\ an object-wise quasi-isomorphism of cochain complexes. 
\end{defi}

There exist other equivalent descriptions of $\infty$-morphisms, see \cite[Section 10.2]{LodayVallette}, 
but these will not be needed in our paper.

\subsection{\label{subsec:minmod}Homotopy transfer and minimal models} 
Consider any homotopy retract
\begin{equation}\label{eqn:homotopyretract}
\begin{tikzcd}
B \ar[r,shift right=-1ex,"i"] & \ar[l,shift right=-1ex,"p"] A \ar[loop,out=-30,in=30,distance=30,swap,"h"]
\end{tikzcd}
\end{equation}
in the category $\Ch^C$ of $C$-colored objects. This consists of two quasi-isomorphisms
$i: B\to A$ and $p: A\to B$ in $\Ch^C$, and a homotopy $h\in [A,A]^{-1}$, such
that $\partial h = i\,p-\id$.
\begin{theo}[Homotopy transfer theorem]\label{theo:HTT}
Let $\PP= (\PP,\gamma,\oone,\epsilon)\in \Op_C^{\aug}$ be an augmented operad.
Any $\PP_\infty$-algebra structure on $A\in\Ch^C$
can be transferred along the homotopy retract \eqref{eqn:homotopyretract} 
into a $\PP_\infty$-algebra structure on $B\in\Ch^C$
such that $i$ extends to an $\infty$-quasi-isomorphism.
\end{theo}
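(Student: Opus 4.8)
The plan is to recast the transfer problem in the coalgebraic language introduced just before the statement and then to solve it with the homological perturbation lemma. By the Rosetta Stone \eqref{eqn:RosettaStone}, the given $\PP_\infty$-algebra structure on $A$ is encoded by a twisting morphism $\alpha\in\mathrm{Tw}^{\mathrm{na}}(\mathsf{B}\PP,\End_A)$, equivalently by the semi-cofree $\mathsf{B}\PP$-coalgebra $(\mathsf{B}\PP\circ A)_\alpha\in\CoAlg_{\mathsf{B}\PP}$ whose differential is the cofree differential $\dd_{\mathsf{B}\PP\circ A}$ deformed by the coderivation $\delta_\alpha$ associated with $\alpha$. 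The goal is to produce a twisting morphism $\beta\in\mathrm{Tw}^{\mathrm{na}}(\mathsf{B}\PP,\End_B)$, i.e.\ a square-zero deformation of $\dd_{\mathsf{B}\PP\circ B}$, together with a $\CoAlg_{\mathsf{B}\PP}$-morphism $(\mathsf{B}\PP\circ B)_\beta\to (\mathsf{B}\PP\circ A)_\alpha$ whose linear part recovers $i$.

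First I would promote the homotopy retract \eqref{eqn:homotopyretract} on $A$ to a homotopy retract between the cofree coalgebras $\mathsf{B}\PP\circ B$ and $\mathsf{B}\PP\circ A$. The quasi-isomorphisms extend by functoriality of the cofree functor to the $\CoAlg_{\mathsf{B}\PP}$-morphisms $\mathsf{B}\PP\circ i$ and $\mathsf{B}\PP\circ p$, whereas the contracting homotopy $h$ is extended to a homotopy $H\in[\mathsf{B}\PP\circ A,\mathsf{B}\PP\circ A]^{-1}$ by the standard tensor trick, i.e.\ arity-wise on each tensor factor $A^{\otimes k}$ by the telescoping formula $\sum_{j=0}^{k-1}(ip)^{\otimes j}\otimes h\otimes\id^{\otimes(k-1-j)}$. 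Imposing the usual side conditions $pi=\id$, $hi=0$, $ph=0$, $hh=0$, which can always be arranged, streamlines this step.

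Next I would regard $\alpha$ as a perturbation of the differential: the coderivation $\delta_\alpha$ is a degree $1$ perturbation of $\dd_{\mathsf{B}\PP\circ A}$, and since $\alpha$ annihilates the coaugmentation $\oone_{\mathsf{B}\PP}$, it strictly decreases the conilpotent weight-filtration of $\mathsf{B}\PP$, so the composite $H\delta_\alpha$ is locally nilpotent and the geometric series $\sum_{n\geq 0}(H\delta_\alpha)^n$ converges pointwise. This is precisely the admissibility hypothesis of the homological perturbation lemma, whose application then produces a transferred perturbation on $\mathsf{B}\PP\circ B$, a new square-zero differential, and perturbed maps $I_\infty,P_\infty,H_\infty$ given by the usual closed formulas in $i,p,h$ and $\delta_\alpha$. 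Reading off the transferred differential on cogenerators defines the candidate twisting morphism $\beta$, and the perturbed inclusion $I_\infty\colon(\mathsf{B}\PP\circ B)_\beta\to(\mathsf{B}\PP\circ A)_\alpha$ is the candidate $\infty$-morphism extending $i$; as its linear part is $i$, which is a quasi-isomorphism, it is automatically an $\infty$-quasi-isomorphism.

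The hard part will be to check that the output of the perturbation lemma is compatible with the $\mathsf{B}\PP$-coalgebra structure, rather than being merely chain-level data: one must verify that the transferred differential is a \emph{coderivation} of $\mathsf{B}\PP\circ B$, so that it indeed arises from a twisting morphism $\beta$ and squares to zero as a Maurer--Cartan element, and that $I_\infty$ is a genuine morphism of $\mathsf{B}\PP$-coalgebras. I would establish this by exploiting cofreeness, since a map into, and a coderivation of, the cofree coalgebra $\mathsf{B}\PP\circ B$ are determined by their cogenerator components, so it suffices to match the perturbation-lemma formulas with the convolution and coderivation structure on those components. Equivalently, and perhaps more transparently, one can bypass the abstract perturbation lemma by writing $\beta$ as the explicit Van der Laan sum over trees in $\mathsf{B}\PP$, with vertices decorated by $\alpha$, internal edges by $h$, leaves by $i$ and the root by $p$, and verifying the Maurer--Cartan equation $\partial\beta+\beta\star\beta=0$ directly by a sign and grafting bookkeeping, conilpotency again ensuring that the relevant sums are finite.
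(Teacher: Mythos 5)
Your proposal is correct, but your primary route is not the one the paper takes: the paper does not prove the theorem itself, it defers to \cite[Theorem 10.3.1]{LodayVallette}, whose proof constructs a morphism of conilpotent cooperads $\Psi:\mathsf{B}^{\mathrm{na}}\End_A\to\mathsf{B}^{\mathrm{na}}\End_B$ directly from the retract data \eqref{eqn:homotopyretract} (cf.\ \cite[Proposition 10.3.2]{LodayVallette}) and defines the transferred structure as the composite $\Psi\,\widehat{\alpha}$ projected onto cogenerators --- i.e.\ exactly the Van der Laan tree formula \eqref{eqn:betatree} that you offer as your second, ``more transparent'' alternative. In that approach the Maurer--Cartan equation \eqref{eqn:MCtwisting} for $\beta$ and the compatibility with the coalgebra structure come for free, because $\Psi$ is a morphism of dg cooperads and the Rosetta Stone \eqref{eqn:RosettaStone} converts such morphisms into twisting morphisms; the combinatorial work is concentrated in checking that $\Psi$ commutes with the differentials. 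Your primary route, the homological perturbation lemma applied to the deformed cofree coalgebra $(\mathsf{B}\PP\circ A)_\alpha$, is a legitimate and well-documented alternative: it produces in one stroke, by closed formulas, the transferred differential, the $\infty$-quasi-isomorphism $I_\infty$ extending $i$, a projection extending $p$ and a new contracting homotopy, with convergence guaranteed by conilpotency exactly as you argue (and the linear part of $I_\infty$ is indeed $i$, since $\alpha$ annihilates the coaugmentation). Its cost is precisely the point you flag as the hard part: the bare perturbation lemma outputs only chain-level data, and one must separately prove that the perturbed differential is a coderivation of $\mathsf{B}\PP\circ B$ and that $I_\infty$ is a $\CoAlg_{\mathsf{B}\PP}$-morphism; verifying this on cogenerators via cofreeness works but is a genuine computation, not a formality. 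Two smaller points to attend to: the telescoping homotopy $\sum_{j}(ip)^{\otimes j}\otimes h\otimes\id^{\otimes(k-1-j)}$ is not $\Sigma_k$-equivariant and must be symmetrized before it descends to the coend defining the circle product \eqref{eqn:circle} (harmless since $\mathrm{char}\,\bbK=0$, and analogous to the symmetrization the paper later uses for $\Sym$); and the theorem assumes only a homotopy retract, so you should invoke the standard lemma that the side conditions $pi=\id$, $hi=0$, $ph=0$, $h^2=0$ can be arranged by modifying $h$ (and, if necessary, $i$ or $p$ up to homotopy) before the perturbation lemma applies.
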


The proof of the homotopy transfer theorem is constructive and rather explicit, 
see \cite[Theorem 10.3.1]{LodayVallette}, but it heavily uses the identifications
given by the Rosetta Stone \eqref{eqn:RosettaStone}. 
Since we are interested in explicit computations, we have to spell out 
concretely how the transferred $\PP_\infty$-algebra structure on $B$ can be computed
from the given $\PP_\infty$-algebra structure on $A$. For the applications
considered in this paper, it suffices to consider the case where 
$A$ is endowed with a strict $\PP_\infty$-algebra structure $\gamma_A : \PP\to \End_A$. 
Applying the bar construction we obtain a $\CoOp_C^{\conil}$-morphism 
$\widehat{\alpha} : \mathsf{B}\PP\to \mathsf{B}^{\mathrm{na}}\End_A$ that, via the Rosetta
Stone, gives an equivalent description of the given $\PP$-algebra structure $\gamma_A$. Explicitly,
when interpreted in terms of trees, $\widehat{\alpha}$ sends a tree that is decorated 
with operations in $\overline{\PP}[1]$ to the same tree, but now decorated with the corresponding 
operations in $\End_A[1]$ that are obtained from the left $\PP$-action $\gamma_A : \PP\to \End_A$. 
(See \eqref{eqn:alphatree} below for an illustration.)
The transferred $\PP_\infty$-algebra structure on $B\in\Ch^C$ is then given by
the composite $\CoOp_C^{\conil}$-morphism 
\begin{flalign}
\widehat{\beta}\,:\, 
\xymatrix{
\mathsf{B}\PP\ar[r]^-{\widehat{\alpha}}~&~\mathsf{B}^{\mathrm{na}}\End_A \ar[r]^-{\Psi} ~&~\mathsf{B}^{\mathrm{na}}\End_B
}\quad,
\end{flalign}
where $\Psi$ is determined from the homotopy retract \eqref{eqn:homotopyretract} by
the construction given in \cite[Proposition 10.3.2]{LodayVallette}.
Because $\mathsf{B}^{\mathrm{na}}\End_B\in\CoOp_C^{\conil}$ is semi-cofree, this morphism
is fully determined by projecting onto cogenerators
\begin{flalign}\label{eqn:betatransfer}
\Big(\beta\,:\, \xymatrix{
\mathsf{B}\PP\ar[r]^-{\widehat{\alpha}}~&~\mathsf{B}^{\mathrm{na}}\End_A \ar[r]^-{\Psi} ~&~\mathsf{B}^{\mathrm{na}}\End_B
\ar[r]^-{\pr}~&~\End_B[1]
}\Big)\,\in \,\big[\mathsf{B}\PP, \End_{B}[1]\big]^0\quad,
\end{flalign}
which equivalently defines a degree $1$ element $\beta\in [\mathsf{B}\PP, \End_{B}]^1$ that satisfies
$\beta\,\oone_{\mathsf{B}\PP}=0$ and, as a consequence of the Rosetta Stone, 
the Maurer-Cartan equation \eqref{eqn:MCtwisting}. This defines the transferred $\PP_\infty$-algebra
structure on $B\in \Ch^C$.
\sk

Let us illustrate how the map $\beta$ in \eqref{eqn:betatransfer} can be computed by using an example.
Recall that an element in $\mathsf{B}\PP$ is a ($C$-colored rooted) tree whose vertices are ordered and 
decorated by elements in $\overline{\PP}[1]$. The following picture gives an example
\begin{flalign}\label{eqn:tree}
t(\mu_1,\mu_2,\mu_3,\mu_4)~=~
\parbox{5cm}{\begin{tikzpicture}[cir/.style={circle,draw=black,inner sep=0pt,minimum size=2mm},
        poin/.style={rectangle, inner sep=2pt,minimum size=0mm},scale=0.8, every node/.style={scale=0.8}]
\node[poin] (i)   at (0,4) {};
\node[poin] (v1)  at (0,3) {$\mu_1$};
\node[poin] (v2)  at (-1.5,2) {$\mu_2$};
\node[poin] (v3)  at (1.5,2) {$\mu_3$};
\node[poin] (v4)  at (2.5,1) {$\mu_4$};
\node[poin] (o1)  at (-2.5,1) {};
\node[poin] (o2)  at (-1.5,1) {};
\node[poin] (o3)  at (-0.5,1) {};
\node[poin] (o4)  at (0.5,1) {};
\node[poin] (o5)  at (1.5,0) {};
\node[poin] (o6)  at (2.5,0) {};
\node[poin] (o7)  at (3.5,0) {};
\draw[thick] (i) -- (v1);
\draw[thick] (v1) -- (v2);
\draw[thick] (v1) -- (v3);
\draw[thick] (v3) -- (v4);
\draw[thick] (v2) -- (o1);
\draw[thick] (v2) -- (o2);
\draw[thick] (v2) -- (o3);
\draw[thick] (v3) -- (o4);
\draw[thick] (v4) -- (o5);
\draw[thick] (v4) -- (o6);
\draw[thick] (v4) -- (o7);
\end{tikzpicture}}\quad,
\end{flalign}
where for notational convenience we suppress the $C$-colors. (Recall that each $\mu_i$ 
has a tuple $(c_1,\dots,c_n)$ of input colors and an output color $c$.
For each tree the colors on the edges have to match.) The map $\widehat{\alpha}$ 
in \eqref{eqn:betatransfer} assigns to such tree the same tree,
but now decorated by the elements $\gamma_A(\mu_i)$ in $\End_A[1]$ that are obtained 
by applying the left $\PP$-action $\gamma_A : \PP\to\End_A$, i.e.\
\begin{flalign}\label{eqn:alphatree}
\widehat{\alpha}\big(t(\mu_1,\mu_2,\mu_3,\mu_4)\big)~=~
\parbox{5cm}{\begin{tikzpicture}[cir/.style={circle,draw=black,inner sep=0pt,minimum size=2mm},
        poin/.style={rectangle, inner sep=2pt,minimum size=0mm},scale=0.8, every node/.style={scale=0.8}]
\node[poin] (i)   at (0,4) {};
\node[poin] (v1)  at (0,3) {$\gamma_A(\mu_1)$};
\node[poin] (v2)  at (-1.5,2) {$\gamma_A(\mu_2)$};
\node[poin] (v3)  at (1.5,2) {$\gamma_A(\mu_3)$};
\node[poin] (v4)  at (3,1) {$\gamma_A(\mu_4)$};
\node[poin] (o1)  at (-2.5,1) {};
\node[poin] (o2)  at (-1.5,1) {};
\node[poin] (o3)  at (-0.5,1) {};
\node[poin] (o4)  at (0.5,1) {};
\node[poin] (o5)  at (2,0) {};
\node[poin] (o6)  at (3,0) {};
\node[poin] (o7)  at (4,0) {};
\draw[thick] (i) -- (v1);
\draw[thick] (v1) -- (v2);
\draw[thick] (v1) -- (v3);
\draw[thick] (v3) -- (v4);
\draw[thick] (v2) -- (o1);
\draw[thick] (v2) -- (o2);
\draw[thick] (v2) -- (o3);
\draw[thick] (v3) -- (o4);
\draw[thick] (v4) -- (o5);
\draw[thick] (v4) -- (o6);
\draw[thick] (v4) -- (o7);
\end{tikzpicture}}\quad.
\end{flalign}
The transferred $\PP_\infty$-algebra structure $\beta$ in \eqref{eqn:betatransfer} assigns to this tree
the element in $\End_B[1]$ that is obtained by composing the tree
\begin{subequations}\label{eqn:betatree}
\begin{flalign}
\beta(\mu_1,\mu_2,\mu_3,\mu_4)\,:=\, \beta\big(t(\mu_1,\mu_2,\mu_3,\mu_4)\big)~=~
\parbox{5cm}{\begin{tikzpicture}[cir/.style={circle,draw=black,inner sep=0pt,minimum size=2mm},
        poin/.style={rectangle, inner sep=2pt,minimum size=0mm},scale=0.8, every node/.style={scale=0.8},
		every edge quotes/.style = {auto, font=\footnotesize}]
\node[poin] (i)   at (0,4) {};
\node[poin] (v1)  at (0,3) {$\gamma_A(\mu_1)$};
\node[poin] (v2)  at (-1.5,2) {$\gamma_A(\mu_2)$};
\node[poin] (v3)  at (1.5,2) {$\gamma_A(\mu_3)$};
\node[poin] (v4)  at (3,1) {$\gamma_A(\mu_4)$};
\node[poin] (o1)  at (-2.5,1) {};
\node[poin] (o2)  at (-1.5,1) {};
\node[poin] (o3)  at (-0.5,1) {};
\node[poin] (o4)  at (0.5,1) {};
\node[poin] (o5)  at (2,0) {};
\node[poin] (o6)  at (3,0) {};
\node[poin] (o7)  at (4,0) {};
\draw[thick] (i)  edge["$p$"] (v1);
\draw[thick] (v1) edge["$h$"'] (v2);
\draw[thick] (v1) edge["$h$"] (v3);
\draw[thick] (v3) edge["$h$"] (v4);
\draw[thick] (v2) edge["$i$"] (o1);
\draw[thick] (v2) edge["$i$"] (o2);
\draw[thick] (v2) edge["$i$"] (o3);
\draw[thick] (v3) edge["$i$"] (o4);
\draw[thick] (v4) edge["$i$"] (o5);
\draw[thick] (v4) edge["$i$"] (o6);
\draw[thick] (v4) edge["$i$"] (o7);
\end{tikzpicture}}\quad
\end{flalign}
that is formed by decorating the edges of \eqref{eqn:alphatree} with the data of the 
homotopy retract \eqref{eqn:homotopyretract}.
(Note that incoming edges are decorated by $i$, the outgoing edge by $p$ and internal edges by $h$.)
Written in a more standard notation, the result of this composition reads as
\begin{flalign}
\beta(\mu_1,\mu_2,\mu_3,\mu_4)\,=\, p\,\gamma_A(\mu_1)\,
\big(h \gamma_A(\mu_2)\otimes \id\big)\,
\big(\id^{\otimes 3}\otimes h \gamma_A(\mu_3)\big)\,
\big(\id^{\otimes 4}\otimes h\gamma_A(\mu_4)\big)\,i^{\otimes 7}\quad,
\end{flalign}
\end{subequations}
where it is important to note that the order in which the terms are composed is dictated 
by the ordering of vertices of the tree $t(\mu_1,\mu_2,\mu_3,\mu_4)$.
\sk

Given any $C$-colored object $A\in\Ch^C$, we can take 
object-wise cohomology and define a new object $\H A =\{\H A(c)\in\Ch\}_{c\in C}\in \Ch^C$
that has a trivial differential $\dd_{\H A}=0$. Since we are working over a field $\bbK$ of characteristic
$0$, there exists a strong deformation retract
\begin{subequations}\label{eqn:deformationretract}
\begin{equation}\label{eqn:deformationretract1}
\begin{tikzcd}
\H A \ar[r,shift right=-1ex,"i"] & \ar[l,shift right=-1ex,"p"] A \ar[loop,out=-30,in=30,distance=30,swap,"h"]
\end{tikzcd}
\end{equation}
in the category $\Ch^C$, i.e.\ 
\begin{flalign}\label{eqn:deformationretract2}
\partial i\,=\,0~,~~ \partial p \,=\,0 ~,~~ p\, i\,=\,\id ~,~~
\partial h \,=\, i\, p-\id~,~~ h\,i\,=\,0~,~~ p\, h\,=\,0~,~~ h^2\,=\,0\quad.
\end{flalign}
\end{subequations}
The following result is an immediate consequence of the Homotopy Transfer Theorem \ref{theo:HTT}.
\begin{cor}\label{cor:minmod}
Let $\PP= (\PP,\gamma,\oone,\epsilon)\in \Op_C^{\aug}$ be an augmented operad.
Any $\PP_\infty$-algebra structure on $A\in\Ch^C$
can be transferred along the strong deformation retract \eqref{eqn:deformationretract} 
into a $\PP_\infty$-algebra structure on the cohomology $\H A\in\Ch^C$. The latter 
is called a minimal model for the given $\PP_\infty$-algebra $A$. Minimal models are
unique up to $\infty$-isomorphism.
\end{cor}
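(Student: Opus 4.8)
The plan is to read off existence directly from the Homotopy Transfer Theorem \ref{theo:HTT} and to reduce the uniqueness assertion to two standard facts about $\infty$-morphisms, whose colored versions are established by the same weight-filtration arguments as in the uncolored case \cite[Section~10.4]{LodayVallette}.

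First I would note that the strong deformation retract \eqref{eqn:deformationretract} is in particular a homotopy retract in the sense of \eqref{eqn:homotopyretract}: the relations $\partial i=0$, $\partial p=0$ and $pi=\id$ make $i$ and $p$ quasi-isomorphisms, while $\partial h=ip-\id$ is exactly the homotopy condition, and the additional side relations $hi=0$, $ph=0$, $h^2=0$ are merely extra constraints. Theorem \ref{theo:HTT} therefore transfers any $\PP_\infty$-algebra structure on $A$ to a $\PP_\infty$-algebra structure on $\H A$ (computed explicitly by the tree formula around \eqref{eqn:betatransfer}) such that $i$ extends to an $\infty$-quasi-isomorphism $\H A\rightsquigarrow A$. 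The induced internal differential on $\H A$ is $p\,\dd_A\,i$, which vanishes because $\partial i=0$ forces $\dd_A i=0$; since $\H A$ thus carries the zero differential, the transferred structure is \emph{minimal}, proving the existence half of the statement.

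For uniqueness, suppose $(\H A,\beta_1)$ and $(\H A,\beta_2)$ are two minimal models, each equipped with an $\infty$-quasi-isomorphism to $(A,\alpha)$. The first ingredient I would invoke is that every $\infty$-quasi-isomorphism admits an $\infty$-quasi-isomorphism in the opposite direction, the colored analogue of the invertibility statement in \cite[Section~10.4]{LodayVallette}, proved by inverting the underlying $\CoAlg_{\mathsf{B}\PP}$-morphism order by order along the weight filtration of the conilpotent cooperad $\mathsf{B}\PP$. Inverting the $\infty$-quasi-isomorphism attached to the second model and composing with the one attached to the first yields an $\infty$-quasi-isomorphism $\zeta:(\H A,\beta_1)\rightsquigarrow(\H A,\beta_2)$ directly connecting the two minimal models.

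It then remains to upgrade $\zeta$ to an $\infty$-isomorphism. Its linear part is the composite $\H A\cong I_\circ\circ\H A\to(\mathsf{B}\PP\circ\H A)_{\beta_1}\xrightarrow{\zeta}(\mathsf{B}\PP\circ\H A)_{\beta_2}\to I_\circ\circ\H A\cong\H A$, which by definition of an $\infty$-quasi-isomorphism is a quasi-isomorphism; but source and target both carry the zero differential, so this map is already an isomorphism of graded vector spaces. The second standard ingredient, again proved by induction along the weight filtration (the colored version of \cite[Section~10.4]{LodayVallette}), is that an $\infty$-morphism with invertible linear part is invertible as an $\infty$-morphism. Hence $\zeta$ is an $\infty$-isomorphism and the two minimal models are $\infty$-isomorphic. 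I expect the main obstacle to be this inductive construction of the inverse over a \emph{colored} conilpotent cooperad: one must verify that the order-by-order inversion along the weight grading of $\mathsf{B}\PP$ respects the $C$-colored structure and the $\Sigma_C$-equivariance built into $\SymSeq_C$, though no genuinely new phenomenon should arise beyond the bookkeeping of colors.
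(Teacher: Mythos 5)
Your argument is correct and follows exactly the route the paper intends: the paper offers no proof, presenting the corollary as an immediate consequence of Theorem \ref{theo:HTT}, which is your existence step, and your uniqueness argument (inverting $\infty$-quasi-isomorphisms and upgrading an $\infty$-morphism with invertible linear part to an $\infty$-isomorphism, as in \cite[Section~10.4]{LodayVallette}) is the standard completion of that sketch. The only caveat you flag --- carrying the weight-filtration induction over to the colored, $\Sigma_C$-equivariant setting --- is indeed just bookkeeping and introduces no new difficulty.
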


We conclude this section with a basic observation that will be useful
in the main part of our paper. For this we consider the special case of
an augmented operad $\PP= (\PP,\gamma,\oone,\epsilon)\in \Op_C^{\aug}$ that is concentrated
in cohomological degree zero, i.e.\ all cochain complexes
of operations $\PP\big(\substack{c\\\und{c}}\big)\in\Ch$
are concentrated in degree $0$ and hence, in particular, have a trivial differential $\dd_\PP =0$.
Furthermore, let $A\in\Ch^C$ be endowed with a strict $\PP_\infty$-algebra structure
$\gamma_A : \PP\to \End_A$. We define a $\SymSeq_C$-morphism
\begin{subequations}\label{eqn:transferred0}
\begin{flalign}
\gamma_{\H A}\,:\, \PP \,\longrightarrow\,\End_{\H A}
\end{flalign}
by using the transferred $\PP_\infty$-algebra structure \eqref{eqn:betatree} on $\H A$
and the decomposition $\PP = \overline{\PP}\oplus I_\circ$ that is obtained
from the projector $\oone_\PP\,\epsilon_\PP:\PP\to\PP$
determined by the augmentation and unit of $\PP$. Explicitly, we set
\begin{flalign}
\gamma_{\H A}(\oone_\PP)\,:=\,\oone_{\End_{\H A}}~~,\quad
\gamma_{\H A}(\mu)\,:=\, \beta(\mu)\,=\, p_{c}\,\gamma_A(\mu)\,\bigotimes_{i=1}^ni_{c_i}\,=:\,
 p_{c}\,\gamma_A(\mu)\,i_{\und{c}}\quad,
\end{flalign}
\end{subequations}
for all $\mu\in \overline{\PP}\big(\substack{c\\ (c_1,\dots,c_n)}\big) = 
\overline{\PP}\big(\substack{c\\ \und{c}}\big) $.
\begin{propo}\label{prop:underlyingstrict}
Let $\PP\in \Op_C^{\aug}$ be an augmented operad
concentrated in degree $0$ and $(A,\gamma_A: \PP\to\End_A)\in\Alg_{\PP_\infty}$ a strict $\PP_\infty$-algebra.
Then the map \eqref{eqn:transferred0} is an $\Op_C$-morphism, i.e.\ the cohomology
$\H A\in\Ch^C$ carries as part of its transferred $\PP_\infty$-algebra structure
a strict $\PP_\infty$-algebra structure $\gamma_{\H A} : \PP\to \End_{\H A}$.
\end{propo}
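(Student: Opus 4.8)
The plan is to check directly that the degree-$0$ map $\gamma_{\H A}$ of \eqref{eqn:transferred0} is an $\Op_C$-morphism, i.e.\ that it preserves units and operadic composition; the $\SymSeq_C$-morphism property (equivariance, and the cochain-map condition, which is automatic since both $\PP$ and $\End_{\H A}$ carry the trivial differential) is already part of its definition. Unit preservation holds by construction, as $\gamma_{\H A}(\oone_\PP):=\oone_{\End_{\H A}}$. For compatibility with composition it suffices to test against the partial compositions $\circ_k$ (insertion of one operation into the $k$-th input), since these together with the unit generate the full structure map $\gamma$. If one of the two arguments is a unit, then $\gamma_{\H A}(\mu\circ_k\oone_\PP)=\gamma_{\H A}(\mu)\circ_k\oone_{\End_{\H A}}$ and its mirror are immediate from $\gamma_{\H A}(\oone_\PP)=\oone_{\End_{\H A}}$, so the only case needing work is $\mu\in\overline{\PP}\big(\substack{c\\ \und{c}}\big)$ and $\nu\in\overline{\PP}\big(\substack{c_k\\ \und{d}}\big)$ both in the augmentation ideal; note $\mu\circ_k\nu\in\overline{\PP}$ as well, because $\epsilon$ is an operad morphism.

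I would then write both sides out. Since $\gamma_A:\PP\to\End_A$ is a genuine operad morphism, $\gamma_A(\mu\circ_k\nu)=\gamma_A(\mu)\circ_k\gamma_A(\nu)$, and with the defining formula $\beta(\lambda)=p_{c}\,\gamma_A(\lambda)\,i_{\und{c}}$ this gives $\gamma_{\H A}(\mu\circ_k\nu)=p_c\,\gamma_A(\mu)$ composed with $\gamma_A(\nu)\,i_{\und{d}}$ on the $k$-th leg and with $i_{c_j}$ on the others. On the other hand $\gamma_{\H A}(\mu)\circ_k\gamma_{\H A}(\nu)$ is the same expression except that the $k$-th leg carries the extra factor $i_{c_k}\,p_{c_k}$ in front of $\gamma_A(\nu)\,i_{\und{d}}$. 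Hence the two expressions differ precisely by the insertion of $i_{c_k}\,p_{c_k}-\id=\partial h_{c_k}$ on the $k$-th input, and the whole statement reduces to showing
\[
p_c\,\gamma_A(\mu)\,\big(i_{c_1}\otimes\cdots\otimes \partial h_{c_k}\,\gamma_A(\nu)\,i_{\und{d}}\otimes\cdots\otimes i_{c_n}\big)\,=\,0\,.
\]

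This is where the degree-$0$ hypothesis is used. Because $\PP$ is concentrated in degree $0$ its differential vanishes, so $\gamma_A$ being a $\SymSeq_C$-morphism forces every $\gamma_A(\lambda)$ to be a degree-$0$ cocycle in $\End_A$, i.e.\ to commute with the differential; moreover \eqref{eqn:deformationretract2} yields $p_c\,\dd_A=0$ and $\dd_A\,i_c=0$ (these are just $\partial p=0$ and $\partial i=0$ read off using $\dd_{\H A}=0$). I would expand $\partial h_{c_k}=\dd_A\,h_{c_k}+h_{c_k}\,\dd_A$ and treat the two summands separately. In the $h_{c_k}\,\dd_A$ term the trailing $\dd_A$ slides through the cocycle $\gamma_A(\nu)$ and is then annihilated by $i_{\und{d}}$ via $\dd_A\,i=0$. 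In the $\dd_A\,h_{c_k}$ term the leading $\dd_A$ on the $k$-th leg is pushed upward through the cocycle $\gamma_A(\mu)$; using that $\gamma_A(\mu)$ commutes with the total differential, this produces one term carrying $\dd_A$ on the output, killed by $p_c\,\dd_A=0$, together with a sum of terms carrying $\dd_A$ on the remaining input legs, each killed by $\dd_A\,i_c=0$. Every summand therefore vanishes, which also shows that the Koszul signs are irrelevant, and the displayed expression is zero.

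The genuinely delicate step, which I would write out with care, is this last commutation of $\dd_A$ through $\gamma_A(\mu)$: one must track that a single differential on the $k$-th input, after applying the cocycle identity $\partial\gamma_A(\mu)=0$, reappears as a differential on the output minus a sum of differentials over all the other inputs, and confirm that each of these is absorbed by $p_c\,\dd_A=0$ or by $\dd_A\,i_c=0$. Conceptually, the whole argument rests on the fact that a single-vertex tree in the transfer formula \eqref{eqn:betatree} has no internal edges and hence no $h$-decoration between $p_c$, $\gamma_A(\mu)$ and the $i_{c_j}$; it is exactly the absence of such homotopies, combined with the two retract relations $p_c\,\dd_A=0$ and $\dd_A\,i_c=0$, that makes the would-be higher correction vanish and leaves a strict operad morphism $\gamma_{\H A}:\PP\to\End_{\H A}$.
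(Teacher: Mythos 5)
Your proposal is correct and follows essentially the same route as the paper's proof: both reduce the composition identity to showing that the correction term obtained from $i_{c_k}\,p_{c_k}=\id+\partial h_{c_k}$ vanishes, using that all the surrounding maps are cochain maps and that $p\,\dd_A=0$, $\dd_A\,i=0$ hold for the strong deformation retract onto the trivially-differentiated $\H A$. The only differences are cosmetic: you test against partial compositions $\circ_k$ and expand $\partial h_{c_k}=\dd_A h_{c_k}+h_{c_k}\dd_A$ explicitly, whereas the paper treats the full composition at once and disposes of all the $\partial h_{c_i}$ terms in a single stroke by the same cocycle argument.
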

\begin{proof}
The only slightly non-trivial step is to show that $\gamma_{\H A}$ preserves operadic compositions.
For this it is sufficient to consider composable non-identity
operations $\mu\in \overline{\PP}\big(\substack{c\\ \und{c}}\big)$
and $\mu_i\in \overline{\PP}\big(\substack{c_i\\ \und{c}_i}\big)$, for $i=1,\dots,n$, and compute
\begin{flalign}
\nn &\gamma_{\End_{\H A}}\Big(\gamma_{\H A}(\mu), \big(\gamma_{\H A}(\mu_1),\dots,\gamma_{\H A}(\mu_n)\big) \Big)
\,=\, \gamma_{\H A}(\mu)\,\Big(\gamma_{\H A}(\mu_1)\otimes\cdots\otimes \gamma_{\H A}(\mu_n) \Big)\\[4pt]
\,&\qquad\quad =\, p_{c} \,\gamma_A(\mu)\, \Big(i_{c_1}\,p_{c_1}\,\gamma_{A}(\mu_1)\otimes\cdots\otimes
i_{c_n}\,p_{c_n}\,\gamma_{A}(\mu_n) \Big)~
i_{(\und{c}_1,\dots,\und{c}_n)}\quad.
\end{flalign}
It is important to observe that all maps in this expression are cochain maps, 
i.e.\ they are annihilated by the differential $\partial$.
(For this our assumption that $\PP$ has a trivial differential $\dd_\PP=0$ is crucial, because otherwise
$\partial\gamma_A(\mu) = \gamma_A(\dd_\PP \mu)\neq 0$ could be non-trivial.) We can now iteratively insert
$i_{c_i}\,p_{c_i} = \id + \partial h_{c_i}$ and find that, as a consequence of the previous observation,
the terms with $\partial h_{c_i}$ vanish. Hence, we obtain
\begin{flalign}
\nn &\gamma_{\End_{\H A}}\Big(\gamma_{\H A}(\mu), \big(\gamma_{\H A}(\mu_1),\dots,\gamma_{\H A}(\mu_n)\big) \Big)\,=\,
p_{c} \,\gamma_A(\mu)\, \Big(\gamma_{A}(\mu_1)\otimes\cdots\otimes
\gamma_{A}(\mu_n) \Big)~
i_{(\und{c}_1,\dots,\und{c}_n)}\\
&\qquad\quad =\, 
p_{c} \,\gamma_A\Big(\gamma_\PP\big(\mu,(\mu_1,\dots,\mu_n)\big)\Big)~
i_{(\und{c}_1,\dots,\und{c}_n)}\, =\,  \gamma_{\H A}\Big(\gamma_\PP\big(\mu,(\mu_1,\dots,\mu_n)\big)\Big)\quad,
\end{flalign}
where in the second step we used that $\gamma_A$ is a $\PP$-algebra structure.
\end{proof}

\subsection{\label{sec: universal Massey}Universal first-order Massey product}
It is important to emphasize that the strict $\PP_\infty$-algebra structure 
$\gamma_{\H A}:\PP\to\End_{\H A}$  on the cohomology $\H A$ from \eqref{eqn:transferred0}
does in general \textit{not} capture the whole $\PP_\infty$-algebra structure
of the minimal model. There often exist higher structures, called Massey products,
that are related to the values of the transferred $\PP_\infty$-algebra
structure \eqref{eqn:betatree} on trees that have multiple vertices.
Note that, for a tree $t(\mu_1,\dots,\mu_n)\in\mathsf{B}\PP$ with $n$ vertices,
such higher structures take the form of degree $1-n$ elements 
$\beta(\mu_1,\dots,\mu_n)\in \End_{\H A}$ in the endomorphism operad of $\H A$
when $\PP$ is concentrated in degree zero.
\sk

The description and interpretation of these higher structures is rather 
subtle because the individual components $\beta(\mu_1,\dots,\mu_n)\in \End_{\H A}$
are {\em not} invariants of the $\PP_\infty$-algebra $(A,\gamma_A:\PP\to \End_A)\in\Alg_{\PP_\infty}$. In particular, 
they transform non-trivially under $\infty$-isomorphisms when changing the strong deformation retract 
\eqref{eqn:deformationretract} used in the minimal model construction from Corollary \ref{cor:minmod}, 
which involves mixing different components among each other. This issue has been overcome by Dimitrova \cite{Dimitrova}.
She has constructed from the transferred $\PP_\infty$-algebra structure $\beta$
a cohomology class $\big[\beta^{(2)}\big]\in\H^1_\Gamma\big(\H A,\gamma_{\H A}\big)$ in operadic Gamma-cohomology 
of the underlying \textit{strict} $\PP_\infty$-algebra $(\H A,\gamma_{\H A})\in\Alg_{\PP_\infty}$ 
from \eqref{eqn:transferred0} that is an invariant of $(A,\gamma_A)$.
In particular, this cohomology class is independent of the choice of strong deformation retract
used in the minimal model construction.
More recently \cite{UniversalMassey1,UniversalMassey2}, Muro has established a relationship between
this cohomology class and the ordinary concept of first-order Massey products,
thereby coining and justifying the terminology \textit{universal first-order Massey product} 
for Dimitrova's cohomology class $\big[\beta^{(2)}\big]\in\H^1_\Gamma\big(\H A,\gamma_{\H A}\big)$.
A higher-order generalization is given by the
successive obstruction classes in \cite[Theorem 4.6]{Dimitrova}, see also \cite{UniversalMassey3}.
\sk

Let us briefly recall how Dimitrova's class is computed from the transferred
$\PP_\infty$-algebra structure $\beta$ on a minimal model $\H A$.
We will focus mainly on computational aspects, since these will be relevant for
our work, and refer the reader to \cite{Dimitrova} for more details.
A model for the operadic Gamma-cohomology of the \textit{strict} $\PP_\infty$-algebra
$(\H A,\gamma_{\H A})\in\Alg_{\PP_\infty}$ underlying the minimal model \eqref{eqn:transferred0}
is given by the cochain complex
\begin{flalign}
\Gamma(\H A,\gamma_{\H A})\,:=\, \bbR\Der(\H A)\,:=\,\Der\big(Q\H A,\H A\big)\,\in\,\Ch
\end{flalign}
that is obtained from the derived functor of derivations. A concrete model for the 
cofibrant resolution $Q\H A  \SimTo \H A$ in $\Alg_{\PP_\infty}$
can be obtained by using the bar-cobar adjunction
$\Omega_j : \CoAlg_{\mathsf{B}\PP} \rightleftarrows \Alg_{\PP_{\infty}} : \mathsf{B}_j$
for (co)operad (co)algebras that is associated with the canonical
twisting morphism $j : \mathsf{B}\PP\to \Omega\mathsf{B}\PP=\PP_\infty$, see \cite[Section 11]{LodayVallette}.
Explicitly, this cofibrant resolution is given by the counit $\Omega_j\mathsf{B}_j \H A  \SimTo  \H A$
of this adjunction. Using that, by definition, the functors
$\Omega_j = \PP_{\infty}\circ_{j} (-)$ and $\mathsf{B}_j = \mathsf{B}\PP\circ_{j} (-)$
assign semi-(co)free (co)algebras, one finds that
\begin{flalign}\label{eqn:operadicGamma}
\Gamma(\H A,\gamma_{\H A})\,=\,\Der\big(\Omega_j\mathsf{B}_j\H A,\H A\big)\,\cong\, \Big(\big[\mathsf{B}\PP,\End_{\H A}\big]\,,~\partial_\Gamma\Big)\,\in\,\Ch
\end{flalign}
for a suitable differential $\partial_\Gamma$ that is constructed out of the differential $\dd_{\mathsf{B}\PP}$
on the bar construction, the twisting morphism $j : \mathsf{B}\PP\to \Omega\mathsf{B}\PP=\PP_\infty$ and the strict
$\PP_\infty$-algebra structure $\gamma_{\H A}:\PP\to\End_{\H A}$ on $\H A$. 
The details are spelled out in \cite[Section 4.1]{Dimitrova}.
\sk

From the transferred $\PP_\infty$-algebra structure $\beta \in \big[\mathsf{B}\PP,\End_{\H A}\big]^1$ 
on the minimal model,
we define a $1$-cochain $\beta^{(2)}\in \Gamma^1(\H A,\gamma_{\H A})$ in this complex by defining
for each tree $t(\mu_1,\dots,\mu_n)\in \mathsf{B}\PP$ with $n\geq 0$ vertices
\begin{flalign}\label{eqn:Masseycocycle}
\beta^{(2)}(\mu_1,\dots,\mu_n)\,:=\, \begin{cases}
\beta(\mu_1,\mu_2)&~,~~\text{for }n=2\quad,\\
0 &~,~~\text{for } n\neq 2\quad.
\end{cases}
\end{flalign}
Since this cochain is only non-trivial on trees with $2$ vertices,
one finds from the explicit form of $\partial_\Gamma$ 
that the cocycle condition $\partial_\Gamma \beta^{(2)} =0$ 
is only non-trivial on trees with $3$ vertices. There exist
two different types of such $3$-vertex trees, given by
\begin{subequations}\label{eqn:genfirstMasseyrelations}
\begin{flalign}\label{eqn:genfirstMasseyrelations1}
t(\mu_1,\mu_2,\mu_3)~~=~~
\parbox{5cm}{\begin{tikzpicture}[cir/.style={circle,draw=black,inner sep=0pt,minimum size=2mm},
        poin/.style={rectangle, inner sep=2pt,minimum size=0mm},scale=0.8, every node/.style={scale=0.8}]
\node[poin] (i)   at (0,0) {};
\node[poin] (v1)  at (0,-1) {{$\mu_1$}};
\node[poin] (v2)  at (0,-2) {{$\mu_2$}};
\node[poin] (v3)  at (0,-3) {{$\mu_3$}};
\node[poin] (o11)  at (-2,-2) {};
\node[poin] (o12)  at (-1.25,-2) {};
\node[poin] (o13)  at (1.25,-2) {};
\node[poin] (o14)  at (2,-2) {};
\node[poin] (o21)  at (-0.75,-3) {};
\node[poin] (o23)  at (0.75,-3) {};
\node[poin] (o31)  at (-0.5,-4) {};
\node[poin] (o32)  at (0,-4) {};
\node[poin] (o33)  at (0.5,-4) {};
\draw[thick] (i) -- (v1);
\draw[thick] (v1) -- (v2);
\draw[thick] (v1) -- (o11);
\draw[thick] (v1) -- (o12);
\draw[thick] (v1) -- (o13);
\draw[thick] (v1) -- (o14);
\draw[thick] (v2) -- (o21);
\draw[thick] (v2) -- (o23);
\draw[thick] (v2) -- (v3);
\draw[thick] (v3) -- (o31);
\draw[thick] (v3) -- (o32);
\draw[thick] (v3) -- (o33);
\end{tikzpicture}}
\end{flalign} 
and
\begin{flalign}\label{eqn:genfirstMasseyrelations2}
t(\mu_1,\mu_{21},\mu_{22})~~=~~
\parbox{3.5cm}{\begin{tikzpicture}[cir/.style={circle,draw=black,inner sep=0pt,minimum size=2mm},
        poin/.style={rectangle, inner sep=2pt,minimum size=0mm},scale=0.8, every node/.style={scale=0.8}]
\node[poin] (i)   at (0,0) {};
\node[poin] (v1)  at (0,-1) {{$\mu_1$}};
\node[poin] (v2)  at (-1,-2) {{$\mu_{21}$}};
\node[poin] (v3)  at (1,-2) {{$\mu_{22}$}};
\node[poin] (o11)  at (0,-2) {};
\node[poin] (o12)  at (-2,-2) {};
\node[poin] (o13)  at (2,-2) {};
\node[poin] (o21)  at (-1.5,-3) {};
\node[poin] (o22)  at (-1,-3) {};
\node[poin] (o23)  at (-0.5,-3) {};
\node[poin] (o31)  at (1.5,-3) {};
\node[poin] (o32)  at (1,-3) {};
\node[poin] (o33)  at (0.5,-3) {};
\draw[thick] (i) -- (v1);
\draw[thick] (v1) -- (v2);
\draw[thick] (v1) -- (v3);
\draw[thick] (v1) -- (o11);
\draw[thick] (v1) -- (o12);
\draw[thick] (v1) -- (o13);
\draw[thick] (v2) -- (o21);
\draw[thick] (v2) -- (o22);
\draw[thick] (v2) -- (o23);
\draw[thick] (v3) -- (o31);
\draw[thick] (v3) -- (o32);
\draw[thick] (v3) -- (o33);
\end{tikzpicture}}\quad.
\end{flalign} 
\end{subequations}
Evaluating the cocycle condition
$\partial_\Gamma\beta^{(2)}=0$ on these two types 
of trees gives the identities
\begin{subequations}\label{eqn:genMasseyrelations}
\begin{flalign}
\nn0\,&=\, \beta^{(2)}\big(\mu_1,\gamma_{\scriptscriptstyle{(1)}}(\mu_2,\mu_3)\big) -
\beta^{(2)}\big(\gamma_{\scriptscriptstyle{(1)}}(\mu_1,\mu_2),\mu_3\big) \\[3pt]
&\qquad \quad +
\gamma_{\H A}(\mu_1)~\beta^{(2)}(\mu_2,\mu_3) - \beta^{(2)}(\mu_1,\mu_2)\,\gamma_{\H A}(\mu_3)
\label{eqn:genMasseyrelations1}
\end{flalign}
and
\begin{flalign}
\nn 0\,&=\,\beta^{(2)}\big(\gamma_{\scriptscriptstyle{(1)}}(\mu_1,\mu_{22}),\mu_{21}\big) 
-\beta^{(2)}\big(\gamma_{\scriptscriptstyle{(1)}}(\mu_1,\mu_{21}),\mu_{22}\big) \\[3pt]
&\qquad \quad + \beta^{(2)}(\mu_1,\mu_{22})~\gamma_{\H A}(\mu_{21}) 
- \beta^{(2)}(\mu_1,\mu_{21})~\gamma_{\H A}(\mu_{22})
\quad,\label{eqn:genMasseyrelations2}
\end{flalign}
\end{subequations}
which hold true as a consequence of the Maurer-Cartan equation \eqref{eqn:MCtwisting} for the
transferred $\PP_\infty$-algebra structure $\beta$. (Recall that $\gamma_{\scriptscriptstyle{(1)}}$ 
denotes the infinitesimal composition of the operad $\PP$. Graphically, this 
corresponds to composing two vertices in the trees \eqref{eqn:genfirstMasseyrelations}
along a common edge.)
\begin{defi}\label{def:universalMassey}
The universal first-order Massey product of a strict $\PP_\infty$-algebra $(A,\gamma_A:\PP\to\End_A)\in\Alg_{\PP_\infty}$
is defined as the cohomology class
\begin{flalign}
\big[\beta^{(2)}\big]\,\in\,\H^1_\Gamma\big(\H A,\gamma_{\H A}\big)
\end{flalign}
in operadic Gamma-cohomology \eqref{eqn:operadicGamma} that is defined
by the $1$-cocycle \eqref{eqn:Masseycocycle} associated with any choice
of minimal model $(\H A,\beta)\in\Alg_{\PP_\infty}$. This cohomology class 
is an invariant of $(A,\gamma_A)\in\Alg_{\PP}$, in particular
it does not depend on the choice of minimal model that is 
used to compute it, see \cite[Theorem 4.2]{Dimitrova}.
\end{defi}

As usual, two $1$-cocycles $\beta^{(2)},\beta^{\prime(2)}\in \Gamma^1\big(\H A,\gamma_{\H A}\big)$ 
represent the same cohomology class $\big[\beta^{(2)}\big]=\big[\beta^{\prime(2)}\big]
\in \H^1_\Gamma\big(\H A,\gamma_{\H A}\big)$
if and only if they differ by a coboundary, i.e.\
\begin{flalign}\label{eqn:gaugetransformationgeneral}
\beta^{\prime(2)}\,=\, \beta^{(2)} + \partial_{\Gamma} \chi
\end{flalign}
for some $\chi\in \Gamma^0\big(\H A,\gamma_{\H A}\big)$. Requiring that both representatives
$\beta^{(2)}$ and $\beta^{\prime(2)}$ vanish as in \eqref{eqn:Masseycocycle} on all trees
with $n\neq 2$ vertices constrains $\chi$ to be of the form
\begin{flalign}
\chi(\mu_1,\dots,\mu_n)\,=\,\begin{cases}
\chi(\mu_1)&~,~~\text{for }n=1\quad,\\
0 &~,~~\text{for }n\neq 1\quad.
\end{cases}
\end{flalign}
Evaluating both sides of \eqref{eqn:gaugetransformationgeneral} on a tree $t(\mu_1,\dots,\mu_n)$
with $n$ vertices then vanishes for $n\neq 2$, and for $n=2$ it gives the explicit transformation formula
\begin{flalign}\label{eqn:gaugetransformationcomponents}
\beta^{\prime(2)}(\mu_1,\mu_2)\,=\, \beta^{(2)}(\mu_1,\mu_2) -\chi\big(\gamma_{\scriptscriptstyle{(1)}}(\mu_1,\mu_2)\big)
+ \chi(\mu_1)~\gamma_{\H A}(\mu_2) + \gamma_{\H A}(\mu_1)~\chi(\mu_2)
\end{flalign}
for the representing cocycles of the universal first-order Massey product.
As a side-remark, we would like to note that 
this transformation formula can also be interpreted in terms of gauge transformations
of Maurer-Cartan elements as introduced e.g.\ in \cite{DSV}.


\section{\label{sec:PFA}Application to prefactorization algebras}

\subsection{Prefactorization operad and algebras}
We will briefly recall the definition of a prefactorization algebra 
on a smooth $m$-dimensional manifold $M$. More details can be found in the textbooks 
of Costello and Gwilliam \cite{CostelloGwilliam,CostelloGwilliam2}.
Let us start by introducing some relevant terminology. 
An open subset $D\subseteq M$ is called a disk 
if it is diffeomorphic $D\cong \bbR^m$ to the $m$-dimensional Cartesian space. 
We denote by $\mathrm{Disk}_M=\big\{D\subseteq M\big\}$ the set of all disks in $M$.
\begin{defi}\label{def:PFAoperad}
The prefactorization operad $\PP_M^{}= (\PP_M^{},\gamma,\oone)\in\Op_{\mathrm{Disk}_M}^{}$ on a smooth $m$-dimensional
manifold $M$ is the ($\mathrm{Disk}_M$-colored dg-)operad that is defined as follows: 
The underlying symmetric sequence
$\PP_M^{}\in\SymSeq_{\mathrm{Disk}_M}$ is given object-wise by
\begin{flalign}
\PP_M^{}\big(\substack{D\\ \und{D}}\big)\,:=\,\begin{cases}
\bbK\big[\iota_{\und{D}}^{D}\big]\,\in\,\Ch&~,~~\text{if }D_i\subseteq D~\forall i\text{ and }
D_i\cap D_j=\emptyset~\forall {i\neq j}\quad,\\[4pt]
0\,\in\,\Ch&~,~~\text{else}\quad,
\end{cases}
\end{flalign}
for all tuples $(\und{D},D) = ((D_1,\dots,D_n),D)$ of disks,
where $\bbK\big[\iota_{\und{D}}^{D}\big]$ denotes the cochain complex (with trivial differential)
spanned by a degree $0$ element $\iota_{\und{D}}^{D}$, together with
the permutation action
$\PP_M^{}\big(\substack{D\\ \und{D}}\big)\to \PP_M^{}\big(\substack{D\\ \und{D}\sigma}\big)
\,,~\iota_{\und{D}}^{D}\mapsto \iota_{\und{D}\sigma}^{D}$.
The operadic composition $\gamma: \PP_M^{}\circ\PP_M^{}\to \PP_M^{}$ is given by
\begin{flalign}
\gamma\Big(\iota_{\und{D}}^{D},\Big(\iota_{\und{D}_1}^{D_1},\dots,\iota_{\und{D}_{n}}^{D_n}\Big)\Big)\,:=\,\iota^{D}_{(\und{D}_1,\dots,\und{D}_n)}\quad,
\end{flalign} 
and the operadic unit $\oone : I_\circ \to \PP_M^{}$ sends $1 \in I_\circ\big(\substack{D\\ D}\big)$ to $\iota_D^{D} \in \PP_M^{}\big(\substack{D\\ D}\big)$.
\end{defi}

The prefactorization operad is canonically augmented via the augmentation map
\begin{flalign}
\epsilon\,:\, \PP_M^{}\,\longrightarrow\,I_\circ~,~~\iota_{\und{D}}^{D}\,\longmapsto\,\begin{cases}
1&~,~~\text{if }\und{D}=D\quad,\\
0&~,~~\text{else}\quad.
\end{cases}
\end{flalign}
The corresponding augmentation ideal $\overline{\PP}_M^{} = \ker(\epsilon)$ is then given
by all non-identity operations $\iota_{\und{D}}^{D}$, for $\und{D}\neq D$.
Since the prefactorization operad $\PP_M=(\PP_M,\gamma,\oone,\epsilon)\in\Op_{\mathrm{Disk}_M}^{\aug}$ 
is an augmented operad, the homological techniques from Section \ref{sec:prelim} apply to this example.
\begin{defi}\label{def:PFA}
A prefactorization algebra $\FFF$ on a smooth $m$-manifold $M$ is an algebra over the prefactorization operad
$\PP_M=(\PP_M,\gamma,\oone,\epsilon)\in\Op_{\mathrm{Disk}_M}^{\aug}$. More explicitly,
a prefactorization algebra $\FFF\in\Alg_{\PP_M^{}}$ consists of the following data:
\begin{itemize}
\item[(i)] For each disk $D\subseteq M$, a cochain complex $\FFF(D)\in \Ch$.

\item[(ii)] For each tuple $(\und{D},D) = ((D_1,\dots,D_n),D)$ of disks in $M$,
such that $D_i\subseteq D$, for all $i=1,\dots,n$, and $D_{i}\cap D_{j}=\emptyset$, 
for all $i\neq j$, a cochain map
\begin{flalign}
\FFF\big(\iota_{\und{D}}^{D}\big)\,:\, \bigotimes_{i=1}^n \FFF(D_i)~\longrightarrow~\FFF(D)\quad.
\end{flalign}
This includes a cochain map $\FFF\big(\iota_{\varnothing}^D\big) : \bbK\to \FFF(D)$ for the empty tuple $\varnothing=()$ 
and each $D$.
\end{itemize}
These data have to satisfy the following axioms:
\begin{subequations}
\begin{itemize}
\item[(1)] \textit{Preservation of compositions:}
\begin{flalign}\label{eqn:PFAcomposition}
\xymatrix@R=3.5em{
\ar[dr]_-{\FFF\big(\iota_{(\und{D}_1,\dots,\und{D}_{n})}^{D}\big)~~~}
\bigotimes\limits_{i=1}^{n}\bigotimes\limits_{j=1}^{n_i}\FFF(D_{ij})\ar[rr]^-{\bigotimes_i \FFF\big(\iota_{\und{D}_i}^{D_i}\big)}~&&~\bigotimes\limits_{i=1}^{n}\FFF(D_i)
\ar[dl]^-{\FFF\big(\iota_{\und{D}}^{D}\big)}\\
~&~\FFF(D)~&~
}
\end{flalign}
\item[(2)] \textit{Preservation of identities:}
\begin{flalign}
\FFF\big(\iota_{D}^{D}\big)\,=\,\id_{\FFF(D)}\,:\,\FFF(D)~\longrightarrow~\FFF(D)
\end{flalign}

\item[(3)] \textit{Equivariance under permutation actions:}
\begin{flalign}\label{eqn:PFAequivariance}
\xymatrix@R=3.5em{
\ar[dr]_-{\FFF\big(\iota_{\und{D}}^{D}\big)~~}\bigotimes\limits_{i=1}^n\FFF(D_{i}) \ar[rr]^-{\tau_\sigma}~&&~\bigotimes\limits_{i=1}^n\FFF(D_{\sigma(i)}) \ar[ld]^-{\FFF\big(\iota_{\und{D}\sigma}^{D}\big)}\\
~&~\FFF(D)~&~
}
\end{flalign}
with the permutation $\tau_\sigma$ acting by the symmetric braiding \eqref{eqn:braiding}.
\end{itemize}
\end{subequations}
\end{defi}
\begin{rem}\label{rem:operadchoice}
We have already highlighted in the second paragraph of the introduction
that our prefactorization operad from Definition \ref{def:PFAoperad}
agrees with the operad used in the context of factorization homology, 
see e.g.\ \cite[Definition 5.4.5.6 and Remark 5.4.5.7]{HigherAlgebra} and \cite[Definition 2.9]{AyalaFrancis}. 
(To avoid confusion, let us reemphasize that Ayala and Francis consider the symmetric monoidal envelope
$\PP_M^\otimes$ of the operad $\PP_M$, which is the origin of their multidisks.)
Costello and Gwilliam \cite{CostelloGwilliam,CostelloGwilliam2}
consider also alternative (inequivalent) variants of prefactorization operads whose
objects are not only disks $D\subseteq M$, but also multidisks 
$D_1\sqcup \cdots\sqcup D_n\subseteq M$ or even general open subsets $U\subseteq M$.
Allowing for multidisks as part of the objects has the advantage that 
(the non-unital version of) the resulting operad admits a very simple quadratic
presentation that can be shown to be Koszul \cite{PFAKoszul}, while it is presently not known if
the operad from Definition \ref{def:PFAoperad} is Koszul.\footnote{Koszulness is a property
that a dg-operad may or may not have. If an operad $\PP$ is Koszul, then there exists an alternative cofibrant
resolution that is considerably smaller than the bar-cobar resolution from \eqref{eqn:barcobar}.
This leads to a simplified description of $\PP_\infty$-algebras, their $\infty$-morphisms, homotopy transfer 
and minimal models. Hence, the advantage of Koszul operads is that their homological algebra is practically better manageable.}
A disadvantage of such operads including multidisks is that they describe more operations than those one is interested
in. This has led Costello and Gwilliam to introduce a \textit{multiplicativity property}
\cite[Definition 6.1.3]{CostelloGwilliam} which demands that the structure map
\begin{flalign}
\FFF\big(\iota_{(D_1,D_2)}^{D_1 \sqcup D_2}\big) \,:\, \FFF(D_1)\otimes \FFF(D_2)\,\stackrel{\sim}{\longrightarrow}\,\FFF(D_1\sqcup D_2)
\end{flalign}
is a quasi-isomorphism for every pair of disjoint disks $D_1,D_2\subseteq M$.
It is not clear to us how this multiplicativity property interacts with the Koszul property from \cite{PFAKoszul}.
\end{rem}

We will later consider a special class of prefactorization algebras
that correspond to topological quantum field theories.
The precise definition is as follows.
\begin{defi}\label{eqn:PFAlocconst}
A prefactorization algebra $\FFF\in \Alg_{\PP_M^{}}$ is called locally constant
if it assigns to every disk inclusion $D\subseteq D^\prime$ a quasi-isomorphism of cochain complexes
$\FFF(\iota_{D}^{D^\prime}) : \FFF(D)  \SimTo  \FFF(D^\prime)$.
\end{defi}

\subsection{\label{sec: Masseys for PFA}Minimal model and universal first-order Massey product}
Since the prefactorization operad $\PP_M=(\PP_M,\gamma,\oone,\epsilon)\in\Op_{\mathrm{Disk}_M}^{\aug}$ 
is canonically augmented, the construction of minimal models from Corollary \ref{cor:minmod} can be applied 
to any (not necessarily locally constant) prefactorization algebra $\FFF$. Let us briefly specialize
the relevant constructions from Section \ref{sec:prelim} to this specific case.
To compute a minimal model for $\FFF$, we have to choose a strong deformation retract
\begin{equation}\label{eqn:PFASDR}
\begin{tikzcd}
\H\FFF \ar[r,shift right=-1ex,"i"] & \ar[l,shift right=-1ex,"p"] \FFF \ar[loop,out=-30,in=30,distance=30,swap,"h"]
\end{tikzcd}
\end{equation}
in the category $\Ch^{\mathrm{Disk}_{M}}$ to the cohomology
$\H\FFF = \big\{\H\FFF(D)\in\Ch\,:\,D\subseteq M\big\}\in\Ch^{\mathrm{Disk}_{M}}$,
which we regard as a family of cochain complexes with trivial differentials $\dd_{\H\FFF(D)}=0$.
Applying Corollary \ref{cor:minmod} to the strong deformation retract \eqref{eqn:PFASDR} 
yields a transferred $(\PP_{M}^{})_{\infty}^{}$-algebra structure 
$\beta :\mathsf{B}\PP_{M}^{}\to\End_{\H\FFF}$
on the cohomology $\H\FFF\in\Ch^{\mathrm{Disk}_{M}}$.
This means that associated to each $\mathrm{Disk}_{M}$-colored rooted
tree, whose vertices are ordered and decorated by elements in $\overline{\PP}_{M}^{}[1]$,
we have an element in $\End_{\H\FFF}$, i.e.\ an operation on $\H\FFF$ whose arity and colors
matches that of the tree. To illustrate this better, let us spell out a simple
example in full detail, following the presentation in \eqref{eqn:betatree}:  To the tree
\begin{subequations}\label{eqn:PFAtree}
\begin{flalign}
t\big(\iota_{(D_1,D_2)}^{D},\iota^{D_1}_{(D_{11},D_{12},D_{13})}\big)~~=~~~
\parbox{5cm}{\begin{tikzpicture}[cir/.style={circle,draw=black,inner sep=0pt,minimum size=2mm},
        poin/.style={rectangle, inner sep=2pt,minimum size=0mm},scale=0.8, every node/.style={scale=0.8}]
\node[poin] (i)   at (0,0) {};
\node[poin] (v1)  at (0,-1) {{\footnotesize $\iota_{(D_1,D_2)}^{D}$}};
\node[poin] (v2)  at (-1.5,-2) {\footnotesize{$\iota^{D_1}_{(D_{11},D_{12},D_{13})}$}};
\node[poin] (o1)  at (-2.5,-3) {};
\node[poin] (o2)  at (-1.5,-3) {};
\node[poin] (o3)  at (-0.5,-3) {};
\node[poin] (o4)  at (1.5,-2) {};
\draw[thick] (i) -- (v1);
\draw[thick] (v1) -- (v2);
\draw[thick] (v2) -- (o1);
\draw[thick] (v2) -- (o2);
\draw[thick] (v2) -- (o3);
\draw[thick] (v1) -- (o4);
\end{tikzpicture}}
\end{flalign}
is associated the degree
$[1-\text{(number of vertices)}]= -1$ linear map
\begin{flalign}
\beta\big(\iota_{(D_1,D_2)}^{D},\iota^{D_1}_{(D_{11},D_{12},D_{13})}\big)\,:\, \H\FFF(D_{11})\otimes  \H\FFF(D_{12})\otimes   \H\FFF(D_{13})\otimes  \H\FFF(D_{2})~\longrightarrow~\H\FFF(D)
\end{flalign}
that is given by the composite
\begin{flalign}
\nn &\beta\big(\iota_{(D_1,D_2)}^{D},\iota^{D_1}_{(D_{11},D_{12},D_{13})}\big)~=~\\[3pt]
&\qquad~ p_{D}\,\FFF\big(\iota_{(D_1,D_2)}^{D}\big)\,
\Big(h_{D_1} \FFF\big(\iota^{D_1}_{(D_{11},D_{12},D_{13})}\big)\otimes \id_{\FFF(D_2)}\Big)\,\Big(i_{D_{11}}\otimes i_{D_{12}}\otimes i_{D_{13}}\otimes i_{D_2}\Big)\quad.
\end{flalign}
\end{subequations}

Recall that by Proposition \ref{prop:underlyingstrict} the minimal
model $\big(\H \FFF,\beta\big)$ has an underlying strict $(\PP_{M}^{})_\infty^{}$-algebra
structure that is given by evaluating $\beta$ on trees with a single vertex. 
(This does not include the higher structure given by the Massey products,
which we will spell out below.) In the present context, this means that the cohomology $\H \FFF$ 
is a prefactorization algebra with respect to the structure maps
\begin{flalign}\label{eqn:strictPFA}
\H\FFF\big(\iota_{\und{D}}^D\big)\,:=\beta\big(\iota_{\und{D}}^D\big) 
\,=\,\,p_{D}\,\FFF\big(\iota_{\und{D}}^{D}\big)\,i_{\und{D}}
\,:\, \H\FFF(\und{D})~\longrightarrow~\H \FFF(D)\quad,
\end{flalign}
where we introduce the convenient abbreviations $\H\FFF(\und{D}):=
\bigotimes_{i=1}^n\H\FFF(D_i)$ and $i_{\und{D}} := \bigotimes_{i=1}^n i_{D_i}$.
\sk

On top of this strict prefactorization algebra $\H \FFF$
there are potentially non-trivial higher structures given by the Massey products.
The universal first-order Massey product from Definition \ref{def:universalMassey}
is described by the cohomology class $\big[\beta^{(2)}\big]\in \H_\Gamma^1(\H \FFF)$
defined by a certain $1$-cocycle $\beta^{(2)}\in\Gamma^1(\H\FFF)$ that is non-trivial only on trees with $n=2$ vertices. 
In the present case of prefactorization algebras, these trees take the form
\begin{flalign}\label{eqn:universalMasseyPFAtrees}
t\big(\iota_{\und{D}}^{D}, \iota_{\und{D}_i}^{D_i}\big)~~=~~
\parbox{5cm}{\begin{tikzpicture}[cir/.style={circle,draw=black,inner sep=0pt,minimum size=2mm},
        poin/.style={rectangle, inner sep=2pt,minimum size=0mm},scale=0.8, every node/.style={scale=0.8}]
\node[poin] (i)   at (0,0) {};
\node[poin] (v1)  at (0,-1) {{$\iota_{\und{D}}^{D}$}};
\node[poin] (v2)  at (0,-2) {{$\iota^{D_i}_{\und{D}_i}$}};
\node[poin] (o11)  at (-2,-2) {};
\node[poin] (o12)  at (-1.25,-2) {};
\node[poin] (o13)  at (1.25,-2) {};
\node[poin] (o14)  at (2,-2) {};
\node[poin] (o21)  at (-0.5,-3) {};
\node[poin] (o22)  at (0,-3) {};
\node[poin] (o23)  at (0.5,-3) {};
\draw[thick] (i) -- (v1);
\draw[thick] (v1) -- (v2);
\draw[thick] (v1) -- (o11);
\draw[thick] (v1) -- (o12);
\draw[thick] (v1) -- (o13);
\draw[thick] (v1) -- (o14);
\draw[thick] (v2) -- (o21);
\draw[thick] (v2) -- (o22);
\draw[thick] (v2) -- (o23);
\end{tikzpicture}}
\end{flalign} 
and the non-vanishing components of the $1$-cocycle reads as
\begin{flalign}\label{eqn:universalMasseyPFA}
\beta^{(2)}\big(\iota_{\und{D}}^{D}, \iota_{\und{D}_i}^{D_i}\big)
\,=\,\beta\big(\iota_{\und{D}}^{D}, \iota_{\und{D}_i}^{D_i}\big)\,=\,
p_{D}~\FFF\big(\iota_{\und{D}}^{D}\big)~h_{D_i}~\FFF\big(\iota_{\und{D}_i}^{D_i}\big)~i_{(D_1,\dots,\und{D}_i,\dots, D_n)}\quad,
\end{flalign}
where here and below we use a condensed notation suppressing all tensor products
with identity morphisms. As illustrated in \eqref{eqn:gaugetransformationcomponents}, 
different representatives (that vanish on all trees with $n\neq 2$ vertices) 
of the cohomology class $\big[\beta^{(2)}\big]\in \H_\Gamma^1(\H \FFF)$
are related by the transformation formula
\begin{flalign}\label{eqn:gaugetransformationPFA}
\beta^{\prime(2)}\big(\iota_{\und{D}}^{D}, \iota_{\und{D}_i}^{D_i}\big)\,=\, 
\beta^{(2)}\big(\iota_{\und{D}}^{D}, \iota_{\und{D}_i}^{D_i}\big)
-\chi\big(\iota^D_{(D_1,\dots,\und{D}_i,\dots,D_n)}\big)
+ \chi\big(\iota_{\und{D}}^{D}\big)~\H\FFF(\iota_{\und{D}_i}^{D_i}\big) 
+ \H\FFF\big(\iota_{\und{D}}^{D}\big)~\chi\big(\iota_{\und{D}_i}^{D_i}\big) \quad,
\end{flalign}
where $\chi\in \Gamma^0(\H\FFF)$ is any $0$-cochain that vanishes on all trees with $n\neq 1$ vertices.
\sk

In our present high level of generality, it is difficult (if not impossible)
to make any non-trivial statements about the universal first-order Massey
product associated with a prefactorization algebra $\FFF$. The defining cohomology
class $\big[\beta^{(2)}\big]\in \H_\Gamma^1(\H \FFF)$ can (in principle) be determined
for any example of $\FFF$ by choosing a strong deformation retract \eqref{eqn:PFASDR}
and using the formula in \eqref{eqn:universalMasseyPFA}. The general theory
from Section \ref{sec:prelim} implies that the result will be an invariant
of $\FFF$, in particular it will not depend on any of the choices made.
Of course, it will strongly depend on specific details of 
the prefactorization algebra $\FFF$ whether or not the class
$\big[\beta^{(2)}\big]\in \H_\Gamma^1(\H \FFF)$ is trivial.

\subsection{\label{sec: lcPFA}Locally constant prefactorization algebras on $\bbR^m$}
Some non-trivial statements about the minimal model and
universal first-order Massey product can be made for
locally constant prefactorization algebras on the Cartesian space $\bbR^m$.
The aim of this subsection is to explore some general consequences
arising from local constancy, which will become useful in our study
of explicit examples in Sections \ref{sec:examples} and \ref{sec:CS} below.
Throughout this subsection, we fix any locally constant 
prefactorization algebra $\FFF\in\Alg_{\PP_{\bbR^m}}$ on the 
$m$-dimensional Cartesian space $\bbR^m$. 
We also choose an orientation of $\bbR^m$.
\sk

As a simple first observation, we note that the
transferred (strict) prefactorization algebra $\H\FFF$ in \eqref{eqn:strictPFA} 
is locally constant as a consequence of local constancy of $\FFF$. More explicitly,
given any disk inclusion $D\subset D^\prime$, the cochain map
$\FFF\big(\iota_D^{D^\prime}\big):\FFF(D)\SimTo \FFF(D^\prime)$ is a quasi-isomorphism,
and hence by definition the induced map on cohomology
\begin{flalign} \label{eqn:strictPFAisos}
\H\FFF\big(\iota_{D}^{D^\prime}\big) : \H\FFF(D)\overset{\cong}\longrightarrow \H\FFF(D^\prime) \quad,
\end{flalign}
given by \eqref{eqn:strictPFA}, is an isomorphism.
Thus $\H\FFF$ is locally constant even in the stricter sense
given by replacing in Definition \ref{eqn:PFAlocconst} the concept 
of quasi-isomorphisms by actual isomorphisms.
\sk

Using the isomorphisms \eqref{eqn:strictPFAisos}, we can canonically identify
$\H\FFF\big(\iota_D^{\bbR^m}\big): \H\FFF(D) \CongTo \H\FFF(\bbR^m)$
the object $\H\FFF(D)\in\Ch$, for any disk $D\subseteq \bbR^m$, with the object
$\H\FFF(\bbR^m)\in\Ch$ that is assigned to the whole Cartesian space $\bbR^m$. 
Considering the commutative diagram 
\begin{flalign}\label{eqn:higherarityoperations}
\xymatrix@C=5em@R=3em{
\H\FFF(\bbR^m)^{\otimes n} \ar[r]^-{\mu_{\und{D}}}~&~ \H\FFF(\bbR^m)\\
\H\FFF(\und{D})\ar[u]_-{\cong}^-{\bigotimes_i\H\FFF(\iota_{D_i}^{\bbR^m})}\ar[r]_-{\H\FFF(\iota_{\und{D}}^D)}
\ar[ru]^-{\H\FFF(\iota_{\und{D}}^{\bbR^m})}~&~\H\FFF(D)\ar[u]^{\cong}_-{\H\FFF(\iota_D^{\bbR^m})}
}
\end{flalign}
allows us to regard the structure maps of $\H\FFF$ from 
\eqref{eqn:strictPFA} equivalently as operations $\mu_{\und{D}}$ on $\H\FFF(\bbR^m)$ whose
arity is given by the length of the tuple of disks $\und{D}= (D_1,\dots,D_n)$.
Using the composition property \eqref{eqn:PFAcomposition} of a prefactorization algebra,
one directly checks from the defining diagram \eqref{eqn:higherarityoperations} 
that $\mu_{\und{D}}=\mu_{\und{\widetilde{D}}}$ for every
pair of tuples  $\und{D} = (D_1,\dots,D_n)$ and $\und{\widetilde{D}} = (\widetilde{D}_1,\dots,\widetilde{D}_n)$
of mutually disjoint disks such that $\widetilde{D}_i\subseteq D_i$, for all $i=1,\dots,n$.
As a consequence, we find that two operations $\mu_{\und{D}}$ and $\mu_{\und{D}^\prime}$ 
coincide whenever the tuples of mutually disjoint disks $\und{D} = (D_1,\dots,D_n)$ and
$\und{D}^\prime = (D^\prime_1,\dots,D^\prime_n)$ can be connected by a finite chain of zig-zags
\begin{flalign}\label{eqn:zigzag}
\xymatrix{
\und{D} ~&~\ar[l] \und{D}_0\ar[r] ~&~\und{D}_1 ~&~ \ar[l] \cdots \ar[r]~&~ \und{D}_{N-1}~&~ \ar[l] \und{D}_N \ar[r] ~&~\und{D}^\prime
}
\end{flalign}
of families of single-disk inclusions. The existence of such zig-zags depends
on the dimension $m$ of the Cartesian space $\bbR^m$:
\begin{itemize}
\item On the $1$-dimensional Cartesian space $\bbR^1$, given two $n$-tuples $\und{D}$ 
and $\und{D}^\prime$ of mutually disjoint disks (i.e.\ intervals), there exist unique 
permutations $\sigma,\sigma^\prime\in\Sigma_n$
such that the permuted tuples $\und{D}\sigma$ and $\und{D}^\prime\sigma^\prime$ are ordered
with respect to the choice of orientation of $\bbR^1$, i.e.\ $D_{\sigma(1)} < D_{\sigma(2)} < \cdots <D_{\sigma(n)}$
and $D^\prime_{\sigma^\prime(1)} < D^\prime_{\sigma^\prime(2)} < \cdots < D^\prime_{\sigma^\prime(n)}$.
The tuples $\und{D}$ and $\und{D}^\prime$ can be connected by a chain of zig-zags \eqref{eqn:zigzag}
if and only if $\sigma=\sigma^\prime$.

\item On the $(m\geq 2)$-dimensional Cartesian space $\bbR^m$, any two $n$-tuples
$\und{D}$ and $\und{D}^\prime$ of mutually disjoint disks can be connected by a chain of zig-zags \eqref{eqn:zigzag}.
\end{itemize}

These observations allow us to identify the algebraic structure that is
determined by the transferred structure maps in \eqref{eqn:strictPFA} and \eqref{eqn:higherarityoperations}.
\begin{propo}\label{prop:strictstructuremaps}
We have:
\begin{itemize}
\item[$(1)$] In $m=1$ dimension,
the structure maps $\mu_{\und{D}}=: \mu_\sigma$ in \eqref{eqn:higherarityoperations} 
depend only on the permutation $\sigma\in\Sigma_n$ that orders
$\und{D}\sigma=(D_{\sigma(1)},\dots,D_{\sigma(n)})$ 
along the orientation. The resulting family of operations $\{\mu_\sigma\,:\, \sigma\in\Sigma_n\,,~n\geq 0\}$ 
defines the structure of an associative and unital algebra on $\H\FFF(\bbR^1)$.

\item[$(2)$] In $m\geq 2$ dimensions, the structure maps 
$\mu_{\und{D}}=: \mu_n$ in \eqref{eqn:higherarityoperations} 
depend only on the length of the tuple of disks $\und{D} = (D_1,\dots,D_n)$. 
The resulting family of operations $\{\mu_n\,:\, n\geq 0\}$ 
defines the structure of an associative, unital and commutative algebra on $\H\FFF(\bbR^m)$.
\end{itemize}
\end{propo}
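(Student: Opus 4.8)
The plan is to separate the statement into two independent tasks: first, extracting the claimed dependence of $\mu_{\und{D}}$ on the minimal combinatorial datum ($\sigma$ when $m=1$, and $n$ when $m\geq 2$), and second, exhibiting an explicit unit and binary product and checking the algebra axioms. The dependence statements are essentially already proved by the zig-zag observations collected just before the proposition: two tuples connected by a finite chain of single-disk inclusions \eqref{eqn:zigzag} induce the same operation, and in $m=1$ the zig-zag classes are exactly the orderings $\sigma\in\Sigma_n$, whereas in $m\geq 2$ any two $n$-tuples are connected, forcing $\mu_{\und{D}}$ to depend only on $n$. Throughout I would exploit that, by Proposition \ref{prop:underlyingstrict}, the transferred strict maps \eqref{eqn:strictPFA} genuinely assemble into a prefactorization algebra on $\H\FFF$, so the composition axiom \eqref{eqn:PFAcomposition} and the equivariance axiom \eqref{eqn:PFAequivariance} may be invoked directly at the level of $\H\FFF$.

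Next I would set the unit $1:=\mu_\varnothing(1)\in\H\FFF(\bbR^m)$ to be the value of the nullary operation, and the product $a\cdot b:=\mu_2(a\otimes b)$, where in $m=1$ the subscript denotes the operation of an ordered pair $D_1<D_2$ (well defined by the dependence on $\sigma$) and in $m\geq 2$ it is the unique binary operation. Unitality is verified by feeding the empty operation into one input: applying \eqref{eqn:PFAcomposition} to the tuple $(D_1,D_2)$ with the identity inclusion $(D_1)$ in the first slot and $\varnothing$ in the second reduces $\mu_2\circ(\id\otimes 1)$ to the single-disk operation $D_1\subseteq D$, which under the identification \eqref{eqn:higherarityoperations} is $\id$; the left unit law is symmetric. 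Associativity is obtained by realizing both $(a\cdot b)\cdot c$ and $a\cdot(b\cdot c)$ as one and the same ternary operation $\mu_{(D_1,D_2,D_3)}$: nesting $D_1,D_2$ inside an auxiliary disk $E$ disjoint from $D_3$ and applying \eqref{eqn:PFAcomposition} yields $(a\cdot b)\cdot c=\mu_{(D_1,D_2,D_3)}(a,b,c)$, while nesting $D_2,D_3$ inside $E'$ disjoint from $D_1$ yields the same expression for $a\cdot(b\cdot c)$.

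For $m\geq 2$ the commutativity is the only extra ingredient. Since $\mu_{(D_1,D_2)}=\mu_{(D_2,D_1)}=\mu_2$ by the dependence on $n$ alone, the equivariance axiom \eqref{eqn:PFAequivariance} for the transposition reads $\mu_2=\mu_2\circ\tau$, which by the Koszul sign rule \eqref{eqn:braiding} is exactly the graded commutativity $a\cdot b=(-1)^{\vert a\vert\,\vert b\vert}\,b\cdot a$. In $m=1$ the two orderings give genuinely distinct operations, so no such relation holds, consistent with the $\mathbb{E}_1$ versus $\mathbb{E}_m$ dichotomy (see Remark \ref{rem:PFAEmPm}). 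To identify the entire family $\{\mu_\sigma\}$ (resp.\ $\{\mu_n\}$) with the operations of an associative (resp.\ commutative) algebra, I would show by induction on $n$, again using \eqref{eqn:PFAcomposition} with nested disks, that $\mu_{\id}(a_1,\dots,a_n)=a_1\cdots a_n$, and then use equivariance to write a general $\mu_\sigma$ as the iterated product in the order dictated by $\sigma$, up to the Koszul sign from \eqref{eqn:braiding}.

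The main obstacle I anticipate is the careful geometric and sign bookkeeping rather than any conceptual difficulty: one must choose the auxiliary nested disk configurations compatibly so that both bracketings of a triple product, and both inductive steps, reduce to a single common higher operation, and one must track the Koszul signs produced by the braiding when permuting inputs in the commutative case. All of these reductions ultimately rest on the structural inputs already at hand, namely the composition and equivariance axioms for the strict prefactorization algebra $\H\FFF$ and the zig-zag invariance \eqref{eqn:zigzag} that makes the binary product well defined in the first place.
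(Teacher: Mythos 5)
Your proposal is correct and follows essentially the same route as the paper: the dependence claims are exactly the zig-zag observations recorded just before the proposition, and the algebra axioms are deduced from the prefactorization algebra axioms of Definition \ref{def:PFA} applied to the strict structure on $\H\FFF$ guaranteed by Proposition \ref{prop:underlyingstrict}. The paper's own proof is a one-line assertion of this; your version merely spells out the nested-disk and equivariance arguments (including the graded sign in the commutativity for $m\geq 2$) that the paper leaves implicit.
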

\begin{proof}
The associative and unital algebra axioms for $\{\mu_\sigma\,:\, \sigma\in\Sigma_n\,,~n\geq 0\}$
follow directly from the prefactorization algebra axioms in Definition \ref{def:PFA},
and so do the associative, unital and commutative algebra axioms for $\{\mu_n\,:\, n\geq 0\}$.
\end{proof}

In order to make some non-trivial statements about the universal first-order Massey product
of a locally constant prefactorization algebra $\FFF\in\Alg_{\PP_{\bbR^m}}$,
we leverage the fact that $\H\FFF$ is strictly locally constant \eqref{eqn:strictPFAisos}
in order to improve the general choice \eqref{eqn:PFASDR} of a strong deformation retract.
(Recall that changing the strong deformation retract does not change the cohomology class 
$\big[\beta^{(2)}\big]\in \H_\Gamma^1(\H \FFF)$ that defines the universal Massey product.)
\begin{lem}\label{lem:PFASDRimproved}
Given any strong deformation retract \eqref{eqn:PFASDR} for a locally constant
prefactorization algebra $\FFF$ on $M=\bbR^m$, then the components
\begin{subequations}\label{eqn:PFASDRimproved}
\begin{flalign}
\widetilde{i}_D\,&:=\, i_{D}\,\H\FFF\big(\iota_D^{\bbR^m}\big)^{-1}\quad,\\
\widetilde{p}_D\,&:=\, p_{\bbR^m}\,\FFF\big(\iota_{D}^{\bbR^m}\big)\quad,\\
\widetilde{h}_D\,&:=\, h_D - i_D\,\H\FFF\big(\iota_D^{\bbR^m}\big)^{-1}\,p_{\bbR^m}\,\FFF\big(\iota_D^{\bbR^m}\big)\,h_D\quad,
\end{flalign}
for all disks $D\in\mathrm{Disk}_{\bbR^m}$, define a strong deformation retract
\begin{equation}
\begin{tikzcd}
\H\FFF(\bbR^m) \ar[r,shift right=-1ex,"\widetilde{i}"] & \ar[l,shift right=-1ex,"\widetilde{p}"] \FFF \ar[loop,out=-30,in=30,distance=30,swap,"\widetilde{h}"]
\end{tikzcd}
\end{equation}
\end{subequations}
to the constant object $\H\FFF(\bbR^m) := 
\big\{\H\FFF(\bbR^m)\in\Ch\,:\,D\in \mathrm{Disk}_{\bbR^m}\big\}\in\Ch^{\mathrm{Disk}_{\bbR^m}}$.
\end{lem}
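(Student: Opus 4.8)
The plan is to verify directly the seven defining relations \eqref{eqn:deformationretract2} of a strong deformation retract for the triple $(\widetilde{i},\widetilde{p},\widetilde{h})$. Since a morphism in $\Ch^{\mathrm{Disk}_{\bbR^m}}$ is merely a family of cochain maps indexed by disks, with no naturality condition imposed, I would carry out the whole verification componentwise, one disk $D\in\mathrm{Disk}_{\bbR^m}$ at a time. Throughout I would abbreviate $\phi_D := \H\FFF(\iota_D^{\bbR^m})$ and isolate the single identity that carries all the weight: by \eqref{eqn:strictPFA} one has $\phi_D = p_{\bbR^m}\,\FFF(\iota_D^{\bbR^m})\,i_D = \widetilde{p}_D\,i_D$, and by \eqref{eqn:strictPFAisos} the map $\phi_D$ is an isomorphism. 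This is precisely the point at which local constancy of $\FFF$ is used.

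The three purely algebraic relations then come out immediately. First, $\widetilde{i}_D = i_D\phi_D^{-1}$ and $\widetilde{p}_D = p_{\bbR^m}\FFF(\iota_D^{\bbR^m})$ are composites of cochain maps (note that $\phi_D^{\pm 1}$ is automatically a cochain map, since $\H\FFF$ carries the trivial differential), whence $\partial\widetilde{i}_D = 0 = \partial\widetilde{p}_D$. Second, the retraction identity follows from the identity above: $\widetilde{p}_D\,\widetilde{i}_D = (\widetilde{p}_D\, i_D)\,\phi_D^{-1} = \phi_D\,\phi_D^{-1} = \id$.

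The homotopy relations are the substance of the lemma, and the key observation is that the prescribed $\widetilde{h}_D$ is exactly the \emph{corrected} homotopy $\widetilde{h}_D = (\id - \widetilde{i}_D\widetilde{p}_D)\,h_D$. For the contraction relation I would use that $\id - \widetilde{i}_D\widetilde{p}_D$ is a degree-$0$ cochain map, so $\partial\widetilde{h}_D = (\id - \widetilde{i}_D\widetilde{p}_D)\,\partial h_D$, then substitute $\partial h_D = i_D\,p_D - \id$ and simplify using the consequence $\widetilde{i}_D\widetilde{p}_D\,i_D = i_D$ of the identity above; the two $i_D\,p_D$ terms cancel and what remains is exactly $\widetilde{i}_D\widetilde{p}_D - \id$. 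The three side conditions then drop out of the original ones: $\widetilde{h}_D\,\widetilde{i}_D = 0$ because $h_D\,i_D = 0$; $\widetilde{p}_D\,\widetilde{h}_D = 0$ because $\widetilde{p}_D\widetilde{i}_D = \id$ forces $\widetilde{p}_D(\id - \widetilde{i}_D\widetilde{p}_D) = 0$; and $\widetilde{h}_D^2 = 0$ because $h_D(\id - \widetilde{i}_D\widetilde{p}_D) = h_D$, again from $h_D\,i_D = 0$, together with $h_D^2 = 0$.

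There is no genuine obstacle here: once the identity $\widetilde{p}_D\,i_D = \phi_D$ and the rewriting $\widetilde{h}_D = (\id - \widetilde{i}_D\widetilde{p}_D)\,h_D$ are in hand, every relation collapses mechanically. The only things demanding a little care are the Koszul signs in the graded Leibniz rule for $\partial$ when differentiating the degree-$(-1)$ map $\widetilde{h}_D$, and keeping in mind that it is the invertibility of $\phi_D$ -- hence local constancy of $\FFF$ -- that makes $\widetilde{p}_D\widetilde{i}_D = \id$ hold on the nose rather than merely up to homotopy. It is worth noting that the original side condition $p_D\,h_D = 0$ plays no role in the argument.
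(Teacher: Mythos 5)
Your verification is correct and matches the paper's (unspelled-out) argument: the paper simply asserts that the strong deformation retract identities for $(\widetilde{i},\widetilde{p},\widetilde{h})$ follow by a straightforward algebraic check from those of $(i,p,h)$, and your componentwise computation—hinging on $\widetilde{p}_D\, i_D = \H\FFF(\iota_D^{\bbR^m})$ being an isomorphism by local constancy, and on the rewriting $\widetilde{h}_D = (\id-\widetilde{i}_D\widetilde{p}_D)\,h_D$—is exactly that check. Your side observation that the condition $p_D\,h_D=0$ is never used is also accurate.
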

\begin{proof}
It is a straightforward algebraic check to show that 
the strong deformation retract properties \eqref{eqn:deformationretract2}
for $(\widetilde{i},\widetilde{p},\widetilde{h})$ are inherited 
from the strong deformation retract properties of $(i,p,h)$. 
\end{proof}

Using the improved strong deformation retract \eqref{eqn:PFASDRimproved} results in various simplifications
in the computation of a minimal model for $\FFF$, and hence the universal Massey product.
Before explaining these simplifications, let us note that the associated underlying
strict $(\PP_{\bbR^m})_{\infty^{}}$-algebra structure reads as
\begin{flalign}
\widetilde{\beta}\big(\iota_{\und{D}}^D\big) \,=\, 
\,\widetilde{p}_{D}\,\FFF\big(\iota_{\und{D}}^D\big)\,\widetilde{i}_{\und{D}}
\,=\,p_{\bbR^m}\,\FFF\big(\iota^{\bbR^m}_{\und{D}}\big)\,i_{\und{D}}~\bigg(\bigotimes_{i=1}^n\H\FFF\big(\iota_{D_i}^{\bbR^m}\big)^{-1}\bigg)\,=\,\mu_{\und{D}}\quad,
\end{flalign}
i.e.\ it agrees precisely with the cohomology operations from \eqref{eqn:higherarityoperations}.
For the $1$-cocycle representing the universal first-order Massey product
of $\FFF$, we find
\begin{flalign}
\nn \widetilde{\beta}^{(2)}\big(\iota_{\und{D}}^D,\iota_{\und{D}_i}^{D_i}\big)\,&=\,
\widetilde{p}_{D}~\FFF\big(\iota_{\und{D}}^{D}\big)~\widetilde{h}_{D_i}~\FFF\big(\iota_{\und{D}_i}^{D_i}\big)~\widetilde{i}_{(D_1,\dots,\und{D}_i,\dots, D_n)}\\
\nn \,&=\, p_{\bbR^m}\, \FFF\big(\iota_{\und{D}}^{\bbR^m}\big)~h_{D_i}~\FFF\big(\iota_{\und{D}_i}^{D_i}\big)~\widetilde{i}_{(D_1,\dots,\und{D}_i,\dots, D_n)}\\
\nn &\qquad - p_{\bbR^m}\, \FFF\big(\iota_{\und{D}}^{\bbR^m}\big)\, i_{D_i}\,\H\FFF\big(\iota_{D_i}^{\bbR^m}\big)^{-1}\,p_{\bbR^m}\,\FFF\big(\iota_{D_i}^{\bbR^m}\big)\,h_{D_i}
~\FFF\big(\iota_{\und{D}_i}^{D_i}\big)~\widetilde{i}_{(D_1,\dots,\und{D}_i,\dots, D_n)}\\
\,&=\,\widehat{\beta}^{(2)}\big(\iota_{\und{D}}^{\bbR^m},\iota_{\und{D}_i}^{D_i}\big)
- \mu_{\und{D}}~\widehat{\beta}^{(2)}\big(\iota_{D_i}^{\bbR^m},\iota_{\und{D}_i}^{D_i}\big)\quad,\label{eqn:improvedMassey}
\end{flalign}
where $\widehat{\beta}^{(2)}$ is defined from the
$1$-cocycle $\beta^{(2)}$ associated with the original strong deformation retract \eqref{eqn:universalMasseyPFA}
by the commutative diagram
\begin{flalign}\label{eqn:firstMasseyonRm}
\xymatrix@C=7em@R=3em{
\H\FFF(\bbR^m)^{\otimes (n+n_i-1)} \ar[r]^-{\widehat{\beta}^{(2)}\big(\iota_{\und{D}}^{D}, \iota_{\und{D}_i}^{D_i}\big)} ~&~ \H\FFF(\bbR^m)\\\
\ar[u]_-{\cong}^-{\H\FFF(\iota_{D_1}^{\bbR^m})\otimes\cdots\otimes 
\bigotimes\limits_{j=1}^{n_i}\!\big(\H\FFF(\iota_{D_{ij}}^{\bbR^m})\big)\otimes\cdots\otimes\H\FFF(\iota_{D_n}^{\bbR^m})} \H\FFF\big(D_1,\dots,\und{D}_i,\dots, D_n \big) \ar[r]_-{\beta^{(2)}\big(\iota_{\und{D}}^{D}, \iota_{\und{D}_i}^{D_i}\big)} ~&~
\H\FFF(D)\ar[u]^-{\cong}_-{\H\FFF(\iota_D^{\bbR^m})}
}
\end{flalign}
Note that these are the same canonical identifications that we have used for
defining the cohomology operations $\mu_{\und{D}}$ in \eqref{eqn:higherarityoperations}.
\begin{propo} \label{prop: improved Masseys}
The $1$-cocycle $\widetilde{\beta}^{(2)}\in\Gamma^1\big(\H\FFF\big)$ in \eqref{eqn:improvedMassey}
that is associated with the improved strong deformation retract \eqref{eqn:PFASDRimproved} has the 
following properties:
\begin{itemize}
\item[$(1)$] It is constant in the outgoing disk, i.e.\
\begin{flalign}
\widetilde{\beta}^{(2)}\big(\iota_{\und{D}}^D,\iota_{\und{D}_i}^{D_i}\big)\,=\,
\widetilde{\beta}^{(2)}\big(\iota_{\und{D}}^{\bbR^m},\iota_{\und{D}_i}^{D_i}\big)\quad,
\end{flalign}
for all $D\in\mathrm{Disk}_{\bbR^m}$.

\item[$(2)$] It vanishes 
\begin{flalign}
\widetilde{\beta}^{(2)}\big(\iota_{D}^{D^\prime}, \iota_{\und{D}}^{D}\big)\,=\,0
\end{flalign}
on all trees of the form
\begin{flalign}\label{eqn:forktrees}
t\big(\iota_{D}^{D^\prime}, \iota_{\und{D}}^{D}\big)~~=~~
\parbox{1cm}{\begin{tikzpicture}[cir/.style={circle,draw=black,inner sep=0pt,minimum size=2mm},
        poin/.style={rectangle, inner sep=2pt,minimum size=0mm},scale=0.8, every node/.style={scale=0.8}]
\node[poin] (i)   at (0,0) {};
\node[poin] (v1)  at (0,-1) {{$\iota_{D}^{D^\prime}$}};
\node[poin] (v2)  at (0,-2) {{$\iota^{D}_{\und{D}}$}};
\node[poin] (o21)  at (-0.5,-3) {};
\node[poin] (o22)  at (0,-3) {};
\node[poin] (o23)  at (0.5,-3) {};
\draw[thick] (i) -- (v1);
\draw[thick] (v1) -- (v2);
\draw[thick] (v2) -- (o21);
\draw[thick] (v2) -- (o22);
\draw[thick] (v2) -- (o23);
\end{tikzpicture}}\quad.
\end{flalign}

\item[$(3)$] Its value $\widetilde{\beta}^{(2)}\big(\iota_{\und{D}}^D,\iota_{\und{D}_i}^{D_i}\big)$ 
on any $2$-vertex tree is determined by a linear combination of its values
$\widetilde{\beta}^{(2)}\big(\iota_{(D_1,D_2)}^{D}, \iota_{\und{D}_1}^{D_1}\big)$ on trees of the form
\begin{flalign}\label{eqn:complicatedtrees}
t\big(\iota_{(D_1,D_2)}^{D}, \iota_{\und{D}_1}^{D_1}\big)~~=~~
\parbox{2cm}{\begin{tikzpicture}[cir/.style={circle,draw=black,inner sep=0pt,minimum size=2mm},
        poin/.style={rectangle, inner sep=2pt,minimum size=0mm},scale=0.8, every node/.style={scale=0.8}]
\node[poin] (i)   at (0,0) {};
\node[poin] (v1)  at (0,-1) {{$\iota_{(D_1,D_2)}^{D}$}};
\node[poin] (v2)  at (-0.75,-2) {{$\iota^{D_1}_{\und{D}_1}$}};
\node[poin] (o1)  at (0.75,-2) {};
\node[poin] (o21)  at (-0.25,-3) {};
\node[poin] (o22)  at (-0.75,-3) {};
\node[poin] (o23)  at (-1.25,-3) {};
\draw[thick] (i) -- (v1);
\draw[thick] (v1) -- (v2);
\draw[thick] (v1) -- (o1);
\draw[thick] (v2) -- (o21);
\draw[thick] (v2) -- (o22);
\draw[thick] (v2) -- (o23);
\end{tikzpicture}}\quad,
\end{flalign} 
where $\und{D}_1 = (D_{11},\dots,D_{1n_1})$ is a tuple of arbitrary length $n_1\geq 0$.
\end{itemize}
\end{propo}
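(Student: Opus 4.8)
The plan is to read off all three assertions from the explicit formula \eqref{eqn:improvedMassey} for $\widetilde{\beta}^{(2)}$, invoking the cocycle identities \eqref{eqn:genMasseyrelations} only for the third and most substantial part. Parts~(1) and~(2) will be essentially immediate.

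For Part~(1) I would simply observe that the right-hand side of \eqref{eqn:improvedMassey} no longer contains the outgoing disk $D$: the combination $\widetilde{p}_D\,\FFF(\iota_{\und{D}}^D) = p_{\bbR^m}\,\FFF(\iota_{\und{D}}^{\bbR^m})$, obtained from the composition property \eqref{eqn:PFAcomposition}, has already traded $D$ for $\bbR^m$. Hence the value is manifestly independent of $D$ and agrees with its value at $D=\bbR^m$. For Part~(2) I would specialise \eqref{eqn:improvedMassey} to the fork tree \eqref{eqn:forktrees}, whose upper vertex $\iota_D^{D^\prime}$ has arity one. In the notation of \eqref{eqn:improvedMassey} the incoming tuple is then the single disk $(D)$, so the prefactor $\mu_{(D)}$ equals $\id_{\H\FFF(\bbR^m)}$, because the arity-one cohomology operation is the identity (indeed $\H\FFF(\iota_D^D)=\id$ by preservation of identities). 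The two summands on the right-hand side of \eqref{eqn:improvedMassey} therefore coincide and cancel, giving $\widetilde{\beta}^{(2)}(\iota_D^{D^\prime},\iota_{\und{D}}^D)=0$.

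For Part~(3) I would argue by induction on the arity $n$ of the upper vertex. First I would reduce to outgoing disk $\bbR^m$ by Part~(1) and, using the permutation equivariance of $\widetilde{\beta}^{(2)}$ inherited from \eqref{eqn:PFAequivariance}, to trees $t(\iota_{\und{D}}^{\bbR^m},\iota_{\und{D}_1}^{D_1})$ in which the lower vertex is grafted onto the first leg. The case $n=1$ is a fork tree and vanishes by Part~(2); the case $n=2$ is already of the form \eqref{eqn:complicatedtrees}. For $n\geq 3$ I would choose a disk $E$ with $D_1,D_2\subseteq E$ and $E\cap D_k=\emptyset$ for all $k\geq 3$ (thickening a path from $D_1$ to $D_2$ inside $\bbR^m\setminus\bigcup_{k\geq 3}\overline{D_k}$ when $m\geq 2$, and taking $D_2$ adjacent to $D_1$ when $m=1$), so that the upper operation factors in the prefactorization operad as $\iota_{\und{D}}^{\bbR^m}=\gamma_{\scriptscriptstyle{(1)}}\big(\iota_{(E,D_3,\dots,D_n)}^{\bbR^m},\iota_{(D_1,D_2)}^{E}\big)$ along the $E$-leg. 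Applying the cocycle identity \eqref{eqn:genMasseyrelations1} to the three composable operations $\iota_{(E,D_3,\dots,D_n)}^{\bbR^m}$, $\iota_{(D_1,D_2)}^{E}$ and $\iota_{\und{D}_1}^{D_1}$, and solving for the value on our $n$-ary tree, yields
\begin{flalign}
\nn \widetilde{\beta}^{(2)}\big(\iota_{\und{D}}^{\bbR^m},\iota_{\und{D}_1}^{D_1}\big)
&=\widetilde{\beta}^{(2)}\big(\iota_{(E,D_3,\dots,D_n)}^{\bbR^m},\iota_{(\und{D}_1,D_2)}^{E}\big)
+\mu_{(E,D_3,\dots,D_n)}\,\widetilde{\beta}^{(2)}\big(\iota_{(D_1,D_2)}^{E},\iota_{\und{D}_1}^{D_1}\big)\\
&\qquad-\widetilde{\beta}^{(2)}\big(\iota_{(E,D_3,\dots,D_n)}^{\bbR^m},\iota_{(D_1,D_2)}^{E}\big)\,\mu_{\und{D}_1}\quad,
\end{flalign}
where $\mu_{(E,D_3,\dots,D_n)}$ and $\mu_{\und{D}_1}$ are the strict structure maps \eqref{eqn:strictPFA}. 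The middle term is already a value on a tree of the form \eqref{eqn:complicatedtrees}, while the first and third terms are values on $2$-vertex trees whose upper vertex has arity $n-1$ with the lower vertex on the first leg; by the induction hypothesis these are linear combinations of values on trees \eqref{eqn:complicatedtrees}, pre- or post-composed with strict structure maps. This closes the induction.

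The step I expect to be the main obstacle is precisely this inductive reduction. One must verify that the operadic factorisation of $\iota_{\und{D}}^{\bbR^m}$ through an intermediate disk $E$ is geometrically realisable---this is where local constancy and the dimension of $\bbR^m$ enter, via the existence of $E$---and then check the bookkeeping in \eqref{eqn:genMasseyrelations1}: that exactly one of the four terms is the value one wishes to isolate, one is already of the target form \eqref{eqn:complicatedtrees}, and the remaining two have strictly smaller upper arity with the lower vertex still on the first leg, so that the induction hypothesis applies. Tracking the Koszul signs produced by the permutation equivariance is routine and does not affect the ``linear combination'' conclusion.
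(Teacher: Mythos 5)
Your proposal is correct and follows essentially the same strategy as the paper: parts (1) and (2) are read off from \eqref{eqn:improvedMassey} exactly as in the paper's proof, and part (3) rests on the same key idea of factorizing the top vertex through a geometrically chosen intermediate disk and applying the cocycle identity \eqref{eqn:genMasseyrelations1} to reduce its arity. The only (immaterial) difference is the grouping: you merge $D_1,D_2$ into $E$ so that the middle term is already of the target form \eqref{eqn:complicatedtrees} and induct on the arity of the top vertex, whereas the paper splits off $D_n$ into a $2$-ary top vertex and iterates on the arity of the intermediate vertex; your treatment of $m=1$ via choosing $D_2$ adjacent to $D_1$ after the permutation reduction is a valid substitute for the paper's ordered/reverse-ordered case analysis.
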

\begin{proof}
Item (1) is a direct consequence of the explicit formula \eqref{eqn:improvedMassey}, which is manifestly 
independent of the disk $D$. Item (2) is proven by a short calculation using the same formula
\begin{flalign}
\widetilde{\beta}^{(2)}\big(\iota_{D}^{D^\prime}, \iota_{\und{D}}^{D}\big)
\,=\,\widehat{\beta}^{(2)}\big(\iota_{D}^{\bbR^m},\iota_{\und{D}}^{D}\big)
- \mu_{D}~\widehat{\beta}^{(2)}\big(\iota_{D}^{\bbR^m},\iota_{\und{D}}^{D}\big) \,=\,0\quad,
\end{flalign}
where in the last step we used that $\mu_{D}=\id$ as a consequence of Proposition \ref{prop:strictstructuremaps}.
\sk

To prove item (3), let us first observe that, as a consequence of item (2),
the $1$-cocycle $\widetilde{\beta}^{(2)}$ can only be non-vanishing
on trees \eqref{eqn:universalMasseyPFAtrees} with at least one free edge at the top vertex.
So the statement we have to prove is that its value on any such tree can be determined
as a linear combination of its values on trees of the form \eqref{eqn:complicatedtrees}. Note that the latter have
precisely one free edge at the top vertex that moreover points to the right.
The proof is slightly different in dimensions $m=1$ 
and $m\geq 2$, so we will treat these cases separately.
\sk

\noindent \textit{\underline{Item $(3)$ in dimension $m\geq 2$:}} Using permutation actions, 
it suffices to consider trees of the form
\begin{flalign}\label{eqn:lefttree}
t\big(\iota_{(D_1,\dots,D_n)}^{D}, \iota_{\und{D}_1}^{D_1}\big)~~=~~
\parbox{2.5cm}{\begin{tikzpicture}[cir/.style={circle,draw=black,inner sep=0pt,minimum size=2mm},
        poin/.style={rectangle, inner sep=2pt,minimum size=0mm},scale=0.8, every node/.style={scale=0.8}]
\node[poin] (i)   at (0,0) {};
\node[poin] (v1)  at (0,-1) {{$\iota_{(D_1,\dots,D_n)}^{D}$}};
\node[poin] (v2)  at (-1,-2) {{$\iota^{D_1}_{\und{D}_1}$}};
\node[poin] (o12)  at (0,-2) {};
\node[poin] (o13)  at (0.75,-2) {};
\node[poin] (o14)  at (1.5,-2) {};
\node[poin] (o21)  at (-1.5,-3) {};
\node[poin] (o22)  at (-1,-3) {};
\node[poin] (o23)  at (-0.5,-3) {};
\draw[thick] (i) -- (v1);
\draw[thick] (v1) -- (v2);
\draw[thick] (v1) -- (o12);
\draw[thick] (v1) -- (o13);
\draw[thick] (v1) -- (o14);
\draw[thick] (v2) -- (o21);
\draw[thick] (v2) -- (o22);
\draw[thick] (v2) -- (o23);
\end{tikzpicture}}\quad,
\end{flalign} 
where the bottom vertex connects to the left disk $D_1$. Choosing any disk $D^\prime\subseteq\bbR^m$
such that  $\bigcup_{i=1}^{n-1}D_i \subset D^\prime \subset D\setminus D_n$,
we can factorize at the top vertex and obtain the $3$-vertex tree
\begin{flalign}
t\big(\iota_{(D^\prime,D_n)}^{D}, \iota^{D^\prime}_{(D_1,\dots,D_{n-1})}, \iota_{\und{D}_1}^{D_1}\big)\quad.
\end{flalign}
Applying the cocycle condition \eqref{eqn:genMasseyrelations1} for $\widetilde{\beta}^{(2)}$ 
to this tree gives the identity
\begin{flalign}
\nn \widetilde{\beta}^{(2)}\big(\iota_{(D_1,\dots,D_n)}^{D}, \iota_{\und{D}_1}^{D_1}\big)\,&=\,
\widetilde{\beta}^{(2)}\big(\iota^{D}_{(D^\prime,D_n)}, \iota^{D^\prime}_{(\und{D}_1,\dots,D_{n-1})}\big) 
+ \mu_{(D^\prime,D_n)}~\widetilde{\beta}^{(2)}\big(\iota^{D^\prime}_{(D_1,\dots,D_{n-1})},\iota^{D_1}_{\und{D}_1}\big) \\[3pt]
&\qquad - \widetilde{\beta}^{(2)}\big(\iota^{D}_{(D^\prime,D_n)},\iota^{D^\prime}_{(D_1,\dots,D_{n-1})}\big)~\mu_{\und{D_1}}
\end{flalign}
that expresses the value of  $\widetilde{\beta}^{(2)}$ on the tree \eqref{eqn:lefttree} 
as a linear combination of its values on trees whose top vertex has arity $n-1$ or $2$.
Applying this procedure iteratively allow us to reduce the arity of the top vertex to $2$, which completes the proof.
\sk

\noindent \textit{\underline{Item $(3)$ in dimension $m=1$:}} Using permutation actions, 
it suffices to consider trees \eqref{eqn:universalMasseyPFAtrees} 
for which the $(n+n_i-1)$-tuple $(D_1,\dots,\und{D}_i,\dots,D_n)$ is either ordered or reverse-ordered 
with respect to the orientation of $\bbR^1$. Choosing any disk (i.e.\ interval) $D^\prime\subseteq\bbR^1$
such that  $\bigcup_{j=i}^{n}D_j \subset D^\prime \subset D\setminus \bigcup_{j=1}^{i-1}D_j$,
we can factorize at the top vertex and obtain the $3$-vertex tree
\begin{flalign}
t\big(\iota_{(D_1,\dots,D_{i-1},D^\prime)}^{D}, \iota^{D^\prime}_{(D_i,\dots,D_{n})}, \iota_{\und{D}_i}^{D_i}\big)\quad.
\end{flalign}
Applying the cocycle condition \eqref{eqn:genMasseyrelations1} for $\widetilde{\beta}^{(2)}$ 
to this tree gives the identity
\begin{flalign}
\nn \widetilde{\beta}^{(2)}\big(\iota_{(D_1,\dots,D_n)}^{D}, \iota_{\und{D}_i}^{D_i}\big)\,&=\,
\widetilde{\beta}^{(2)}\big(\iota^{D}_{(D_1,\dots,D_{i-1},D^\prime)}, \iota^{D^\prime}_{(\und{D}_i,\dots,D_{n})}\big) 
+ \mu_{(D_1,\dots,D_{i-1},D^\prime)}~\widetilde{\beta}^{(2)}\big(\iota^{D^\prime}_{(D_i,\dots,D_{n})}, \iota_{\und{D}_i}^{D_i}\big) \\[3pt]
&\qquad - \widetilde{\beta}^{(2)}\big(\iota_{(D_1,\dots,D_{i-1},D^\prime)}^{D}, \iota^{D^\prime}_{(D_i,\dots,D_{n})}\big)~\mu_{\und{D}_i}
\end{flalign}
that expresses the value of  $\widetilde{\beta}^{(2)}$ on the tree \eqref{eqn:universalMasseyPFAtrees}
as a linear combination of its values on trees whose top vertex has free edges only to the left or to the right.
If the free edges are to the left, we use again permutation actions to reverse the order and bring all the
free edges to the right. Using now the same argument as in dimension $m\geq 2$,
we can reduce iteratively the arity of the top vertex to arrive at arity $2$, which completes the proof.
\end{proof}

\begin{rem}\label{rem:computingMassey}
The relevance of this proposition is that it allows us to simplify
the computation of the universal first-order Massey product
$\big[\widetilde{\beta}^{(2)}\big]\in\H^1_\Gamma\big(\H\FFF\big)$
by working with the improved strong deformation retract \eqref{eqn:PFASDRimproved}.
Item (2) states that the value of the representative $1$-cocycle $\widetilde{\beta}^{(2)}$
is zero on all trees of the form \eqref{eqn:forktrees}, hence one does 
not have to consider these trees. 
Item (3) shows that all operations $\widetilde{\beta}^{(2)}\big(\iota_{\und{D}}^D,\iota_{\und{D}_i}^{D_i}\big)$
can be expressed using 
trees of the form \eqref{eqn:complicatedtrees}, i.e.\ with a single free edge on 
the top vertex that points to the right, so any explicit computation 
should start with understanding the values of $\widetilde{\beta}^{(2)}$ on such trees. 
In the case one finds that these values are all zero, then item (3) implies that
the cohomology class $\big[\widetilde{\beta}^{(2)}\big]=0$ vanishes, i.e.\ 
the universal first-order Massey product is trivial. It is important
to stress that the opposite conclusion is in general not true:
If the value of $\widetilde{\beta}^{(2)}$ is non-zero on some of the trees
\eqref{eqn:complicatedtrees}, it still can happen that $\big[\widetilde{\beta}^{(2)}\big]=0$
if there exists a coboundary $\partial_\Gamma\chi$ that transforms
these components to zero via \eqref{eqn:gaugetransformationPFA}. 
In order to preserve the constancy property in item (1),
the $0$-cochain $\chi\in\Gamma^0\big(\H\FFF\big)$ must satisfy the condition
\begin{flalign}\label{eqn:improvedgauge1}
\chi\big(\iota^{\bbR^m}_{(D_1,\dots,\und{D}_i,\dots,D_n)}\big) -
\chi\big(\iota^{D}_{(D_1,\dots,\und{D}_i,\dots,D_n)}\big) 
\,=\,
\chi\big(\iota^{\bbR^m}_{\und{D}}\big)\,\mu_{\und{D}_i} -
\chi\big(\iota^{D}_{\und{D}}\big)\,\mu_{\und{D}_i}\quad,
\end{flalign}
for all trees \eqref{eqn:universalMasseyPFAtrees},
and to preserve the vanishing property in item (2), it must satisfy
\begin{flalign}\label{eqn:improvedgauge2}
\chi\big(\iota^{D^\prime}_{\und{D}}\big) - \chi\big(\iota^{D}_{\und{D}}\big) \,=\,\chi\big(\iota_D^{D^\prime}\big)\,\mu_{\und{D}}\quad,
\end{flalign}
for all trees of the form \eqref{eqn:forktrees}.
Using the identity \eqref{eqn:improvedgauge2} twice, with $D^\prime = \bbR^m$ and 
then also with $\und{D}$ replaced by $(D_1,\dots,\und{D}_i,\dots,D_n)$ one 
obtains that \eqref{eqn:improvedgauge2} implies \eqref{eqn:improvedgauge1}.
\end{rem}

\subsection{\label{subsec:2dinvariant}A simple binary invariant in $m=2$ dimensions}
In this subsection we consider a locally constant
prefactorization algebra $\FFF\in \Alg_{\PP_{\bbR^2}}$ on the $2$-dimensional 
Cartesian space $\bbR^2$. We construct a binary invariant 
$L \in \big[\H\FFF(\bbR^2)^{\otimes 2},\H\FFF(\bbR^2)\big]^{-1}$
of $\FFF$ which, when non-trivial $L\neq 0$, implies non-triviality of the 
universal first-order Massey product $\big[\widetilde{\beta}^{(2)}\big]\neq 0$.
We will show that the invariant $L$ has pleasant algebraic properties,
namely that it defines a degree $-1$ Poisson bracket (i.e.\ a $\mathbb{P}_2$-algebra structure) 
on the associative, unital and commutative algebra $\H\FFF(\bbR^2)$ from Proposition \ref{prop:strictstructuremaps}.
\begin{rem}\label{rem:PFAEmPm}
We believe, but do not know how to prove, that our invariant $L$ captures in dimension $m=2$ 
the higher algebraic structure that is expected by the following more abstract reasoning: By a result of Lurie 
\cite[Theorem 5.4.5.9]{HigherAlgebra}, see also \cite{AyalaFrancis} and \cite{CFM}, it is known that locally constant 
prefactorization algebras on $\bbR^m$ are equivalent to $\mathbb{E}_m$-algebras.
Furthermore, the $\mathbb{E}_m$-operads are formal in dimension $m\geq 2$ by \cite{Tamarkin, Fresse},
hence $\mathbb{E}_m$-algebras can be described equivalently in terms of $\H\mathbb{E}_m\simeq \mathbb{P}_m$-algebras
for $m\geq 2$. The degree $-1$ Poisson bracket that we find 
should be related to this $\mathbb{P}_2$-algebra structure.
We also expect that there exists a generalization of our invariant to $m\geq 3$ dimensions
which detects the degree $1-m$ Poisson bracket of the $\mathbb{P}_m$-operad.
However, as seen by a simple degree count, this requires working with
the more involved higher-order Massey products \cite{Dimitrova,UniversalMassey3}, which 
lies beyond the scope of our paper. We hope to come back to this issue in a future work.
\end{rem}

To define the invariant $L \in \big[\H\FFF(\bbR^2)^{\otimes 2},\H\FFF(\bbR^2)\big]^{-1}$,
let us consider any configuration of disks $D_1, D_2, D_u, D_d, \widetilde{D} \in \mathrm{Disk}_{\bbR^2}$
in some $D \in \mathrm{Disk}_{\bbR^2}$, with the property that 
$D_1\sqcup D_2 \subset D_u$, $D_1\sqcup D_2 \subset D_d$, 
$D_u \cap \widetilde{D} = D_d \cap \widetilde{D} = \emptyset$ 
and $D_u \cup D_d$ forming an annulus around $\widetilde{D}$, as depicted in the following picture:
\begin{flalign}\label{eqn:Lpicture}
\parbox{5cm}{\begin{tikzpicture}
\draw [red, thick, fill=red!10, fill opacity=0.5] plot [smooth cycle, tension=.75] coordinates {(-2.5,0) (-1.5,-.75) (-.55,.65) (.55,.65) (1.5,-.75) (2.5,0) (1.2,1.75) (-1.2,1.75)};
\node[red] at (0,1.4) {$D_u$};
\draw [blue, thick, fill=blue!20, fill opacity=0.3] plot [smooth cycle, tension=.75] coordinates {(-2.4,0) (-1.5,.75) (-.55,-.65) (.55,-.65) (1.5,.75) (2.4,0) (1.2,-1.75) (-1.2,-1.75)};
\node[blue] at (0,-1.4) {$D_d$};
\draw[thick, fill=black!10, fill opacity=0.5] (0,0) circle (16pt);
\node at (0,0.1) {$\widetilde{D}$};
\draw[thick, fill=black!10, fill opacity=0.5] (-1.75,0) circle (12pt);
\node at (-1.75,0) {$D_1$};
\draw[thick, fill=black!10, fill opacity=0.5] (1.75,0) circle (12pt);
\node at (1.75,0) {$D_2$};
\end{tikzpicture}
}
\end{flalign}
To this choice we assign the linear combination
\begin{flalign}\label{eqn:Lcombination}
L \,:=\, \widetilde{\beta}^{(2)}\big(\iota_{(D_u, \widetilde{D})}^{D}, \iota_{D_1}^{D_u}\big) 
- \widetilde{\beta}^{(2)}\big(\iota_{(D_u,\widetilde{D})}^{D}, \iota_{D_2}^{D_u}\big) 
+ \widetilde{\beta}^{(2)}\big(\iota_{(D_d,\widetilde{D})}^{D}, \iota_{D_2}^{D_d}\big)
- \widetilde{\beta}^{(2)}\big(\iota_{(D_d,\widetilde{D})}^{D}, \iota_{D_1}^{D_d}\big) 
\end{flalign}
of degree $-1$ linear maps $\H\FFF(\bbR^2)^{\otimes 2}\to \H\FFF(\bbR^2)$. 
Informally, one should think of $L$ as being defined by moving the disk $D_1$
clockwise once around $\widetilde{D}$.
\begin{propo}\label{prop:Linvariant}
$L$ is invariant under the transformations $\widetilde{\beta}^{(2)}\to 
\widetilde{\beta}^{(2)} +\partial_\Gamma\chi$, for all 
$0$-cochains $\chi\in \Gamma^0(\H\FFF)$ that satisfy the conditions in Remark \ref{rem:computingMassey}.
Hence, $L$ depends only on the cohomology class 
$\big[\widetilde{\beta}^{(2)}\big]\in \H^1_\Gamma(\H\FFF)$ and it 
can be computed using any representative $1$-cocycle 
that satisfies the properties from Proposition \ref{prop: improved Masseys}.
Furthermore, if $L\neq 0$ is non-trivial, then the cohomology class $\big[\widetilde{\beta}^{(2)}\big]\neq 0$
is non-trivial too.
\end{propo}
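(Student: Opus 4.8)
The plan is to derive the two concluding assertions from the first, so the heart of the matter is to show that $L$ in \eqref{eqn:Lcombination} is unchanged when the representing cocycle is replaced by $\widetilde{\beta}^{\prime(2)} = \widetilde{\beta}^{(2)} + \partial_\Gamma\chi$ for an admissible $0$-cochain $\chi\in\Gamma^0(\H\FFF)$ in the sense of Remark \ref{rem:computingMassey}. Once this is established the class-dependence is immediate, since any two representatives satisfying the properties of Proposition \ref{prop: improved Masseys} differ by exactly such a $\partial_\Gamma\chi$; and the non-triviality implication follows by contraposition, because if $\big[\widetilde{\beta}^{(2)}\big]=0$ then $\widetilde{\beta}^{(2)}$ and the zero cocycle (which trivially satisfies Proposition \ref{prop: improved Masseys}) differ by a $\partial_\Gamma\chi$ whose $\chi$ is forced to obey the admissibility constraints, whence gauge invariance gives $L=0$.

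To prove gauge invariance I would compute the variation $\delta_\chi L$ term by term by feeding the four $2$-vertex trees of \eqref{eqn:Lcombination} into the explicit formula \eqref{eqn:gaugetransformationPFA}, and then sort the contributions into three families. First, the \emph{top-contraction} terms $-\chi\big(\iota^{D}_{(D_j,\widetilde{D})}\big)$, obtained by composing the bottom unary vertex into the top binary vertex, cancel in pairs: the pattern $(D_1,\widetilde{D})$ is produced by the first and fourth terms of $L$ with opposite overall sign, and likewise $(D_2,\widetilde{D})$ by the second and third. Second, the terms carrying $\chi$ on the binary \emph{top} vertex, namely $\chi\big(\iota^{D}_{(D_u,\widetilde{D})}\big)$ and $\chi\big(\iota^{D}_{(D_d,\widetilde{D})}\big)$, cancel within the $D_u$-pair and the $D_d$-pair respectively, after using that under the canonical identifications \eqref{eqn:higherarityoperations} the single-disk structure maps act as the identity ($\mu_{D_j}=\id$, Proposition \ref{prop:strictstructuremaps}).

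What survives is the family of terms carrying $\chi$ on the bottom \emph{unary} vertex, which by commutativity of the product (Proposition \ref{prop:strictstructuremaps}(2), so that every binary $\mu_{(D_*,\widetilde{D})}$ equals $\mu_2$) assembles into
\[
\delta_\chi L \;=\; \mu_2\,\Big(\big(\chi(\iota_{D_1}^{D_u}) - \chi(\iota_{D_2}^{D_u}) + \chi(\iota_{D_2}^{D_d}) - \chi(\iota_{D_1}^{D_d})\big)\otimes\id\Big)\,.
\]
To kill the remaining bracket I would invoke the admissibility condition \eqref{eqn:improvedgauge2} in its single-disk case $\und{D}=(D_j)$: since $\mu_{D_j}=\id$, it reads $\chi(\iota_{D_j}^{D_*}) = \chi(\iota_{D_j}^{\bbR^2}) - \chi(\iota_{D_*}^{\bbR^2})$ under the identifications. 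Substituting this into each of the four terms, the $\chi(\iota_{D_1}^{\bbR^2})$ and $\chi(\iota_{D_2}^{\bbR^2})$ contributions telescope against one another, and the $\chi(\iota_{D_u}^{\bbR^2})$, $\chi(\iota_{D_d}^{\bbR^2})$ contributions cancel, leaving $0$. Hence $\delta_\chi L=0$.

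I expect the main obstacle to be organizational rather than conceptual: keeping track of signs and of which tensor slot each map acts on across the four trees, and—most importantly—being careful to use precisely the constraints \eqref{eqn:improvedgauge2} forced by preserving items (1) and (2) of Proposition \ref{prop: improved Masseys}, rather than the full gauge freedom of \eqref{eqn:gaugetransformationPFA}. The one genuinely substantive input is that the four single-disk values of $\chi$, compared through the canonical isomorphisms \eqref{eqn:strictPFAisos}, satisfy the additive (cocycle-like) relation that makes the final alternating four-term sum telescope to zero; everything else is bookkeeping.
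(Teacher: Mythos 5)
Your proposal is correct and follows essentially the same route as the paper's proof: reduce the variation $\delta_\chi L$ via the explicit coboundary formula \eqref{eqn:gaugetransformationPFA} to the alternating four-term sum $\mu_2\big((\chi(\iota_{D_1}^{D_u})-\chi(\iota_{D_2}^{D_u})+\chi(\iota_{D_2}^{D_d})-\chi(\iota_{D_1}^{D_d}))\otimes\id\big)$ and kill it with the admissibility condition \eqref{eqn:improvedgauge2}. The only (immaterial) difference is that you telescope the unary values of $\chi$ through $\bbR^2$ whereas the paper applies \eqref{eqn:improvedgauge2} for the inclusions $D_u\subset D$ and $D_d\subset D$; the derivation of the non-triviality claim by contraposition also matches the paper.
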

\begin{proof}
We have to show that 
\begin{flalign}\label{eqn:Lpuregauge}
\partial_\Gamma\chi\big(\iota_{(D_u, \widetilde{D})}^{D}, \iota_{D_1}^{D_u}\big) 
- \partial_\Gamma\chi\big(\iota_{(D_u,\widetilde{D})}^{D}, \iota_{D_2}^{D_u}\big) 
+ \partial_\Gamma\chi\big(\iota_{(D_d,\widetilde{D})}^{D}, \iota_{D_2}^{D_d}\big)
- \partial_\Gamma\chi\big(\iota_{(D_d,\widetilde{D})}^{D}, \iota_{D_1}^{D_d}\big)
\,=\,0
\end{flalign}
for every $0$-cochain $\chi\in \Gamma^0(\H\FFF)$ that satisfies 
the conditions in Remark \ref{rem:computingMassey}.
Using the explicit formula for $\partial_\Gamma\chi$ from \eqref{eqn:gaugetransformationPFA},
this simplifies to 
\begin{flalign}
\mu_2\,\Big(\Big(
\chi(\iota_{D_1}^{D_u})
-\chi(\iota_{D_2}^{D_u})
+\chi(\iota_{D_2}^{D_d})
-\chi(\iota_{D_1}^{D_d})
\Big)\otimes\id
\Big) \,=\,0\quad,
\end{flalign}
where $\mu_2$ denotes the binary commutative multiplication from Proposition \ref{prop:strictstructuremaps}.
The latter holds true by using the conditions \eqref{eqn:improvedgauge2} on $\chi$ for the inclusions
$D_u\subset D$ and $D_d\subset D$.
\sk

To prove the last claim, note that $\big[\widetilde{\beta}^{(2)}\big]= 0$ implies 
$\widetilde{\beta}^{(2)} = \partial_\Gamma\chi$, hence $L=0$ by \eqref{eqn:Lpuregauge}.
\end{proof}

\begin{lem}\label{lem:Lindependentdisks}
$L$ is independent of the choice of disk configuration as in \eqref{eqn:Lpicture}.
\end{lem}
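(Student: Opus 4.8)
The plan is to prove that the combination $L$ in \eqref{eqn:Lcombination} is unchanged when any of the disks $D, D_1, D_2, D_u, D_d, \widetilde{D}$ is varied within the constraints of \eqref{eqn:Lpicture}. I would first dispose of the outgoing disk $D$: by item~(1) of Proposition~\ref{prop: improved Masseys} each summand $\widetilde{\beta}^{(2)}(\iota_{\und{D}}^{D},\iota_{\und{D}_i}^{D_i})$ is independent of $D$, so I may set $D=\bbR^2$ from now on. For the remaining five disks my strategy is to establish invariance of $L$ under every elementary \emph{nested move}, namely the replacement of a single disk by a sub-disk (and, by reversal, a super-disk) that keeps the configuration valid, and then to connect any two valid configurations by a finite chain of such moves.

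Consider first the four disks $D_1, D_2, D_u, D_d$, noting that each occurs in exactly two of the four summands of \eqref{eqn:Lcombination}, and always with opposite signs. To shrink, say, $D_1\rightsquigarrow D_1'\subset D_1$ inside $T_1=\widetilde{\beta}^{(2)}(\iota_{(D_u,\widetilde{D})}^{\bbR^2},\iota_{D_1}^{D_u})$, I form the three-vertex chain tree \eqref{eqn:genfirstMasseyrelations1} with vertices $\iota_{(D_u,\widetilde{D})}^{\bbR^2}$, $\iota_{D_1}^{D_u}$, $\iota_{D_1'}^{D_1}$ and apply the cocycle identity \eqref{eqn:genMasseyrelations1}. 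Two of the four resulting terms simplify: $\widetilde{\beta}^{(2)}(\iota_{D_1}^{D_u},\iota_{D_1'}^{D_1})$ is a fork tree and vanishes by item~(2) of Proposition~\ref{prop: improved Masseys}, while $\H\FFF(\iota_{D_1'}^{D_1})=\id$ since unary structure maps act as the identity (Proposition~\ref{prop:strictstructuremaps}). This leaves $T_1'-T_1=\widetilde{\beta}^{(2)}(\iota_{(D_1,\widetilde{D})}^{\bbR^2},\iota_{D_1'}^{D_1})$, where I used $\gamma_{\scriptscriptstyle{(1)}}(\iota_{(D_u,\widetilde{D})}^{\bbR^2},\iota_{D_1}^{D_u})=\iota_{(D_1,\widetilde{D})}^{\bbR^2}$. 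The very same correction arises when shrinking $D_1$ inside $T_4$ through $D_d$, because $\gamma_{\scriptscriptstyle{(1)}}(\iota_{(D_d,\widetilde{D})}^{\bbR^2},\iota_{D_1}^{D_d})=\iota_{(D_1,\widetilde{D})}^{\bbR^2}$ as well; since $T_1$ and $T_4$ enter $L$ with opposite signs, the two corrections cancel. The disks $D_2$ (in $T_2, T_3$) and $D_u, D_d$ (in $T_1, T_2$ and $T_3, T_4$ respectively) are treated identically, shrinking $D_u$ by inserting a unary contraction $\iota_{D_u'}^{D_u}$ between the top vertex and the inner disk and again using the fork-tree vanishing; in each case the correction is independent of the inner disk and therefore cancels against its oppositely-signed partner.

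The disk $\widetilde{D}$ is the common second input of all four summands and requires the other cocycle identity. Shrinking $\widetilde{D}\rightsquigarrow\widetilde{D}'\subset\widetilde{D}$, I form for each band $X\in\{u,d\}$ and inner disk $j\in\{1,2\}$ the three-vertex tree \eqref{eqn:genfirstMasseyrelations2} with top $\iota_{(D_X,\widetilde{D})}^{\bbR^2}$ and bottom vertices $\iota_{D_j}^{D_X}$ (first slot) and $\iota_{\widetilde{D}'}^{\widetilde{D}}$ (second slot), and apply \eqref{eqn:genMasseyrelations2}. Using that the unary maps act as the identity, this yields a correction of the form $B_j-C_X$ with $B_j=\widetilde{\beta}^{(2)}(\iota_{(D_j,\widetilde{D})}^{\bbR^2},\iota_{\widetilde{D}'}^{\widetilde{D}})$ depending only on the inner disk and $C_X=\widetilde{\beta}^{(2)}(\iota_{(D_X,\widetilde{D})}^{\bbR^2},\iota_{\widetilde{D}'}^{\widetilde{D}})$ depending only on the band. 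In the alternating sum \eqref{eqn:Lcombination} each $B_j$ and each $C_X$ occurs exactly once with each sign, so the total correction is zero and $L$ is unchanged.

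It remains to join any two valid configurations by nested moves. Since enlarging is inverse to shrinking, $L$ is invariant under nested moves in both directions. All valid configurations share the same linking type --- an annulus $D_u\cup D_d$ encircling $\widetilde{D}$ with $D_1,D_2$ lying in $D_u\cap D_d$ --- so the space of such configurations is connected; a continuous deformation between two of them can be covered by finitely many small steps, each realised by enlarging the relevant disk to one containing both its initial and final position and then shrinking, the open constraints of \eqref{eqn:Lpicture} being preserved throughout. I expect this last, topological step --- keeping the moves inside the valid configurations, in particular preserving the annulus condition, where the two-dimensional zig-zag connectivity is used --- to be the main obstacle, whereas the algebraic cancellations above are essentially forced by the sign pattern of \eqref{eqn:Lcombination} together with Proposition~\ref{prop: improved Masseys} and the cocycle identities \eqref{eqn:genMasseyrelations}.
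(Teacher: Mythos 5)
Your proposal is correct and follows essentially the same route as the paper's proof: reduce to elementary enlarge/shrink moves of a single disk, dispose of $D$ via Proposition~\ref{prop: improved Masseys}~(1), and for each of the remaining disks apply the cocycle identities \eqref{eqn:genMasseyrelations1} (for $D_1,D_2,D_u,D_d$) resp.\ \eqref{eqn:genMasseyrelations2} (for $\widetilde{D}$) to a suitable $3$-vertex tree, so that the resulting correction terms cancel in pairs thanks to the sign pattern of \eqref{eqn:Lcombination}. You are somewhat more explicit than the paper on two points it leaves implicit --- the fork-tree vanishing from Proposition~\ref{prop: improved Masseys}~(2) and the connectivity argument joining two arbitrary valid configurations by nested moves --- but the substance is identical.
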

\begin{proof}
It suffices to show that $L$ does not change when replacing any of 
the individual disks $D_1,D_2,D_u,D_d,\widetilde{D}, D$ by a bigger one such that the resulting disk configuration
is still of the form \eqref{eqn:Lpicture}.
Independence of the choice of big disk $D\subseteq \bbR^2$ is a direct consequence of 
Proposition \ref{prop: improved Masseys} (1). Next, let us consider $D_1\subset D_1^\prime\subset D_u\cap D_d$.
The difference $L^\prime - L$ of \eqref{eqn:Lcombination} evaluated for the two different disk configurations then reads as
\begin{flalign}
\nn L^\prime - L \,&=\,
\widetilde{\beta}^{(2)}\big(\iota_{(D_u, \widetilde{D})}^{D}, \iota_{D^\prime_1}^{D_u}\big) 
- \widetilde{\beta}^{(2)}\big(\iota_{(D_d,\widetilde{D})}^{D}, \iota_{D_1^\prime}^{D_d}\big) 
-\widetilde{\beta}^{(2)}\big(\iota_{(D_u, \widetilde{D})}^{D}, \iota_{D_1}^{D_u}\big) 
+ \widetilde{\beta}^{(2)}\big(\iota_{(D_d,\widetilde{D})}^{D}, \iota_{D_1}^{D_d}\big) \\
\,&=\,- \widetilde{\beta}^{(2)}\big(\iota_{(D_1^\prime,\widetilde{D})}^{D}, \iota_{D_1}^{D_1^\prime}\big)
+ \widetilde{\beta}^{(2)}\big(\iota_{(D_1^\prime,\widetilde{D})}^{D}, \iota_{D_1}^{D_1^\prime}\big)\,=\,0\quad,
\end{flalign}
where in the second step we used the cocycle conditions \eqref{eqn:genMasseyrelations1} for the trees
$t\big(\iota^{D}_{(D_u,\widetilde{D})},\iota^{D_u}_{D_1^\prime},\iota^{D_1^\prime}_{D_1}\big)$ and
$t\big(\iota^{D}_{(D_d,\widetilde{D})},\iota^{D_d}_{D_1^\prime},\iota^{D_1^\prime}_{D_1}\big)$.
Independence under the replacements 
$D_2\subset D_2^\prime\subset D_u\cap D_d$, $D_u\subset D_u^\prime\subset D\setminus \widetilde{D}$ 
and $D_d\subset D_d^\prime\subset D\setminus \widetilde{D}$ is shown similarly. For 
$\widetilde{D}\subset \widetilde{D}^\prime\subset D\setminus (D_u\cup D_d)$,
instead of \eqref{eqn:genMasseyrelations1} one uses the cocycle conditions \eqref{eqn:genMasseyrelations2}.
\end{proof}

\begin{theo}\label{theo:LPoisson}
The map $L: \H\FFF(\bbR^2)\otimes \H\FFF(\bbR^2)\to \H\FFF(\bbR^2)$ defines a degree $-1$ Poisson bracket 
(i.e.\ a $\mathbb{P}_2$-algebra structure)
on the associative, unital and commutative algebra $\H\FFF(\bbR^2)$ from Proposition \ref{prop:strictstructuremaps}.
Explicitly, it satisfies:
\begin{itemize}
\item[(1)] \textit{Symmetry:}\footnote{Recall that oddly shifted Poisson brackets 
are symmetric and not antisymmetric, see e.g.\ \cite{Fresse}.} 
\begin{flalign}
L\,\tau \,=\, L\quad,
\end{flalign}
where $\tau$ denotes the symmetric braiding \eqref{eqn:braiding}.

\item[(2)] \textit{Derivation property:}
\begin{flalign} \label{eqn:derivation}
L\,(\id\otimes \mu_2)\,=\, \mu_2\,(L\otimes \id)+ \mu_2\,(\id\otimes L)\,(\tau\otimes\id) \quad.
\end{flalign}

\item[(3)] \textit{Jacobi identity:}
\begin{flalign} \label{eqn:Jacobi}
L\,(\id\otimes L) + L\,(\id\otimes L) \,\tau_{(123)} + L\,(\id\otimes L) \,\tau_{(132)} \,=\, 0 \quad,
\end{flalign}
where $\tau_{(123)}$ and $\tau_{(132)}$ denote the actions of the cyclic permutations $(123)$ and $(132)$ in $\Sigma_3$, respectively, on $\H\FFF(\bbR^2)^{\otimes 3}$ via the symmetric braiding \eqref{eqn:braiding}.
\end{itemize}
\end{theo}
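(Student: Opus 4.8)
The plan is to verify the three defining properties of a degree $-1$ Poisson bracket separately, reducing in each case the asserted equality of maps $\H\FFF(\bbR^2)^{\otimes k}\to\H\FFF(\bbR^2)$ to a finite set of identities among the values of the representative cocycle $\widetilde\beta^{(2)}$ on the $2$-vertex trees \eqref{eqn:universalMasseyPFAtrees}. By Proposition \ref{prop:Linvariant} and Lemma \ref{lem:Lindependentdisks} I am free to work with the improved cocycle of Proposition \ref{prop: improved Masseys} and to enlarge or reposition any disk in the configuration \eqref{eqn:Lpicture}; the entire argument therefore consists of choosing convenient disk configurations, forming the associated $3$-vertex trees, and feeding them into the two cocycle relations \eqref{eqn:genMasseyrelations1}--\eqref{eqn:genMasseyrelations2}, while repeatedly using commutativity of $\mu_2$ and the fact that unary structure maps are identities (Proposition \ref{prop:strictstructuremaps}).

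\emph{Symmetry.} The two tensor arguments of $L$ are the class carried by the orbiting disk and the class carried by the encircled disk $\widetilde{D}$, and $L\,\tau=L$ expresses the fact that the bracket is insensitive to which of the two disks is regarded as moving, reflecting the relative-motion symmetry of a pair of points in the plane. I would prove it by writing out the defining combination \eqref{eqn:Lcombination} via the explicit formula \eqref{eqn:improvedMassey} and comparing it with the combination obtained after interchanging the roles of the orbiting disk and $\widetilde{D}$; Lemma \ref{lem:Lindependentdisks} lets me deform one configuration into the other, and commutativity of $\mu_2$ supplies the Koszul sign $(-1)^{|a|\,|b|}$, consistent with the graded-symmetric (rather than antisymmetric) nature of an odd bracket. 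I expect this step to be routine.

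\emph{Derivation property \eqref{eqn:derivation}.} The geometric content is additivity of the encircling loop: moving the orbiting disk around a region carrying the product $bc$ equals moving it around $b$ (with $c$ a spectator) plus moving it around $c$ (with $b$ a spectator). To make this precise I would realize $\mu_2$ in the encircled slot by splitting $\widetilde{D}$ into two disjoint sub-disks carrying $b$ and $c$, so that the product is implemented by $\iota_{(\widetilde{D}_1,\widetilde{D}_2)}^{\widetilde{D}}$. For each of the four summands of $L$ this produces a $3$-vertex tree of the type \eqref{eqn:genfirstMasseyrelations2}, whose top vertex encircles $\widetilde{D}$ and whose lower vertices are the orbiting inclusion and the splitting. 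Applying \eqref{eqn:genMasseyrelations2} rewrites the term in which the splitting is composed into the top vertex — which assembles into $L(a,bc)$ — as the remaining terms; summing over the four configurations of $L$ and reordering the spectator tensor factors by commutativity of $\mu_2$, these reassemble into $\mu_2\,(L\otimes\id)+\mu_2\,(\id\otimes L)\,(\tau\otimes\id)$. The bookkeeping is moderate once the disks are placed symmetrically.

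\emph{Jacobi identity \eqref{eqn:Jacobi} (main obstacle).} This is the hard part: unlike the previous two properties it is \emph{quadratic} in the cocycle $\widetilde\beta^{(2)}$, whereas the relations \eqref{eqn:genMasseyrelations1}--\eqref{eqn:genMasseyrelations2} are linear, so the genuine braid relation of the little $2$-disks operad has to be produced by hand. Having already established that $L$ is a graded-symmetric biderivation, I would first observe that the Jacobiator $L\,(\id\otimes L)+L\,(\id\otimes L)\,\tau_{(123)}+L\,(\id\otimes L)\,\tau_{(132)}$ is itself a triderivation, so that it is enough to test it on classes carried by three individual disks $D_1,D_2,D_3$ — equivalently, to prove the corresponding identity among operadic operations. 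I would then place $D_1,D_2,D_3$ in a cyclically symmetric configuration, express each cyclic summand $L(\,\cdot\,,L(\,\cdot\,,\cdot\,))$ as a sum of composites of two $\widetilde\beta^{(2)}$-values on nested-annulus configurations, and re-associate the inner and outer encirclings by repeated application of \eqref{eqn:genMasseyrelations1} together with commutativity of $\mu_2$ until the threefold cyclic sum cancels. The principal difficulty, and the step most likely to require care, is organising this quadratic bookkeeping so that the cyclic cancellation becomes manifest; the equivalence of locally constant prefactorization algebras on $\bbR^2$ with $\mathbb{E}_2$-algebras together with the formality $\H\mathbb{E}_2\simeq\mathbb{P}_2$ recalled in Remark \ref{rem:PFAEmPm} guarantees a priori that such a cancellation exists and thus serves as a consistency check.
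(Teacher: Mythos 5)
Your treatment of items (1) and (2) follows essentially the same route as the paper: choose convenient disk configurations, form the associated $3$-vertex trees, and feed them into the cocycle relations \eqref{eqn:genMasseyrelations1}--\eqref{eqn:genMasseyrelations2}, using Lemma \ref{lem:Lindependentdisks} to deform configurations and commutativity of $\mu_2$ for the bookkeeping. (For the symmetry you will also need the observation, extracted from the proof of Lemma \ref{lem:Lindependentdisks}, that the combination \eqref{eqn:Lcombination} vanishes whenever $\widetilde{D}$ lies \emph{outside} the annulus $D_u\cup D_d$; and for the derivation property the right-hand side requires auxiliary enlarged disks and an application of \eqref{eqn:genMasseyrelations1}, not only \eqref{eqn:genMasseyrelations2}. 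These are refinements of your plan, not changes of strategy.)

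For item (3) there is a genuine gap. You correctly identify that the Jacobiator is quadratic in $\widetilde{\beta}^{(2)}$ while the relations \eqref{eqn:genMasseyrelations1}--\eqref{eqn:genMasseyrelations2} are linear, but your proposed resolution --- re-associating the nested annuli ``by repeated application of \eqref{eqn:genMasseyrelations1} together with commutativity of $\mu_2$'' --- cannot work: those relations only rewrite a single $\widetilde{\beta}^{(2)}$-value in terms of other single $\widetilde{\beta}^{(2)}$-values composed with the strict structure maps, and no amount of such rewriting produces a cancellation among \emph{products} of two $\widetilde{\beta}^{(2)}$-values. The missing ingredient is the Maurer--Cartan equation \eqref{eqn:MCtwisting} evaluated on $4$-vertex trees: its quadratic part contains precisely the products $\widetilde{\beta}^{(2)}(\cdot,\cdot)\,\widetilde{\beta}^{(2)}(\cdot,\cdot)$ appearing in $L\,(\id\otimes L)$, and its linear part brings in the $3$-vertex components $\widetilde{\beta}(\mu_1,\mu_2,\mu_3)$ of the transferred $(\PP_{\bbR^2})_\infty$-structure, i.e.\ the degree $-2$ homotopies that do not survive into the first-order Massey cocycle. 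The paper's proof expresses each cyclic summand of the Jacobiator as a signed sum of such $3$-vertex values (over a rotationally symmetric configuration of disks), shows the cyclic sum equals $K+K\tau_{(123)}+K\tau_{(132)}$ for an analogous expression $K$ built from a deformed configuration, and then kills $K$ by a disk-deformation argument in the spirit of Lemma \ref{lem:Lindependentdisks}. This is the same mechanism by which, say, the Jacobi identity for a bracket induced by first-order deformation cocycles requires the second-order homotopy; appealing to the formality of $\mathbb{E}_2$ is, as you say, only a consistency check and not a substitute. A secondary, minor point: your reduction ``it suffices to test the Jacobiator on classes carried by three individual disks'' conflates the disks (which label the operations) with the algebra elements (the inputs); the identity \eqref{eqn:Jacobi} is an identity of operations on all of $\H\FFF(\bbR^2)^{\otimes 3}$ and no generation statement for $\H\FFF(\bbR^2)$ is available or needed.
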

\begin{proof}
Item (1) can be proven by considering the following configuration of disks
\begin{flalign}\label{eqn:Lsymmetrypicture}
\parbox{7cm}{\begin{tikzpicture}
\draw [red, thick, fill=red!10, fill opacity=0.5] plot [smooth cycle, tension=.75] coordinates {(-2.25,0) (-1.5,-.55) (-.6,.85) (.6,.85) (1.5,-.55) (2.25,0) (0.95,1.75) (-0.95,1.75)};
\node[red] at (0,1.5) {$D_u$};
\draw [blue, thick, fill=blue!10, fill opacity=0.5] plot [smooth cycle, tension=.75] coordinates {(-2.25,0) (-1.5,.55) (-.6,-.85) (.6,-.85) (1.5,.55) (2.25,0) (0.95,-1.75) (-0.95,-1.75)};
\node[blue] at (0,-1.5) {$D_d$};
\draw [cyan, thick, fill=cyan!10, fill opacity=0.5] plot [smooth cycle, tension=.75] coordinates {(-0.5,0) (0.25,-.55) (1.15,.85) (2.35,.85) (3.25,-.55) (4,0) (2.7,1.75) (0.8,1.75)};
\node[cyan] at (1.75,1.5) {$\widetilde{D}_u$};
\draw [purple, thick, fill=purple!10, fill opacity=0.5] plot [smooth cycle, tension=.75] coordinates {(-0.5,0) (0.25,.55) (1.15,-.85) (2.35,-.85) (3.25,.55) (4,0) (2.7,-1.75) (0.8,-1.75)};
\node[purple] at (1.75,-1.5) {$\widetilde{D}_d$};
\draw[thick, fill=black!10, fill opacity=0.5] (-1.75,0) circle (12pt);
\node at (-1.75,0) {$D_1$};
\draw[thick, fill=black!10, fill opacity=0.5] (1.75,0) circle (12pt);
\node at (1.75,0) {$D_2$};
\draw[thick, fill=black!10, fill opacity=0.5] (0,0) circle (12pt);
\node at (0,0) {$\widetilde{D}_1$};
\draw[thick, fill=black!10, fill opacity=0.5] (3.5,0) circle (12pt);
\node at (3.5,0) {$\widetilde{D}_2$};
\end{tikzpicture}
}
\end{flalign}
and suitably combining the cocycle conditions \eqref{eqn:genMasseyrelations2}
for the trees 
\begin{flalign}
t\big(\iota^D_{(D_u,\widetilde{D}_d)},\iota^{D_u}_{D_i},\iota^{\widetilde{D}_d}_{\widetilde{D}_j}\big)
\qquad\text{and}\qquad
t\big(\iota^D_{(D_d,\widetilde{D}_u)},\iota^{D_d}_{D_i},\iota^{\widetilde{D}_u}_{\widetilde{D}_j}\big)\quad,
\end{flalign}
for all combinations of $i,j\in\{1,2\}$.
To arrive at the result $L\,\tau = L$, one also has to use that $L$ is insensitive
to the choice of disk configuration and that the invariant \eqref{eqn:Lcombination} is trivial
whenever $\widetilde{D}$ lies \textit{outside} of the annulus formed by $D_u\cup D_d$.
Both of these claims are a consequence of Lemma \ref{lem:Lindependentdisks} and its proof.
\sk

To prove item (2), we use the following configuration of disks
\begin{flalign}
\parbox{7cm}{\begin{tikzpicture}
\draw [purple, thick, fill=purple!10, fill opacity=0.5] plot [smooth cycle, tension=.75] coordinates {(-.45,1.1) (.45,1.1) (.8,0) (.45,-1.1) (-.45,-1.1) (-.8,0)};
\node[purple] at (.9,-.7) {$\widetilde{D}$};
\draw [red, thick, fill=red!10, fill opacity=0.5] plot [smooth cycle, tension=.75] coordinates {(-2.85,0) (-2,-.55) (-.8,1.2) (.8,1.2) (2,-.55) (2.85,0) (1.4,2.1) (-1.4,2.1)};
\node[red] at (0,1.85) {$D_u$};
\draw [blue, thick, fill=blue!10, fill opacity=0.5] plot [smooth cycle, tension=.75] coordinates {(-2.85,0) (-2,.55) (-.8,-1.2) (.8,-1.2) (2,.55) (2.85,0) (1.4,-2.1) (-1.4,-2.1)};
\node[blue] at (0,-1.9) {$D_d$};
\draw [cyan, thick, fill=cyan!10, fill opacity=0.5] plot [smooth cycle, tension=.75] coordinates {(-2.8,0) (-2.2,.55) (-.8,.2) (.8,.2) (2.2,.55) (2.8,0) (2.2,-.55) (.8,-.2) (-.8,-.2) (-2.2,-.55)};
\node[cyan] at (1.2,0) {$D_m$};
\draw[thick, fill=black!10, fill opacity=0.5] (-2.3,0) circle (12pt);
\node at (-2.3,0) {$D_1$};
\draw[thick, fill=black!10, fill opacity=0.5] (2.3,0) circle (12pt);
\node at (2.3,0) {$D_2$};
\draw[thick, fill=black!10, fill opacity=0.5] (0,.75) circle (12pt);
\node at (0,.75) {$\widetilde{D}_1$};
\draw[thick, fill=black!10, fill opacity=0.5] (0,-.75) circle (12pt);
\node at (0,-.75) {$\widetilde{D}_2$};
\end{tikzpicture}
}
\end{flalign}
and the fact that $L$ is insensitive to the choice of disk configuration by Lemma \ref{lem:Lindependentdisks}. 
To evaluate the term $L\,(\id\otimes \mu_2)$ on the left-hand side of \eqref{eqn:derivation},
we use a suitable combination of the cocycle conditions \eqref{eqn:genMasseyrelations2}
for the trees
\begin{flalign}
t\big(\iota^D_{(D_u,\widetilde{D})},\iota^{D_u}_{D_i},\iota^{\widetilde{D}}_{(\widetilde{D}_1,\widetilde{D}_2)}\big)
\qquad\text{and}\qquad
t\big(\iota^D_{(D_d,\widetilde{D})},\iota^{D_d}_{D_i},\iota^{\widetilde{D}}_{(\widetilde{D}_1,\widetilde{D}_2)}\big)\quad,
\end{flalign}
with $i \in \{1, 2\}$.
For the first term $\mu_2\,(L\otimes \id)$ on the right-hand side of \eqref{eqn:derivation}, 
we use instead a suitable combination of the cocycle conditions \eqref{eqn:genMasseyrelations1} for the trees
\begin{flalign}
t\big(\iota^D_{(\widehat{D}_u,\widetilde{D}_2)},\iota^{\widehat{D}_u}_{(D_u, \widetilde{D}_1)}, \iota^{D_u}_{D_i} \big)
\qquad\text{and}\qquad
t\big(\iota^D_{(\widehat{D}_u,\widetilde{D}_2)},\iota^{\widehat{D}_u}_{(D_m, \widetilde{D}_1)}, \iota^{D_m}_{D_i} \big)\quad,
\end{flalign}
where $\widehat{D}_u\subset \bbR^2$ is a disk such that $D_u, D_m \subset \widehat{D}_u$ and 
$\widehat{D}_u \cap \widetilde{D}_2 = \emptyset$, as in the following picture:
\begin{flalign}
\parbox{7cm}{\begin{tikzpicture}
\draw [purple, thick, fill=purple!10, fill opacity=0.5] plot [smooth cycle, tension=.75] coordinates {(-3,.5) (-1.3,2.3) (1.3,2.3) (3,.5) (2.5,-.9) (0,-.25) (-2.5,-.9)};
\node[purple] at (3.4,0.5) {$\widehat{D}_u$};
\draw [red, thick, fill=red!10, fill opacity=0.5] plot [smooth cycle, tension=.75] coordinates {(-2.85,0) (-2,-.55) (-.8,1.2) (.8,1.2) (2,-.55) (2.85,0) (1.4,2.1) (-1.4,2.1)};
\node[red] at (0,1.85) {$D_u$};
\draw [cyan, thick, fill=cyan!10, fill opacity=0.5] plot [smooth cycle, tension=.75] coordinates {(-2.8,0) (-2.2,.55) (-.8,.2) (.8,.2) (2.2,.55) (2.8,0) (2.2,-.55) (.8,-.2) (-.8,-.2) (-2.2,-.55)};
\node[cyan] at (1.2,0) {$D_m$};
\draw[thick, fill=black!10, fill opacity=0.5] (-2.3,0) circle (12pt);
\node at (-2.3,0) {$D_1$};
\draw[thick, fill=black!10, fill opacity=0.5] (2.3,0) circle (12pt);
\node at (2.3,0) {$D_2$};
\draw[thick, fill=black!10, fill opacity=0.5] (0,.75) circle (12pt);
\node at (0,.75) {$\widetilde{D}_1$};
\draw[thick, fill=black!10, fill opacity=0.5] (0,-.75) circle (12pt);
\node at (0,-.75) {$\widetilde{D}_2$};
\end{tikzpicture}
}
\end{flalign}
Similarly, to compute the second term $\mu_2\,(\id\otimes L) (\tau \otimes \id)$ 
on the right-hand side of \eqref{eqn:derivation}, we use a suitable combination of 
the cocycle conditions \eqref{eqn:genMasseyrelations1} for the trees
\begin{flalign}
t\big(\iota^D_{(\widetilde{D}_1, \widehat{D}_d)},\iota^{\widehat{D}_d}_{(D_m, \widetilde{D}_2)}, \iota^{D_m}_{D_i} \big)
\qquad\text{and}\qquad
t\big(\iota^D_{(\widetilde{D}_1, \widehat{D}_d)},\iota^{\widehat{D}_d}_{(D_d, \widetilde{D}_2)}, \iota^{D_d}_{D_i} \big)\quad,
\end{flalign}
where $\widehat{D}_d \subset \bbR^2$ is a disk such that 
$D_m, D_d \subset \widehat{D}_d$ and $\widehat{D}_d \cap \widetilde{D}_1 = \emptyset$, 
as in the following picture:
\begin{flalign}
\parbox{7cm}{\begin{tikzpicture}
\draw [purple, thick, fill=purple!10, fill opacity=0.5] plot [smooth cycle, tension=.75] coordinates {(-3,-.5) (-1.3,-2.3) (1.3,-2.3) (3,-.5) (2.5,.9) (0,.25) (-2.5,.9)};
\node[purple] at (3.4,-0.5) {$\widehat{D}_d$};
\draw [blue, thick, fill=blue!10, fill opacity=0.5] plot [smooth cycle, tension=.75] coordinates {(-2.85,0) (-2,.55) (-.8,-1.2) (.8,-1.2) (2,.55) (2.85,0) (1.4,-2.1) (-1.4,-2.1)};
\node[blue] at (0,-1.9) {$D_d$};
\draw [cyan, thick, fill=cyan!10, fill opacity=0.5] plot [smooth cycle, tension=.75] coordinates {(-2.8,0) (-2.2,.55) (-.8,.2) (.8,.2) (2.2,.55) (2.8,0) (2.2,-.55) (.8,-.2) (-.8,-.2) (-2.2,-.55)};
\node[cyan] at (1.2,0) {$D_m$};
\draw[thick, fill=black!10, fill opacity=0.5] (-2.3,0) circle (12pt);
\node at (-2.3,0) {$D_1$};
\draw[thick, fill=black!10, fill opacity=0.5] (2.3,0) circle (12pt);
\node at (2.3,0) {$D_2$};
\draw[thick, fill=black!10, fill opacity=0.5] (0,.75) circle (12pt);
\node at (0,.75) {$\widetilde{D}_1$};
\draw[thick, fill=black!10, fill opacity=0.5] (0,-.75) circle (12pt);
\node at (0,-.75) {$\widetilde{D}_2$};
\end{tikzpicture}
}
\end{flalign}
Combining all the above, the identity \eqref{eqn:derivation} follows immediately.
\sk

Finally, in order to prove item (3), 
it will be convenient to consider a configuration 
of input disks $D_1, D_2, D_1^\prime, D_2^\prime, \widetilde{D}_1, \widetilde{D}_2 \subset \bbR^2$ 
which is symmetric under rotation by $\pi/3$. We will also need four types of disks $\widetilde{D}_u,
D_m^\prime, D_d, D_p \subset \bbR^2$, as depicted in the following picture:
\begin{flalign}\label{eqn:Ljacobipicture1}
\parbox{7cm}{\begin{tikzpicture}
\draw [red, thick, fill=red!10, rotate=-120, fill opacity=0.5] plot [smooth cycle, tension=.75] coordinates {(-2.85,0) (-1.8,-.25) (-1.6,2.5)(1.9,3) (1.8,-.3) (2.85,0) (1.9,3.3) (-1.4,3)};
\node[red] at ([rotate=-120] 2.4,1.1) {$\widetilde{D}_u$};
\draw [cyan, thick, fill=cyan!10, rotate=120, fill opacity=0.5] plot [smooth cycle, tension=.75] coordinates {(-2.9,0) (-2.3,.55) (-.6,.35) (.6,.35) (2.3,.55) (2.9,0) (2.3,-.55) (.6,-.35) (-.6,-.35) (-2.3,-.55)};
\node[cyan] at ([rotate=120] 1,0) {$D_m^\prime$};
\draw [blue, thick, fill=blue!10, fill opacity=0.5] plot [smooth cycle, tension=.75] coordinates {(-1.9,.4) (.6,-2.4) (1.9,-2.2) (1.8,.3) (2.8,0) (1.7,-2.9) (-.9,-1.4) (-2.75,-.3)};
\node[blue] at (2.25,-1) {$D_d$};
\draw [purple, thick, fill=purple!10, fill opacity=0.5] plot [smooth cycle, tension=.75] coordinates {(1.9,-.4) (-.6,2.4) (-1.9,2.2) (-1.8,-.3) (-2.8,0) (-1.7,2.9) (.9,1.4) (2.75,.3)};
\node[purple] at (-2.25,1) {$D_p$};
\draw[thick, fill=black!10, fill opacity=0.5] (2.3, 0) circle (12pt);
\node at (2.3, 0) {$D_2$};
\draw[thick, fill=black!10, fill opacity=0.5] (1.15, 2) circle (12pt);
\node at (1.15, 2) {$\widetilde{D}_1$};
\draw[thick, fill=black!10, fill opacity=0.5] (-1.15, 2) circle (12pt);
\node at (-1.15, 2) {$D_2^\prime$};
\draw[thick, fill=black!10, fill opacity=0.5] (-2.3, 0) circle (12pt);
\node at (-2.3, 0) {$D_1$};
\draw[thick, fill=black!10, fill opacity=0.5] (-1.15, -2) circle (12pt);
\node at (-1.15, -2) {$\widetilde{D}_2$};
\draw[thick, fill=black!10, fill opacity=0.5] (1.15, -2) circle (12pt);
\node at (1.15, -2) {$D_1^\prime$};
\end{tikzpicture}
}
\end{flalign}
together with their rotations by $2\pi/3$, which we call $D_u, \widetilde{D}_m, D_d^\prime,
D_p^\prime \subset \bbR^2$, and their rotations by $- 2\pi/3$, which we call $D_u^\prime, D_m,
\widetilde{D}_d, \widetilde{D}_p \subset \bbR^2$. 
We require the pairwise disjointness conditions 
\begin{subequations}
\begin{alignat}{3}
D_m^\prime \cap \widetilde{D}_u = \emptyset~~, \quad
& \quad D_m^\prime \cap D_d = \emptyset~~, \quad
& \quad\widetilde{D}_u \cap D_d = \emptyset~~, 
\\
D_m^\prime \cap D_u = \emptyset~~, \quad
& \quad D_m^\prime \cap D_p = \emptyset~~, \quad
& \quad \widetilde{D}_u \cap D_p = \emptyset~~,
\end{alignat}
\end{subequations}
suggested by the picture \eqref{eqn:Ljacobipicture1} 
(note that $D_u$ is the rotation of $\widetilde{D}_u$ by $2 \pi /3$),
and similarly for the rotations of these disks by $\pm 2 \pi/3$.
\sk

Since $L$ is insensitive to the choice of disk configuration by Lemma \ref{lem:Lindependentdisks}, 
the first term on the left-hand side of \eqref{eqn:Jacobi} can be expressed 
using the following configuration of disks:
\begin{flalign}\label{eqn:Ljacobipicture2}
\parbox{7cm}{\begin{tikzpicture}
\draw [purple, thick, fill=purple!10, rotate=120, fill opacity=0.5] plot [smooth cycle, tension=.75] coordinates {(-1.9,.4) (.6,-2.4) (1.9,-2.2) (1.8,.3) (2.8,0) (1.7,-2.9) (-.9,-1.4) (-2.75,-.3)};
\node[purple] at ([rotate=120] 2.25,-1) {$D_d^\prime$};
\draw [cyan, thick, fill=cyan!10, rotate=120, fill opacity=0.5] plot [smooth cycle, tension=.75] coordinates {(-2.9,0) (-2.3,.55) (-.6,.35) (.6,.35) (2.3,.55) (2.9,0) (2.3,-.55) (.6,-.35) (-.6,-.35) (-2.3,-.55)};
\node[cyan] at ([rotate=120] 1,0) {$D_m^\prime$};
\draw [red, thick, fill=red!10, fill opacity=0.5] plot [smooth cycle, tension=.75] coordinates {(-2.85,0) (-1.8,-.25) (-1.6,2.5)(1.9,3) (1.8,-.3) (2.85,0) (1.9,3.3) (-1.4,3)};
\node[red] at (2.4,1.1) {$D_u$};
\draw [blue, thick, fill=blue!10, fill opacity=0.5] plot [smooth cycle, tension=.75] coordinates {(-1.9,.4) (.6,-2.4) (1.9,-2.2) (1.8,.3) (2.8,0) (1.7,-2.9) (-.9,-1.4) (-2.75,-.3)};
\node[blue] at (2.25,-1) {$D_d$};
\draw[thick, fill=black!10, fill opacity=0.5] (2.3, 0) circle (12pt);
\node at (2.3, 0) {$D_2$};
\draw[thick, fill=black!10, fill opacity=0.5] (1.15, 2) circle (12pt);
\node at (1.15, 2) {$\widetilde{D}_1$};
\draw[thick, fill=black!10, fill opacity=0.5] (-1.15, 2) circle (12pt);
\node at (-1.15, 2) {$D_2^\prime$};
\draw[thick, fill=black!10, fill opacity=0.5] (-2.3, 0) circle (12pt);
\node at (-2.3, 0) {$D_1$};
\draw[thick, fill=black!10, fill opacity=0.5] (-1.15, -2) circle (12pt);
\node at (-1.15, -2) {$\widetilde{D}_2$};
\draw[thick, fill=black!10, fill opacity=0.5] (1.15, -2) circle (12pt);
\node at (1.15, -2) {$D_1^\prime$};
\end{tikzpicture}
}
\end{flalign}
Explicitly, 
\begin{flalign}\label{L(idL0)}
L \, (\id \otimes L) \,=\, 
\sum_{i,j,k=1}^2 \sum_{\substack{\alpha \in \{ u,d\}\\ \beta \in \{m,d\}}} 
- (-1)^{i+j+k+\vert \alpha \vert+\vert \beta \vert} \,
\widetilde{\beta}^{(2)} \big( \iota_{D_\alpha , \widehat{D}}^{D} , \iota_{D_i}^{D_\alpha} \big) \,
 \widetilde{\beta}^{(2)}  \big( \iota_{D_\beta^\prime , \widetilde{D}_k}^{\widehat{D}} , \iota_{D_j^\prime}^{D_\beta^\prime }  \big)\quad,
\end{flalign}
where $\widehat{D} \subset \bbR^2$ is a disk such that 
$D_d^\prime \cup D_m^\prime \subset \widehat{D}$, 
$\widehat{D} \cap D_u = \emptyset$ and 
$\widehat{D} \cap D_d = \emptyset$, 
and we define $|u| := 0$, $|m| := 0$ and $|d| := 1$. 
Note that the sum over $k = 2$ is not a part of the usual expression of $L \, (\id \otimes L)$, 
but it is zero by Lemma \ref{lem:Lindependentdisks} 
with an argument similar to the one used in the proof of item (1). 
\sk

We now use a suitable combination of the Maurer-Cartan equation \eqref{eqn:MCtwisting} 
evaluated on the $4$-vertex trees 
\begin{flalign}
t \big(\iota^D_{(D_\alpha,\widehat{D})}, 
\iota^{D_\alpha}_{D_i}, 
\iota^{\widehat{D}}_{(D_\beta^\prime, \widetilde{D}_k)}, 
\iota_{D_j^\prime}^{D_\beta^\prime} \big) \quad,
\end{flalign}
with $\alpha \in \{ u, d\}$, $\beta \in \{m, d \}$ and $i, j, k \in \{1, 2\}$. 
We then find that
\begin{flalign}\label{L(idL)}
L \, (\id \otimes L) 
\,=\, 
\sum_{i,j,k=1}^2 
\sum_{\substack{\alpha \in \{ u,d\}\\ \beta \in \{m,d\}}} 
- (-1)^{i+j+k+\vert \alpha \vert+\vert \beta \vert} \,
\widetilde{\beta}\big(\iota_{(D_\alpha, D_\beta^\prime, \widetilde{D}_k)}^{D}, 
\iota_{D_i}^{D_\alpha}, \iota_{D_j^\prime}^{D_\beta^\prime}\big) \quad .
\end{flalign}
The other two terms $L \, (\id \otimes L) \tau_{(123)}$ and $L \, (\id \otimes L) \tau_{(132)}$
on the left-hand side of \eqref{eqn:Jacobi} are given by similar expressions involving
the disks $D_u^\prime, D_d^\prime, \widetilde{D}_m, \widetilde{D}_d$ and
$\widetilde{D}_u, \widetilde{D}_d, D_m, D_d$, respectively.
To this expression of the left-hand side of \eqref{eqn:Jacobi}
we now apply the Maurer-Cartan equation \eqref{eqn:MCtwisting} evaluated 
on the $4$-vertex trees 
\begin{alignat}{2}
\nn &t\big(\iota^D_{(D_u,D_\alpha^\prime, \widetilde{D}_m)}, \iota^{D_u}_{D_i}, \iota^{D_\alpha^\prime}_{D_j^\prime}, \iota^{\widetilde{D}_m}_{\widetilde{D}_k} \big)
\qquad, &\qquad
&t\big(\iota^D_{(D_\alpha,D_m^\prime, \widetilde{D}_u)}, \iota^{D_\alpha}_{D_i}, \iota^{D_m^\prime}_{D_j^\prime}, \iota^{\widetilde{D}_u}_{\widetilde{D}_k} \big) \quad,\\
&t\big(\iota^D_{(D_m,D_u^\prime, \widetilde{D}_\alpha)}, \iota^{D_m}_{D_i}, \iota^{D_u^\prime}_{D_j^\prime}, \iota^{\widetilde{D}_\alpha}_{\widetilde{D}_k} \big)
\qquad\text{and}&\qquad
&t\big(\iota^D_{(D_\alpha,D_\alpha^\prime, \widetilde{D}_\alpha)}, \iota^{D_\alpha}_{D_i}, \iota^{D_\alpha^\prime}_{D_j^\prime}, \iota^{\widetilde{D}_\alpha}_{\widetilde{D}_k} \big) \quad,
\end{alignat}
with $i, j, k \in \{1, 2\}$, 
first with $\alpha = d$ and then with $\alpha = p$. 
One then finds that
\begin{flalign}\label{eqn:Jacobi L to K}
L (\id \otimes L) + L (\id \otimes L) \tau_{(123)} + L (\id \otimes L) \tau_{(132)}
\,=\, K + K \tau_{(123)} + K \tau_{(132)} \quad .
\end{flalign}
Here, $K$ is the same expression as on the right-hand side of \eqref{L(idL)}, but with the disks $D_d$
and $D^\prime_d$ replaced by $D_p$ and $D_p^\prime$, respectively, i.e.\ with the disks
in picture \eqref{eqn:Ljacobipicture2} used in the expression \eqref{L(idL)} for $L (\id \otimes L)$
replaced by the disks in the following picture:
\begin{flalign} \label{alternative disks}
\parbox{7cm}{\begin{tikzpicture}
\draw [purple, thick, fill=purple!10, rotate=120, fill opacity=0.5] plot [smooth cycle, tension=.75] coordinates {(1.9,-.4) (-.6,2.4) (-1.9,2.2) (-1.8,-.3) (-2.8,0) (-1.7,2.9) (.9,1.4) (2.75,.3)};
\node[purple] at ([rotate=120] -2.25,1) {$D_p^\prime$};
\draw [cyan, thick, fill=cyan!10, rotate=120, fill opacity=0.5] plot [smooth cycle, tension=.75] coordinates {(-2.9,0) (-2.3,.55) (-.6,.35) (.6,.35) (2.3,.55) (2.9,0) (2.3,-.55) (.6,-.35) (-.6,-.35) (-2.3,-.55)};
\node[cyan] at ([rotate=120] 1,0) {$D_m^\prime$};
\draw [red, thick, fill=red!10, fill opacity=0.5] plot [smooth cycle, tension=.75] coordinates {(-2.85,0) (-1.8,-.25) (-1.6,2.5)(1.9,3) (1.8,-.3) (2.85,0) (1.9,3.3) (-1.4,3)};
\node[red] at (2.4,1.1) {$D_u$};
\draw [blue, thick, fill=blue!10, fill opacity=0.5] plot [smooth cycle, tension=.75] coordinates {(1.9,-.4) (-.6,2.4) (-1.9,2.2) (-1.8,-.3) (-2.8,0) (-1.7,2.9) (.9,1.4) (2.75,.3)};
\node[blue] at (-2.25,1) {$D_p$};
\draw[thick, fill=black!10, fill opacity=0.5] (2.3, 0) circle (12pt);
\node at (2.3, 0) {$D_2$};
\draw[thick, fill=black!10, fill opacity=0.5] (1.15, 2) circle (12pt);
\node at (1.15, 2) {$\widetilde{D}_1$};
\draw[thick, fill=black!10, fill opacity=0.5] (-1.15, 2) circle (12pt);
\node at (-1.15, 2) {$D_2^\prime$};
\draw[thick, fill=black!10, fill opacity=0.5] (-2.3, 0) circle (12pt);
\node at (-2.3, 0) {$D_1$};
\draw[thick, fill=black!10, fill opacity=0.5] (-1.15, -2) circle (12pt);
\node at (-1.15, -2) {$\widetilde{D}_2$};
\draw[thick, fill=black!10, fill opacity=0.5] (1.15, -2) circle (12pt);
\node at (1.15, -2) {$D_1^\prime$};
\end{tikzpicture}
}
\end{flalign}
Explicitly, we set
\begin{flalign}\label{K def}
K \, := \, 
\sum_{i,j,k=1}^2
\sum_{\substack{\alpha \in \{ u,p\}\\ \beta \in \{m,p\}}} 
- (-1)^{i+j+k+\vert \alpha \vert+\vert \beta \vert} \,
\widetilde{\beta} \big(\iota_{(D_\alpha, D_\beta^\prime, \widetilde{D}_k)}^{D}, \iota_{D_i}^{D_\alpha}, \iota_{D_j^\prime}^{D_\beta^\prime}\big)\quad,
\end{flalign}
where as before we have $|u| =0$, $|m| = 0$ and we now also set $|p| := 1$. 
The terms $K \tau_{(123)}$ and $K \tau_{(132)}$
on the right-hand side of \eqref{eqn:Jacobi L to K}
are given by similar expressions to \eqref{K def}
involving the disks $D_u^\prime, D_p^\prime, \widetilde{D}_m, \widetilde{D}_p$ and
$\widetilde{D}_u, \widetilde{D}_p, D_m, D_p$, respectively.
\sk

The result now follows by noting that $K=0$. 
Indeed, by the same argument as in the proof of Lemma \ref{lem:Lindependentdisks}, 
one shows that $K$ is independent of the choice of disk configurations as in \eqref{alternative disks}. 
Hence, the terms in the sum in \eqref{K def} with $k=1$ vanish by deforming $D_p^\prime$ into 
$D_m^\prime$ and the terms in the sum in \eqref{K def} with $k=2$ vanish by deforming $D_p$ into $D_u$.
\end{proof}


\section{\label{sec:examples}Factorization envelopes}
Throughout this section, we fix the field $\bbK$ to be either the
real numbers $\bbR$ or the complex numbers $\bbC$.
Let us recall from \cite[Chapter 3.6]{CostelloGwilliam} 
that associated with any local dg-Lie algebra
is a (pre)factorization algebra, called its factorization envelope,
that is formed by taking Chevalley-Eilenberg chains.
In this section we study the case of $\g^{\bbR^m} := \g \otimes \Omega_{\bbR^m}^\bullet$,
where $\g$ is any finite-dimensional Lie algebra over $\bbK$
and $\Omega_{\bbR^m}^\bullet$ is the sheaf of de Rham complexes on $\bbR^m$.
The associated factorization envelope is then given by
\begin{flalign}
\mathfrak{U} \g^{\bbR^m} \,:=\,  \mathrm{CE}_\bullet\big(\g^{\bbR^m}\big)\,:=\,
\Big(\Sym\big( \g_{\cc}^{\bbR^m}[1] \big) , \dd_{\dR[1]} + \dd_{\CE}\Big)\,\in \Alg_{\PP_{\bbR^m}}\quad,
\end{flalign}
where $\g_{\cc}^{\bbR^m}[1] := \g\otimes\Omega^\bullet_{\bbR^m,\cc}[1]$ denotes the cosheaf
of ($1$-shifted) compactly supported sections and the Chevalley-Eilenberg differential $\dd_{\CE}$ 
will be described explicitly in \eqref{dCE def} below. This prefactorization algebra
is locally constant.

\subsection{Strong deformation retract}
Recall that the input for our constructions in Section \ref{sec:PFA} is a 
strong deformation retract \eqref{eqn:PFASDR} between the $\mathrm{Disk}_{\bbR^m}$-colored object
$\FFF \in \Ch^{\mathrm{Disk}_{\bbR^m}}$ and its cohomology $\H\FFF \in \Ch^{\mathrm{Disk}_{\bbR^m}}$. 
To construct such datum for the factorization envelope $\FFF = \mathfrak{U} \g^{\bbR^m}$, 
we shall start from a strong deformation retract between the compactly supported 
de Rham complexes $\Omega^\bullet_{\bbR^m,\cc} \in \Ch^{\mathrm{Disk}_{\bbR^m}}$ and 
their cohomologies $\H \Omega^\bullet_{\bbR^m,\cc}$. Since $\H \Omega^\bullet_{\cc}(D) \cong \bbK[-m]$,
for every disk $D \in \mathrm{Disk}_{\bbR^m}$, this is equivalent to a
strong deformation retract between $\Omega^\bullet_{\bbR^m,\cc} \in \Ch^{\mathrm{Disk}_{\bbR^m}}$ 
and the constant $\mathrm{Disk}_{\bbR^m}$-colored object 
$\bbK[-m] := \{ \bbK[-m] \in \Ch : D \in \mathrm{Disk}_{\bbR^m} \} \in \Ch^{\mathrm{Disk}_{\bbR^m}}$. 
\sk

We require a sufficiently explicit model for this strong deformation retract, 
which we will build in Lemma \ref{lem: deRhamSDR} by making the following choices:
For each disk $D \in \mathrm{Disk}_{\bbR^m}$, we choose a representative 
$\omega_D \in \Omega^m_{\cc}(D)$ of the non-trivial 
cohomology class $1\in \bbK\cong \H^m \Omega^\bullet_{\cc}(D)$, i.e.\ $\int_D \omega_D = 1$,
which defines a decomposition
\begin{subequations} \label{Omega decomp}
\begin{flalign} \label{Omega m decomp}
\Omega^m_{\cc}(D) \, = \, \bbK \, \omega_D \, \oplus \, \dd_{\rm dR} \Omega_{\cc}^{m-1}(D) \quad.
\end{flalign}
For each $r \in \{1, \ldots, m-1\}$, we choose a complement $K^r(D)$ of 
$\dd_{\rm dR} \Omega_{\cc}^{r-1}(D)\subset \Omega^r_{\cc}(D)$,
which defines decompositions
\begin{flalign} \label{Omega r decomp}
\Omega^r_{\cc}(D) \, = \, K^r(D) \, \oplus \, \dd_{\rm dR} \Omega_{\cc}^{r-1}(D)\quad,
\end{flalign}
\end{subequations}
for $r \in \{1, \ldots, m-1\}$. Finally, we set $K^0(D) := \Omega^0_{\cc}(D)$. 
Comparing with \eqref{Omega m decomp}, it is convenient 
to denote $K^m(D) := \bbK \, \omega_D$. Since $\H^{r-1} \Omega^\bullet_{\cc}(D) = 0$, 
for all $r \in \{ 1, \ldots, m \}$, it follows that the differential defines an isomorphism $\dd_{\rm dR} : K^{r-1}(D) \CongTo
\dd_{\rm dR} \Omega_{\cc}^{r-1}(D)$. We denote its inverse by
\begin{flalign} \label{hD iso}
h^r_D : \dd_{\rm dR} \Omega_{\cc}^{r-1}(D) \overset{\cong}\longrightarrow K^{r-1}(D)\quad.
\end{flalign}
Extending by zero to the complements $K^r(D)$ in \eqref{Omega decomp}, 
we obtain from this a family of linear maps $h^r_D : \Omega^r_{\cc}(D) \to K^{r-1}(D)\subset \Omega^{r-1}_{\cc}(D)$,
for $r \in \{ 1, \ldots, m \}$. It will be convenient to denote by $h^0_D : \Omega^0_{\cc}(D) \to 0$
the zero map.
\begin{lem} \label{lem: deRhamSDR}
For each disk $D \in \mathrm{Disk}_{\bbR^m}$, let us define
\begin{flalign} \label{def pD iD hD}
p_D(\omega) \,:=\, \int_D \omega \quad, \qquad 
i_D(k) \,:=\, k \, \omega_D \quad, \qquad 
h_D(\omega) \, := \, - h^r_D(\omega)\quad,
\end{flalign}
for all $\omega \in \Omega^r_{\cc}(D)$ with $r \in \{ 0, \ldots, m \}$ and $k \in \bbK[-m]$.
This defines a strong deformation retract
\begin{equation}\label{eqn:deRhamSDR}
\begin{tikzcd}
\bbK[-m] \ar[r,shift right=-1ex,"i"] & \ar[l,shift right=-1ex,"p"] \Omega^\bullet_{\bbR^m,\cc} \ar[loop,out=-25,in=25,distance=30,swap,"h"]
\end{tikzcd}
\end{equation}
in the category $\Ch^{\mathrm{Disk}_{\bbR^m}}$.
\end{lem}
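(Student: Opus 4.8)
The plan is to exploit the fact that the ambient category $\Ch^{\mathrm{Disk}_{\bbR^m}}$ is simply the product $\prod_{D}\Ch$ over the colors, so that a strong deformation retract in it is nothing but a family of strong deformation retracts in $\Ch$, one for each disk $D\in\mathrm{Disk}_{\bbR^m}$; there are no compatibility conditions relating different colors. Hence I would fix a single disk $D$ and verify the seven defining relations \eqref{eqn:deformationretract2} for the triple $(i_D,p_D,h_D)$ from \eqref{def pD iD hD}. Throughout I would use three inputs: the direct-sum decompositions \eqref{Omega decomp}; the fact from \eqref{hD iso} that $\dd_{\dR}$ restricts to an isomorphism $K^{r-1}(D)\CongTo \dd_{\dR}\Omega^{r-1}_{\cc}(D)$ whose inverse, extended by zero on $K^r(D)$, is $h^r_D$; and the analytic facts $\int_D\omega_D=1$ together with $\int_D \dd_{\dR}\eta=0$ (Stokes), plus $\dd_{\dR}\omega_D=0$ since $\omega_D$ is a top form.

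I would dispatch the five easy relations first. The identity $p_D\,i_D=\id$ is immediate from $p_D i_D(k)=k\int_D\omega_D=k$. Since $\bbK[-m]$ carries the trivial differential and $\omega_D$ is closed, $\partial i_D=\dd_{\dR} i_D=0$; dually $\partial p_D=-p_D\,\dd_{\dR}=0$, which vanishes in all degrees, in top degree by Stokes. The relation $h_D\,i_D=0$ holds because $i_D$ lands in $K^m(D)=\bbK\,\omega_D$, on which $h^m_D$ vanishes; the relation $p_D\,h_D=0$ holds for degree reasons, as $p_D$ is supported in top degree $m$ while $h_D$ lowers degree by one, so the composite can only be nonzero on the empty space $\Omega^{m+1}_{\cc}(D)$; and $h_D^2=0$ because $h_D(\Omega^r_{\cc}(D))\subseteq K^{r-1}(D)$ and $h^{r-1}_D$ vanishes on $K^{r-1}(D)$.

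The one relation that requires genuine bookkeeping is the homotopy identity $\partial h_D=\dd_{\dR}h_D+h_D\,\dd_{\dR}=i_D\,p_D-\id$, and I expect this to be the main (though still routine) step. I would prove it by a degree-wise case analysis, writing $\omega=\kappa+\dd_{\dR}\eta$ with $\kappa\in K^r(D)$ and $\eta\in K^{r-1}(D)$. For $r\le m-1$ one finds $\dd_{\dR}h_D(\omega)=-\dd_{\dR}\eta$ and $h_D\,\dd_{\dR}(\omega)=-\kappa$, using that $\dd_{\dR}\kappa$ is exact with $h^{r+1}_D(\dd_{\dR}\kappa)=\kappa$, so the sum equals $-\omega=(i_D p_D-\id)(\omega)$ since $p_D$ vanishes below top degree; the degree-$0$ case is the special instance with no exact part (so $\eta=0$). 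In top degree $r=m$ one has $\dd_{\dR}\omega=0$, hence $h_D\,\dd_{\dR}(\omega)=0$, while $\kappa=c\,\omega_D$ and $p_D(\omega)=c$ by Stokes and the normalization, giving $\dd_{\dR}h_D(\omega)=-\dd_{\dR}\eta=i_D p_D(\omega)-\omega$. Assembling the degrees yields $\partial h_D=i_D p_D-\id$, which together with the five relations above establishes \eqref{eqn:deformationretract2} and hence the lemma. The only point to watch is the consistent treatment of the boundary degrees $r=0$ and $r=m$, where $K^0(D)=\Omega^0_{\cc}(D)$ and $K^m(D)=\bbK\,\omega_D$ behave slightly differently from the intermediate complements.
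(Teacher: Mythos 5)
Your proposal is correct and follows essentially the same route as the paper: the paper likewise reduces everything to the decompositions \eqref{Omega decomp}, establishes the two identities $\omega = \omega_D\int_D\omega + \dd_{\rm dR}h^m_D(\omega)$ (top degree) and $\omega = h^{r+1}_D(\dd_{\rm dR}\omega) + \dd_{\rm dR}h^r_D(\omega)$ (for $r\leq m-1$), which are exactly your degree-wise verification of $\partial h_D = i_D\,p_D - \id$ with the sign flip, and dismisses the remaining six relations as immediate from the definitions. Your write-up merely spells out those easy relations and the boundary cases $r=0,m$ in more detail than the paper does.
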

\begin{proof}
From the decompositions \eqref{Omega decomp} and the definition of 
$h^r_D : \Omega^r_{\cc}(D) \to K^{r-1}(D)$, it follows that
\begin{subequations} \label{dh ip id}
\begin{flalign} \label{dh ip id a}
\omega \,=\, \omega_D \int_D \omega + \dd_{\rm dR} h^m_D(\omega) \quad,
\end{flalign}
for every $\omega \in \Omega^m_{\cc}(D)$, and
\begin{flalign} \label{dh ip id b}
\omega \,=\, h^{r+1}_D(\dd_{\rm dR} \omega) + \dd_{\rm dR} h^r_D(\omega) \quad,
\end{flalign}
\end{subequations}
for every $\omega \in \Omega^r_{\cc}(D)$ with $r \in \{ 0, \ldots, m-1 \}$. 
For $r=0$ this reads $\omega = h^1_D(\dd_{\rm dR} \omega)$. 
The identity $\partial h_D = i_D \, p_D - \id$ from \eqref{eqn:deformationretract2} 
then follows from \eqref{dh ip id}, noting the minus sign in the definition of $h_D$. 
All other identities in \eqref{eqn:deformationretract2} immediately follow from the definitions.
\end{proof}
\begin{rem} \label{hD in 1d}
In dimension $m=1$, we have an explicit formula for $h^1_D$, which is 
the only non-trivial component of the homotopy $h_D$ in this case. This formula is given by
\begin{flalign} \label{hD 1 dim}
h^1_D(\omega) \,=\, \int \bigg( \omega - \omega_D \int_D \omega \bigg) \quad,
\end{flalign}
for all $\omega \in \Omega^1_{\cc}(D)$, where $\int :\Omega^1_\cc(D)\to \Omega^0(D)$ 
denotes the following indefinite integral: Writing the disk (i.e.\ interval) 
as $D = (a,b) \subseteq \bbR$, we define $\big(\int\omega\big)(x) := \int_a^x\omega$
for all $x\in D$. Note that if $\int_D\omega = 0$, 
$\int \omega\in  \Omega^0(D)$ has compact support. 
So $h^1_D(\omega)$ is compactly supported. 
\end{rem}

Applying the tensor product $\g \otimes (-) : \Ch \to \Ch$ with the Lie algebra $\g$ 
and the shift functor $[1] : \Ch \to \Ch\,,~V \mapsto V[1]$ to 
\eqref{eqn:deRhamSDR}, we obtain a strong deformation retract
\begin{equation}\label{gdeRhamSDR}
\begin{tikzcd}
\g[1-m] \ar[r,shift right=-1ex,"{i[1]}"] & \ar[l,shift right=-1ex,"{p[1]}"] \g_\cc^{\bbR^m}[1] \ar[loop,out=-25,in=25,distance=30,swap,"{h[1]}"]
\end{tikzcd}\quad,
\end{equation}
where $\g_\cc^{\bbR^m} := \g \otimes \Omega^\bullet_{\bbR^m,\cc} \in \Ch^{\mathrm{Disk}_{\bbR^m}}$
and $\g[1-m] \in \Ch^{\mathrm{Disk}_{\bbR^m}}$ denotes 
the constant $\mathrm{Disk}_{\bbR^m}$-colored object given by $\g[1-m]\cong \g \otimes \bbK[1-m]$ 
on all $D \in \mathrm{Disk}_{\bbR^m}$. 
The components $p_D[1]$, $i_D[1]$ and $h_D[1]$ of the maps in \eqref{gdeRhamSDR} are given explicitly by
\begin{flalign} \label{def pD iD hD shift}
p_D[1]\big(\X \otimes \omega\big) \,=\, \X  \int_D \omega \quad,\quad
i_D[1]\big(\X\big) \,=\,  \X \otimes \omega_D \quad,\quad
h_D[1]\big(\X \otimes \omega\big) \,=\, \X \otimes h^r_D(\omega)\quad, 
\end{flalign}
for all $\omega \in \Omega^r_{\cc}(D)$ with $r \in \{ 0, \ldots, m \}$ 
and $\X \in \g$, where the sign change in the expression for $h_D[1]$ 
compared to \eqref{def pD iD hD} comes from the fact that $h_D$ has degree $-1$ whereas 
$p_D$ and $i_D$ both have degree $0$.
Applying the symmetric algebra functor $\Sym : \Ch \to \Ch$ to \eqref{gdeRhamSDR}, 
we obtain a strong deformation retract
\begin{equation}\label{SymdeRhamSDR}
\begin{tikzcd}[column sep=10mm]
\Sym\big( \g[1-m] \big) \ar[r,shift right=-1ex,"i"] & \ar[l,shift right=-1ex,"p"] \Sym\big( \g_{\cc}^{\bbR^m}[1] \big) \ar[loop,out=-25,in=25,distance=30,swap,"h"]
\end{tikzcd}
\end{equation}
in $\Ch^{\mathrm{Disk}_{\bbR^m}}$, where by abuse of notation we denote the maps simply by 
$p$, $i$ and $h$. Their components are defined in $\Sym$-degree 
$n \in \bbZ_{\geq 0}$ by 
\begin{subequations}\label{SymhD explicit}
\begin{flalign}
p^n_D \,:=\, p_D[1]^{\otimes n}\quad,\qquad i^n_D \,:=\, i_D[1]^{\otimes n}\quad,
\end{flalign}
and the symmetrization
\begin{flalign}
h^n_D \,:=\, \frac{1}{n!} \sum_{j=1}^n \sum_{\sigma \in \Sigma_n} \tau_{\sigma} \,
\Big( \id^{\otimes (j-1)} \otimes h_D[1] \otimes \big(i_D[1]\, p_D[1]\big)^{\otimes (n-j)} \Big)\, \tau_{\sigma^{-1}} \quad,
\end{flalign}
\end{subequations}
where $\tau_{\sigma}$ denotes the action of the permutation $\sigma\in\Sigma_n$ 
via the symmetric braiding \eqref{eqn:braiding}.
The differential on $\Sym\big( \g_{\cc}^{\bbR^m}[1] \big)$ is $\dd_{\rm dR[1]}$, 
i.e.\ the differential on $\g_{\cc}^{\bbR^m}[1]$ extended by the Leibniz rule. 
\sk

The factorization envelope $\mathfrak{U} \g^{\bbR^m} \in \Alg_{\PP_{\bbR^m}}$ is 
constructed by deforming the differential $\dd_{\rm dR[1]}$ of $\Sym\big( \g_{\cc}^{\bbR^m}[1] \big)$ 
along the Chevalley-Eilenberg differential $\dd_\CE$. The latter is defined as follows:
For every disk $D \in \mathrm{Disk}_{\bbR^m}$, the \textit{unshifted} cochain complex 
$\g_{\cc}^{\bbR^m}(D)  = \g\otimes \Omega^\bullet_{\cc}(D) \in \Ch$ 
can be endowed with a dg-Lie algebra structure $[-,-] : \g_{\cc}^{\bbR^m}(D) \otimes \g_{\cc}^{\bbR^m}(D) 
\rightarrow \g_{\cc}^{\bbR^m}(D)$ by setting 
$[\X \otimes \omega, \Y \otimes \eta] := [\X, \Y] \otimes \omega \wedge \eta$, 
for all $\X, \Y \in \g$ and $\omega, \eta \in \Omega^\bullet_{\cc}(D)$. This induces a 
degree $1$ map
\begin{flalign} \label{dCE on gR I}
\dd_\CE \,:\, \Sym\big( \g_{\cc}^{\bbR^m}[1] \big)\, \longrightarrow \,\Sym\big( \g_{\cc}^{\bbR^m}[1] \big)
\end{flalign}
in $\Ch^{\mathrm{Disk}_{\bbR^m}}$, whose component at $D \in \mathrm{Disk}_{\bbR^m}$ is given by
\begin{flalign} \label{dCE def}
\dd_\CE \bigg( \prod_{a=1}^n \big(\X_a \otimes \omega_a\big) \bigg) \,:=\,
\sum_{\substack{i,j=1\\ i < j}}^n\, (-1)^{n^{\leftarrow}_{i,j}} \,(-1)^{\vert\omega_i\vert_{\dR}} \,
\big( [\X_i, \X_j] \otimes \omega_i \wedge \omega_j \big) ~
\prod_{\substack{a=1\\ a \neq i, j}}^n \big(\X_a \otimes \omega_a\big) \quad,
\end{flalign}
for all $\X_a\in \g$ and $\omega_a \in \Omega^\bullet_{\cc}(D)$ 
with $a \in \{ 1, \ldots, n \}$. The products in $\Sym(\g_{\cc}^{\bbR^m}[1])$ are
ordered from left to right and $(-1)^{n^{\leftarrow}_{i,j}}$ denotes
the Koszul sign that arises from bringing $\X_i \otimes \omega_i$ and $\X_j \otimes \omega_j$ 
to the front of the product.
This map satisfies $\dd_{\dR[1]}\, \dd_\CE + \dd_\CE \,\dd_{\dR[1]} = 0$ 
and $\dd_\CE{}^2 = 0$, hence it defines a perturbation of the 
$\mathrm{Disk}_{\bbR^m}$-colored object $\Sym\big( \g_{\cc}^{\bbR^m}[1] \big) \in \Ch^{\mathrm{Disk}_{\bbR^m}}$.
The underlying $\mathrm{Disk}_{\bbR^m}$-colored object of the factorization envelope is then defined as
\begin{flalign}
\mathfrak{U} \g^{\bbR^m} \,:=\,  \Big(\Sym\big( \g_{\cc}^{\bbR^m}[1] \big) , \dd_{\dR[1]} + \dd_{\CE}\Big)\,\in \Ch^{\mathrm{Disk}_{\bbR^m}}\quad.
\end{flalign}
\begin{rem} \label{rem: dCE action}
A trivial yet important observation is that the only non-trivial contributions 
to \eqref{dCE def} come from pairs of factors $\X_i \otimes \omega_i$ 
and $\X_j \otimes \omega_j$ for which the de Rham degrees satisfy
$\vert\omega_i\vert_{\dR} + \vert\omega_j\vert_{\dR} \leq m$.
\end{rem}

Note that the perturbation $\dd_\CE$ is small since it lowers the $\Sym$-degree by one
and the latter is bounded from below by $0$.
Hence, we can apply the homological perturbation lemma \cite{Cra04} and obtain
\begin{propo} \label{prop: fac env SDR}
There exists a perturbed strong deformation retract
\begin{equation} \label{SDR for Ug}
\begin{tikzcd}
\Sym\big( \g[1-m] \big) \arrow[r, "i", shift right=-1ex] & \mathfrak{U} \g^{\bbR^m} \arrow[l, "\widehat{p}", shift right=-1ex] \arrow["\widehat{h}"', loop, out=-25,in=25, distance=30]
\end{tikzcd}
\end{equation}
in the category $\Ch^{\mathrm{Disk}_{\bbR^m}}$
with $i$ given by \eqref{SymdeRhamSDR} and the components of $\widehat{p}$ and $\widehat{h}$ 
at any $D \in \mathrm{Disk}_{\bbR^m}$ given in terms of those of $p$ and $h$ in \eqref{SymdeRhamSDR} by
\begin{flalign}
\widehat{p}_D \,:=\, \sum_{j \geq 0} p_D \,(\dd_\CE \, h_D)^j \quad, \qquad
\widehat{h}_D \,:=\, \sum_{j \geq 0} h_D \,(\dd_\CE \, h_D)^j \quad.
\end{flalign}
\end{propo}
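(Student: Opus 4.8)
The plan is to invoke the homological perturbation lemma (HPL) of \cite{Cra04}, treating the Chevalley--Eilenberg differential $\dd_\CE$ as a perturbation of the de Rham differential $\dd_{\dR[1]}$ on $\Sym\big(\g_\cc^{\bbR^m}[1]\big)$, and applying it to the strong deformation retract \eqref{SymdeRhamSDR}. First I would check the hypotheses. That $\dd_{\dR[1]}+\dd_\CE$ is again a differential, i.e.\ that $\dd_\CE$ is a genuine perturbation, is exactly the two relations $\dd_{\dR[1]}\dd_\CE+\dd_\CE\dd_{\dR[1]}=0$ and $\dd_\CE{}^2=0$ recorded before the proposition, together with $\dd_{\dR[1]}{}^2=0$. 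The required smallness (local nilpotency) is the grading argument already flagged in the text: the homotopy $h$ in \eqref{SymdeRhamSDR} preserves the $\Sym$-degree while $\dd_\CE$ lowers it strictly by one, so $\dd_\CE h$ is locally nilpotent and the geometric series $\sum_{j\geq 0}(\dd_\CE h)^j=(\id-\dd_\CE h)^{-1}$ terminates on every homogeneous component. I would also record that \eqref{SymdeRhamSDR} is a full strong deformation retract, i.e.\ that the side conditions $h\,i=0$, $p\,h=0$, $h^2=0$ of \eqref{eqn:deformationretract2} hold; these are inherited from Lemma \ref{lem: deRhamSDR} under $\g\otimes(-)$, the shift, and the symmetrization defining $h^n_D$, and they guarantee that the perturbed triple is again a bona fide strong deformation retract.

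With the hypotheses in place, the HPL yields a perturbed strong deformation retract between $\big(\Sym(\g[1-m]),d'\big)$ and $\mathfrak{U}\g^{\bbR^m}$, with data produced from $(i,p,h)$ by the Brown--Gugenheim formulas. Setting $A:=(\id-\dd_\CE h)^{-1}\dd_\CE$, these read $d'=p\,A\,i$, $i'=(\id+hA)\,i$, $\widehat{p}=p\,(\id+Ah)$ and $\widehat{h}=h\,(\id+Ah)$, where the unperturbed differential on $\Sym(\g[1-m])$ coming from \eqref{SymdeRhamSDR} is zero. Using the elementary identity $\id+Ah=(\id-\dd_\CE h)^{-1}$ one rewrites $\widehat{p}=p\,(\id-\dd_\CE h)^{-1}=\sum_{j\geq 0}p\,(\dd_\CE h)^j$ and $\widehat{h}=h\,(\id-\dd_\CE h)^{-1}=\sum_{j\geq 0}h\,(\dd_\CE h)^j$, which are precisely the claimed expressions.

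The step that makes the statement clean is the identity $\dd_\CE\, i=0$, and this is where I expect the only real content to lie. It follows from Remark \ref{rem: dCE action}: the image of $i$ consists of symmetric products of the top forms $\omega_D\in\Omega^m_\cc(D)$, so every tensor factor has de Rham degree $m$; since $\dd_\CE$ pairs two factors only when their de Rham degrees sum to at most $m$, and $2m>m$ for $m\geq 1$, no pair ever contributes, whence $\dd_\CE\,i=0$. Substituting this into the HPL formulas gives $A\,i=0$, so that $i'=i$ and $d'=0$; thus the small side of the perturbed retract is genuinely $\Sym(\g[1-m])$ with the inclusion $i$ unchanged, exactly as asserted. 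Beyond this observation I anticipate no serious obstacle: the remaining work is the routine bookkeeping of the HPL sign conventions and the verification of the side conditions, both of which are mechanical once the nilpotency of $\dd_\CE h$ and the vanishing $\dd_\CE\, i=0$ are established.
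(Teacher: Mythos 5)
Your proposal is correct and follows essentially the same route as the paper: apply the homological perturbation lemma of \cite{Cra04} to the retract \eqref{SymdeRhamSDR} with the small perturbation $\dd_\CE$ (smallness from the $\Sym$-degree bound), and observe via Remark \ref{rem: dCE action} that $\dd_\CE\, i_D=0$ since the image of $i_D$ consists of products of top forms, so that $i$ and the trivial differential on $\Sym(\g[1-m])$ are not deformed. The extra detail you supply on the Brown--Gugenheim formulas is consistent with, and merely elaborates, the paper's argument.
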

\begin{proof}
This follows immediately by applying the homological perturbation lemma \cite{Cra04}
to the strong deformation retract \eqref{SymdeRhamSDR} and the small perturbation $\dd_\CE$.
The reason why $i$ does not get deformed is as follows:
For any disk $D \in \mathrm{Disk}_{\bbR^m}$, the image of $i_D^n$ lies in the 
subcomplex $\Sym^n\big( \g \otimes \Omega^m_{\cc}(D)[1] \big)$ 
of $\Sym^n \big(\g_\cc^{\bbR^m}(D)[1]\big)$ that is generated by top forms. Hence,
$\dd_\CE \, i_D = 0$ by Remark \ref{rem: dCE action}, which implies 
that $\widehat{i}_D = i_D$. By the same argument, one sees that 
the (trivial) differential on $\Sym(\g[1-m])$ is also not deformed.
\end{proof}

\subsection{\label{sec: structure maps}Underlying strict prefactorization algebra structure}
The prefactorization algebra structure of the factorization envelope $\mathfrak{U}\g^{\bbR^m} \in \Alg_{\PP_{\bbR^m}}$
can be described as follows. First, we note that $\Sym\big( \g_{\cc}^{\bbR^m}[1] \big) \in \Alg_{\PP_{\bbR^m}}$
carries a canonical prefactorization algebra structure that is defined
from the pre-cosheaf structure of $\g_{\cc}^{\bbR^m}[1]$, i.e.\ extension by zero of 
compactly supported differential forms, and the commutative product of the symmetric algebra.
The Chevalley-Eilenberg differential \eqref{dCE def} is a degree $1$ 
derivation of this $\PP_{\bbR^m}$-algebra structure, i.e.\
\begin{flalign}
(\dd_\CE)_D^{~} ~ \Sym\big( \g_{\cc}^{\bbR^m}[1] \big)\big( \iota^D_{\und{D}} \big) \,=\, 
\Sym\big( \g_{\cc}^{\bbR^m}[1] \big)\big( \iota^D_{\und{D}} \big) ~ \sum_{i=1}^n (\dd_\CE)_{D_i}^{~} \quad,
\end{flalign}
for all operations $\iota_{\und{D}}^D$ in the operad $\PP_{\bbR^m}$. Hence,
we obtain an induced $\PP_{\bbR^m}$-algebra structure on the deformation
$\mathfrak{U}\g^{\bbR^m} = \big(\Sym( \g_{\cc}^{\bbR^m}[1]) , \dd_{\dR[1]} + \dd_{\CE}\big)
\in \Alg_{\PP_{\bbR^m}}$. As discussed in Subsection \ref{sec: Masseys for PFA}, this 
$\PP_{\bbR^m}$-algebra structure transfers along the strong deformation retract \eqref{SDR for Ug} 
to a $(\PP_{\bbR^m})_\infty$-algebra structure on the cohomology $\H\mathfrak{U}\g^{\bbR^m} 
= \Sym(\g[1-m]) \in \Ch^{\mathrm{Disk}_{\bbR^m}}$, which is called a minimal model for 
$\mathfrak{U}\g^{\bbR^m}\in \Alg_{\PP_{\bbR^m}}$. The aim of this subsection 
is to compute the underlying \textit{strict} $(\PP_{\bbR^m})_\infty$-algebra structure
of this minimal model, or equivalently the structure maps from
Proposition \ref{prop:strictstructuremaps}. The presence or absence of 
higher structures in the minimal model will be studied in the next subsection.
\sk

To match our notation for the perturbed strong deformation retract \eqref{SDR for Ug},
we will denote the structure maps from Proposition \ref{prop:strictstructuremaps}
by $\widehat{\mu}_{\und{D}} : \Sym(\g[1-m])^{\otimes n}\to \Sym(\g[1-m])$.
Explicitly, they are given by
\begin{flalign} \label{structure map muD}
\widehat{\mu}_{\und{D}} \,=\, \widehat{p}_{\bbR^m} ~ \mathfrak{U}\g^{\bbR^m}\big(\iota_{\und{D}}^{\bbR^m}\big) ~ 
i_{\und{D}} \,=\, \sum_{j \geq 0} p_{\bbR^m} \, (\dd_\CE \, h_{\bbR^m})^j ~ 
\mathfrak{U}\g^{\bbR^m}\big(\iota_{\und{D}}^{\bbR^m}\big) ~ i_{\und{D}} \quad,
\end{flalign}
where we recall that $\und{D} = (D_1,\dots,D_n)$ is a tuple of
mutually disjoint disks. 
Note that there is a stark difference between the cases of dimension $m=1$ and $m\geq 2$, 
stemming from the trivial observation made in Remark \ref{rem: dCE action}.
Indeed, since the image of $\mathfrak{U}\g^{\bbR^m}\big(\iota_{\und{D}}^{\bbR^m}\big) \, i_{\und{D}}$ 
consists of products of top forms in $\Sym(\g_{\cc}^{\bbR^m}[1](\bbR^m))$ and since
$h_{\bbR^m}$ only lowers the total degree of these forms by $1$, we find the following two cases:
\begin{description}
\item[Dimension $m=1$:] All terms in the sum over $j \geq 0$ in \eqref{structure map muD} can, 
in principle, contribute. This implies that the structure map
$\widehat{\mu}_{\und{D}}$ is a deformation of the standard $n$-ary commutative 
product on $\Sym(\g[1-m])$. (The latter agrees with the $j=0$ term in \eqref{structure map muD}.)

\item[Dimension $m \geq 2$:] Only the $j=0$ term in \eqref{structure map muD} can be 
non-trivial. This implies that the structure map $\widehat{\mu}_{\und{D}}$ is simply the
standard $n$-ary commutative product on $\Sym(\g[1-m])$.
\end{description}

As a consequence of this observation, we can restrict our attention
in the remainder of this subsection to the case of dimension $m=1$.
Since by Proposition \ref{prop:strictstructuremaps} the structure maps
\eqref{structure map muD} define an associative and unital algebra structure on 
$\Sym(\g)$, it suffices to determine the binary multiplication
\begin{flalign} \label{deformed product Sg}
\star \, := \, \widehat{\mu}_{(D, D^\prime)} \,:\, \Sym(\g) \otimes \Sym(\g) \,\longrightarrow\, \Sym(\g)\quad,
\end{flalign}
for any pair of disjoint intervals $D, D^\prime \in \mathrm{Disk}_\bbR$ 
with $D$ to the left of $D^\prime$ in $\bbR$, i.e.\ $D<D^\prime$.
(Note that the unit $\widehat{\mu}_{\varnothing}=1\in \Sym(\g)$ is the standard one since
the homotopy in \eqref{structure map muD} acts trivially in $\Sym$-degree $0$.)
To describe the multiplication  \eqref{deformed product Sg} on arbitrary elements 
of $\Sym(\g)$, it suffices to compute the product $\X^n \star \Y \in \Sym(\g)$ 
for arbitrary $\X, \Y \in \g$ and $n \in \bbZ_{\geq 1}$. Indeed, the product of 
$\Y$ with an arbitrary monomial $\prod_{i=1}^n \X_i \in \Sym(\g)$ 
of degree $n$ is obtained from this by polarization
\begin{flalign}
\bigg( \prod_{i=1}^n \X_i \bigg) \star \Y \,=\, \frac{1}{n!} \,\frac{\partial^n}{\partial t_1 \cdots \partial t_n} \bigg\vert_{t_1, \ldots, t_n = 0} \bigg[ \bigg( \sum_{i=1}^n t_i \,\X_i \bigg)^n \star \Y \bigg] \,\in\, \Sym(\g) \quad,
\end{flalign}
where $t_i\in\bbK$ are parameters for $i \in \{1, \ldots, n\}$,
and the product $\big(\prod_{i=1}^n \X_i\big)\star \big(\prod_{k=1}^m \Y_k\big)$
of two arbitrary monomials is obtained by induction on $m$ using the associativity 
of \eqref{deformed product Sg}. Explicitly, it follows from \eqref{structure map muD} 
that we can write $\big( \prod_{k=1}^{m-1} \Y_k \big) \star \Y_m = \prod_{k=1}^m \Y_k + A$ 
for some $A \in \Sym^{\leq m-1}(\g)$ of $\Sym$-degree at most $m-1$ (the first term $\prod_{k=1}^m \Y_k$ 
corresponds to the $j=0$ term in the sum in \eqref{structure map muD} and $A$ corresponds 
to all other terms $j > 0$). We then have
\begin{flalign}
\bigg(\prod_{i=1}^n \X_i\bigg)\star \bigg(\prod_{k=1}^m \Y_k\bigg) = \Bigg( \bigg(\prod_{i=1}^n \X_i\bigg) \star \bigg( \prod_{k=1}^{m-1} \Y_k \bigg) \Bigg) \star \Y_m - \bigg(\prod_{i=1}^n \X_i\bigg)\star A\quad,
\end{flalign}
where both terms on the right-hand side involve $\star$-products 
in which the second factor has $\Sym$-degree at most $m-1$.
\sk

We will now compute $\X^n \star \Y \in \Sym(\g)$ explicitly. First, let us note that
\begin{flalign}
i_{(D,D^\prime)}( \X^n \otimes \Y) \,=\, 
(\X \otimes \omega_D)^n \otimes (\Y \otimes \omega_{D^\prime}) \,\in\, 
\mathfrak{U}\g^\bbR(D) \otimes \mathfrak{U}\g^\bbR(D^\prime)\quad.
\end{flalign}
Applying the cochain map $\mathfrak{U}\g^\bbR\big(\iota_{(D, D^\prime)}^\bbR\big)$ 
we obtain $(\X \otimes \omega_D)^n \,(\Y \otimes \omega_{D^\prime}) \in \mathfrak{U}\g^\bbR(\bbR)$, and hence
\begin{flalign} \label{alpha on Xm Y}
\X^n \star \Y \,=\, \sum_{j \geq 0} p_\bbR \,\big( \dd_\CE \, h_\bbR \big)^j \,\big((\X \otimes \omega_D)^n \,
(\Y \otimes \omega_{D^\prime}) \big)\, \in\, \Sym(\g) \quad.
\end{flalign}
To determine the latter sum, we can make a computationally important simplification 
by using the freedom to choose our strong deformation retract in Lemma \ref{lem: deRhamSDR}
to satisfy $\omega_\bbR = \omega_{D_0}$ for some \textit{fixed} $D_0 \in \mathrm{Disk}_\bbR$. 
Using independence of the structure maps under the choice of tuples of
intervals with fixed ordering along $\bbR$ (see Proposition \ref{prop:strictstructuremaps}), we can then 
take without loss of generality $D=D_0$ and $D^\prime$ any interval to the right of $D_0$.
With this simplification, we obtain the following result.
\begin{lem} \label{lem: dCE H on Xm Y}
For any $\X, \Y \in \g$ and $j \in \{0, \ldots, n\}$, we have
\begin{flalign} \label{dCE H on Xm Y}
\big( \dd_\CE  \, h_\bbR \big)^j \,\big((\X \otimes \omega_D)^n \,(\Y \otimes \omega_{D^\prime}) \big) 
\,=\, (\X \otimes \omega_D)^{n-j} \,\big( \ad_{\X}^j (\Y) \otimes \eta^{\sb{(n)}}_j \big) \quad,
\end{flalign}
where $\ad_\X(\Y):= [\X,\Y]$ denotes the adjoint action and 
$\eta^{\sb{(n)}}_j \in \Omega^1_{\cc}(\bbR)$ is defined recursively by $\eta^{\sb{(n)}}_0 = \omega_{D^\prime}$ and
\begin{flalign} \label{recurrence eta}
\eta^{\sb{(n)}}_{j+1} \,=\, - (n-j)\, \omega_D\,\int \bigg(\eta^{\sb{(n)}}_j - \omega_D~\int_{\bbR}\eta^{\sb{(n)}}_j\bigg)\quad,
\end{flalign}
for $j \in \{ 0, \ldots, n-1 \}$.
\end{lem}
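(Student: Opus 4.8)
The plan is to prove the formula by induction on $j$, exploiting the trivial but decisive observation of Remark \ref{rem: dCE action}: in dimension $m=1$ the Chevalley--Eilenberg differential $\dd_\CE$ can only contract a factor decorated by a $0$-form against one decorated by a $1$-form. The base case $j=0$ is immediate, since $\eta^{\sb{(n)}}_0 = \omega_{D^\prime}$ and $\ad_\X^0(\Y) = \Y$ reproduce the undeformed element.

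For the inductive step I would assume the claim for some $j\in\{0,\dots,n-1\}$, so that $(\dd_\CE\,h_\bbR)^j$ applied to the element equals $(\X\otimes\omega_D)^{n-j}\big(\ad_\X^j(\Y)\otimes\eta^{\sb{(n)}}_j\big)$, a product of $n-j+1$ factors, all decorated by $1$-forms. First I would apply the symmetrized homotopy $h_\bbR$ from \eqref{SymhD explicit}. The crucial simplification comes from the freedom, set up just before the lemma, to take $D=D_0$ and $\omega_\bbR=\omega_D$: since $\int_\bbR\omega_D=\int_D\omega_D=1$, one computes $h^1_\bbR(\omega_D)=\int\big(\omega_D-\omega_\bbR\int_\bbR\omega_D\big)=0$ and $i_\bbR[1]\,p_\bbR[1](\X\otimes\omega_D)=\X\otimes\omega_D$. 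Thus $h_\bbR[1]$ annihilates every $\X\otimes\omega_D$-factor while $i_\bbR[1]\,p_\bbR[1]$ fixes it. Feeding this into \eqref{SymhD explicit}, every term in which $h_\bbR[1]$ lands on a $\X\otimes\omega_D$-factor vanishes, and in the surviving terms $i_\bbR[1]\,p_\bbR[1]$ acts as the identity on the remaining $\X\otimes\omega_D$-factors; the sum over $\Sigma_{n-j+1}$ together with the prefactor $1/(n-j+1)!$ then collapses to a single effective term of coefficient $1$, giving
\begin{flalign}
h_\bbR\big((\X\otimes\omega_D)^{n-j}(\ad_\X^j(\Y)\otimes\eta^{\sb{(n)}}_j)\big)\,=\,(\X\otimes\omega_D)^{n-j}\big(\ad_\X^j(\Y)\otimes h^1_\bbR(\eta^{\sb{(n)}}_j)\big)\quad,
\end{flalign}
where $h^1_\bbR(\eta^{\sb{(n)}}_j)=\int\big(\eta^{\sb{(n)}}_j-\omega_D\int_\bbR\eta^{\sb{(n)}}_j\big)$ is a compactly supported $0$-form by Remark \ref{hD in 1d}.

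Next I would apply $\dd_\CE$. The symmetric product now carries exactly one factor decorated by a $0$-form and $n-j$ factors decorated by the $1$-form $\omega_D$, so by Remark \ref{rem: dCE action} the only surviving contractions in \eqref{dCE def} pair the unique $0$-form factor $\ad_\X^j(\Y)\otimes h^1_\bbR(\eta^{\sb{(n)}}_j)$ with one of the $1$-form factors $\X\otimes\omega_D$ (a contraction of two $1$-forms vanishes, and there is no second $0$-form). Each of the $n-j$ identical contractions produces $[\X,\ad_\X^j(\Y)]=\ad_\X^{j+1}(\Y)$ together with the $1$-form $\omega_D\wedge h^1_\bbR(\eta^{\sb{(n)}}_j)$. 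Here I would check that the Koszul prefactor $(-1)^{n^{\leftarrow}}$ equals $+1$ (all $\X\otimes\omega_D$-factors sit in shifted degree $0$, so reordering picks up no sign), while the explicit factor $(-1)^{\vert\omega_D\vert_{\dR}}=-1$ survives. Summing the $n-j$ equal terms yields $(\X\otimes\omega_D)^{n-j-1}\big(\ad_\X^{j+1}(\Y)\otimes\eta^{\sb{(n)}}_{j+1}\big)$ with $\eta^{\sb{(n)}}_{j+1}=-(n-j)\,\omega_D\,h^1_\bbR(\eta^{\sb{(n)}}_j)$, which is precisely the recursion \eqref{recurrence eta}, completing the induction.

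The main obstacle is entirely bookkeeping: correctly accounting for the combinatorial factor produced by the symmetrized homotopy, and for the sign produced by $\dd_\CE$. The conceptual input that makes the answer this clean is the choice $\omega_\bbR=\omega_D$, which forces $h^1_\bbR(\omega_D)=0$ and makes $i_\bbR[1]\,p_\bbR[1]$ act trivially on the top-form factors; as a consequence $h_\bbR$ effectively acts only on the single distinguished factor, and neither spurious terms nor fractional coefficients appear.
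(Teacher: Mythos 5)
Your proposal is correct and follows essentially the same route as the paper's proof: induction on $j$, with the inductive step consisting of (i) applying the symmetrized homotopy and using the choice $\omega_\bbR=\omega_D$ to kill all terms where $h_\bbR[1]$ hits an $\X\otimes\omega_D$-factor (so that $h$ effectively acts only on the distinguished factor, with the combinatorial prefactor collapsing to $1$), and (ii) applying $\dd_\CE$, where only contractions of the unique $0$-form factor against the $n-j$ top-form factors survive, yielding the factor $-(n-j)$ in the recurrence. Your sign and coefficient bookkeeping is consistent with the paper's conventions, so there is nothing to add.
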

\begin{proof}
We will show that, for any $j \in \{ 0, \ldots, n-1 \}$ and $\eta^{\sb{(n)}}_j \in \Omega^1_{\cc}(\bbR)$, we have
\begin{flalign}
\dd_\CE \, h_\bbR \Big( (\X \otimes \omega_D)^{n-j} \,\big( \ad_{\X}^j (\Y) \otimes \eta^{\sb{(n)}}_j \big) \Big) \,=\, 
(\X \otimes \omega_D)^{n-j-1} \, \big( \ad_{\X}^{j+1} (\Y) \otimes \eta^{\sb{(n)}}_{j+1} \big) \quad,
\end{flalign}
with $\eta^{\sb{(n)}}_{j+1} \in \Omega^1_{\cc}(\bbR)$ given by \eqref{recurrence eta}.
The desired identity \eqref{dCE H on Xm Y} then follows by induction with the choice $\eta^{\sb{(n)}}_0 = \omega_{D^\prime}$.
\sk

Using the explicit expression for $h^{n-j+1}_\bbR$ given in \eqref{SymhD explicit}, we find that
\begin{flalign}
h^{n-j+1}_\bbR \Big((\X \otimes \omega_D)^{n-j} \,\big( \ad_{\X}^j(\Y) \otimes \eta^{\sb{(n)}}_j \big) \Big)
\,=\, (\X \otimes \omega_D)^{n-j} \,\bigg( \ad_{\X}^j (\Y) \otimes \int \bigg( \eta^{\sb{(n)}}_j - \omega_D \int_\bbR \eta^{\sb{(n)}}_j \bigg) \bigg)\quad.
\end{flalign}
In this calculation it is crucial to use the specific choice $\omega_\bbR = \omega_D$
for the strong deformation retract, which in particular implies that $i_\bbR \,  p_\bbR$ 
acts as the identity on $\X \otimes \omega_D$. Furthermore, if the factor 
$h_\bbR[1]$ in $h^{n-j+1}_\bbR$ from \eqref{SymhD explicit} acts on any of 
the $n{-}j$ factors $\X \otimes \omega_D$, then the corresponding term in the sum vanishes 
by virtue of the choice $\omega_\bbR = \omega_D$. Hence, $h_\bbR[1]$ must act 
on the last factor $ \ad_{\X}^j (\Y) \otimes \eta^{\sb{(n)}}_j$, and it can be computed
by using the explicit expression for $h_\bbR[1]$ in dimension $m=1$ given in Remark \ref{hD in 1d}.
The result now follows by applying $\dd_\CE$ given in \eqref{dCE def}.
\end{proof}

Applying $p_\bbR^{n-j+1} $ to both sides of the identity \eqref{dCE H on Xm Y}, 
we obtain
\begin{flalign}
p_\bbR \,( \dd_\CE \, h_\bbR )^j \big((\X \otimes \omega_D)^n \,(\Y \otimes \omega_{D^\prime}) \big) 
\,=\, \X^{n-j} \,\ad_{\X}^j (\Y) ~ \int_\bbR \eta^{\sb{(n)}}_j \quad,
\end{flalign}
so it remains to solve the recurrence relation \eqref{recurrence eta} 
for $\eta^{\sb{(n)}}_j \in \Omega^1_{\cc}(\bbR)$ and compute its integral over $\bbR$. 
The result is given by the following lemma.
\begin{lem}
For all $j \in \{0,\ldots,n\}$, we have
\begin{flalign}
\int_\bbR \eta^{\sb{(n)}}_j \, =\, (-1)^{j}\,\binom{n}{j}\, B_j \quad,
\end{flalign}
where $B_j$ are the Bernoulli numbers with sign convention $B_1 = -\frac{1}{2}$.
\end{lem}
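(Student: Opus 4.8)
The plan is to reduce the recurrence \eqref{recurrence eta} to a scalar recurrence on the unit interval and then to recognise the solution in terms of Bernoulli polynomials. First I would dispose of $j=0$, where $\int_\bbR \eta^{\sb{(n)}}_0 = \int_\bbR \omega_{D'} = 1 = (-1)^0\binom{n}{0}B_0$. To treat $j\ge 1$ I introduce the primitive $\Theta_D(x) := \int_{-\infty}^x \omega_D$, a smooth function rising from $0$ to $1$ across $D$, so that $\omega_D = \dd\Theta_D$ and the indefinite integral of Remark \ref{hD in 1d} is $\big(\int\omega\big)(x)=\int_{-\infty}^x\omega$. With this notation the bracket in \eqref{recurrence eta} is exactly $\int_{-\infty}^x\eta^{\sb{(n)}}_j - \big(\int_\bbR\eta^{\sb{(n)}}_j\big)\,\Theta_D(x)$, and since $\eta^{\sb{(n)}}_{j+1}$ is proportional to $\omega_D$, every $\eta^{\sb{(n)}}_j$ with $j\ge 1$ is supported in $D$.

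The central step is then to change variables to $u := \Theta_D(x)\in[0,1]$ on $D$ and to write $\eta^{\sb{(n)}}_j = \phi_j(u)\,\dd u$ for $j\ge1$, so that $c_j := \int_\bbR\eta^{\sb{(n)}}_j = \int_0^1\phi_j$ and $\int_{-\infty}^x\eta^{\sb{(n)}}_j = \int_0^u\phi_j$ for $x\in D$. Because $D$ lies to the left of $D'$ one has $\Theta_{D'}\equiv 0$ on $D$, which turns the initial step ($j=0$) into $\phi_1(u)=nu$, while for $j\ge1$ the recurrence \eqref{recurrence eta} becomes the scalar recurrence $\phi_{j+1}(u) = -(n-j)\big(\int_0^u\phi_j(v)\,\dd v - u\,c_j\big)$. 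This reduction, which isolates the only dependence on the shapes of $\omega_D,\omega_{D'}$ (entering through $\int_\bbR\omega_{D'}=1$ and the choice $\omega_\bbR=\omega_D$ of Lemma \ref{lem: deRhamSDR}), is the step I expect to be the main obstacle, as it requires keeping careful track of supports and of the indefinite-integral operator; everything afterwards is a routine induction.

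It then remains to solve the scalar recurrence. I would prove by induction that $\phi_j(u) = (-1)^{j-1}\binom{n}{j}\big(B_j(u) - B_j\big)$ for $1\le j\le n$, where $B_j(u)$ denotes the $j$-th Bernoulli polynomial. The base case is $\phi_1(u)=nu=\binom{n}{1}\big(B_1(u)-B_1\big)$, using $B_1(u)-B_1=u$. For the inductive step I use $B_j'(u)=jB_{j-1}(u)$ and $B_{j+1}(0)=B_{j+1}$ to evaluate $\int_0^u\phi_j$, together with $c_j=(-1)^j\binom{n}{j}B_j$ (which follows from the inductive formula and $\int_0^1 B_j=0$), obtaining $\int_0^u\phi_j - u\,c_j = (-1)^{j-1}\binom{n}{j}\tfrac{1}{j+1}\big(B_{j+1}(u)-B_{j+1}\big)$; the binomial identity $(n-j)\binom{n}{j}/(j+1)=\binom{n}{j+1}$ then yields $\phi_{j+1}(u)=(-1)^{j}\binom{n}{j+1}\big(B_{j+1}(u)-B_{j+1}\big)$, closing the induction. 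Finally, integrating over $[0,1]$ and using $\int_0^1 B_j(u)\,\dd u=0$ for $j\ge1$ gives $\int_\bbR\eta^{\sb{(n)}}_j = c_j = (-1)^{j-1}\binom{n}{j}(0-B_j)=(-1)^j\binom{n}{j}B_j$, which together with the $j=0$ case establishes the lemma for all $j\in\{0,\ldots,n\}$.
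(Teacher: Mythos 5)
Your proof is correct, but it takes a genuinely different route from the paper. The paper never computes the forms $\eta^{\sb{(n)}}_j$ themselves: it uses the iterated-integral identity $\int_\bbR \omega_D \int \omega_D \cdots \int \omega_D = \tfrac{1}{n!}$ together with $\omega_D\int\omega_{D'}=0$ to unwind the recurrence \eqref{recurrence eta} into a full-history scalar recurrence for $c_j=\int_\bbR\eta^{\sb{(n)}}_j$ in terms of all $c_0,\dots,c_{j-1}$, and then matches this with the defining recurrence $\sum_{r=0}^j\binom{j+1}{r}B_r=0$ of the Bernoulli numbers. You instead obtain a closed form for the forms themselves, $\eta^{\sb{(n)}}_j = (-1)^{j-1}\binom{n}{j}\big(B_j(\Theta_D)-B_j\big)\,\omega_D$ for $j\ge 1$, via the substitution $u=\Theta_D(x)$, and read off the integrals using $\int_0^1 B_j(u)\,\dd u=0$. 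Your version proves strictly more (it identifies the $\eta^{\sb{(n)}}_j$ with Bernoulli polynomials, which makes the appearance of the Gutt star-product coefficients transparent) and replaces the full-history recurrence by a clean two-term induction; the paper's version avoids any discussion of the shape of the $\eta^{\sb{(n)}}_j$ and works purely with numbers. One small point of care in writing up your argument: $\Theta_D$ need not be monotone (since $\omega_D$ need not be nonnegative), so the ``change of variables'' should be phrased not as a literal reparametrization of $D$ by $[0,1]$ but as the verification, by induction, that $\eta^{\sb{(n)}}_j(x)=\phi_j\big(\Theta_D(x)\big)\,\omega_D(x)$ for a polynomial $\phi_j$, using the chain rule to get $\int_{-\infty}^x\eta^{\sb{(n)}}_j=\Phi_j\big(\Theta_D(x)\big)$ with $\Phi_j'=\phi_j$, $\Phi_j(0)=0$, and $\int_\bbR\eta^{\sb{(n)}}_j=\Phi_j(1)$. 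With that phrasing every step you outline goes through, including the binomial identity $(n-j)\binom{n}{j}/(j+1)=\binom{n}{j+1}$ that closes the induction.
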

\begin{proof}
First, we observe that the $n^{\rm th}$ iterated integral of $\omega_D$ 
is given by $\int_\bbR \omega_D \int \omega_D \cdots \int \omega_D = \frac{1}{n!}$ 
and that $\omega_D \int \omega_{D^\prime} = 0$ since $D$ is to the left of $D^\prime$.
This allows us to use the recurrence relation \eqref{recurrence eta} to express 
$\int_\bbR \eta^{\sb{(n)}}_j$ for any $j \in\{1,\ldots,n\}$ in terms of all the previous 
$\int_\bbR \eta^{\sb{(n)}}_r$ for $r = 0, \ldots, j-1$ as follows
\begin{flalign}
-(-1)^{j}\frac{1}{\binom{n}{j}} \int_\bbR \eta^{\sb{(n)}}_j \,=\,
\frac{1}{j+1} \,\sum_{r=0}^{j-1} \binom{j+1}{r}\, (-1)^{r} \,\frac{1}{\binom{n}{r}} \,\int_\bbR \eta^{\sb{(n)}}_r \quad.
\end{flalign}
The statement now follows from the fact that the Bernoulli numbers satisfy the same recurrence relation, namely,
for all $j>0$,
\begin{flalign}
\sum_{r=0}^j\binom{j+1}{r} \, B_r \,=\,0\quad \Longleftrightarrow\quad
-B_j \,=\, \frac{1}{j+1}\, \sum_{r=0}^{j-1} \binom{j+1}{r} \, B_r\quad,
\end{flalign}
and the same initial condition $B_0 = 1 = \int_{\bbR}\omega_{D^\prime} = \int_\bbR \eta^{\sb{(n)}}_0$.
\end{proof}

From the results above, we can deduce that
\eqref{alpha on Xm Y} is given explicitly by
\begin{flalign}
\X^n \star \Y \,=\, \sum_{j=0}^n (-1)^j\,\binom{n}{j} \, B_j ~ \X^{n-j} \,\ad_{\X}^j (\Y) \quad.
\end{flalign}
This identity characterizes the Gutt star-product \cite{Gutt} on $\Sym (\g)$, see e.g.\ 
\cite[Proposition 2.7]{ESW}, hence the associative and unital product \eqref{deformed product Sg} on $\Sym(\g)$ 
coincides with the Gutt star-product. Summing up, we have shown the following result, which 
was observed before in \cite[Proposition 3.4.1]{CostelloGwilliam} by using different techniques.
\begin{propo} \label{prop: Ug}
The $\PP_\bbR$-algebra structure on the factorization envelope
$\mathfrak{U}\g^\bbR \in \Alg_{\PP_\bbR}$ transfers along the strong deformation retract 
\eqref{SDR for Ug} to the associative and unital algebra structure on $\Sym(\g)$ that is given
by the Gutt star-product. The latter is isomorphic to the universal enveloping algebra $U\g$.
\end{propo}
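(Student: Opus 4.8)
The plan is to assemble the result from the structural facts already in place. By Proposition~\ref{prop:strictstructuremaps}$(1)$ the transferred structure maps $\widehat{\mu}_{\und{D}}$ form an associative and unital algebra on $\Sym(\g)$, so the whole product is pinned down by its binary component $\star=\widehat{\mu}_{(D,D^\prime)}$. I would then use associativity, together with the polarization identity and the induction on the $\Sym$-degree of the second factor spelled out above, to reduce the determination of $\star$ on arbitrary monomials to the computation of the single family $\X^n\star\Y$ with $\X,\Y\in\g$.

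The core of the argument is the explicit evaluation of $\X^n\star\Y$ from the homological-perturbation formula \eqref{structure map muD}. Following \eqref{alpha on Xm Y}, I would push $\X^n\otimes\Y$ through $i_{(D,D^\prime)}$, the extension map $\mathfrak{U}\g^\bbR(\iota^\bbR_{(D,D^\prime)})$, the iterated perturbation $(\dd_\CE\,h_\bbR)^j$, and the projection $p_\bbR$. The two preceding lemmas do the real work: Lemma~\ref{lem: dCE H on Xm Y} shows inductively that each factor of $\dd_\CE\,h_\bbR$ raises the adjoint power by one and replaces the $1$-form data according to the recurrence \eqref{recurrence eta}, while the subsequent lemma solves that recurrence by matching it with the defining recursion of the Bernoulli numbers. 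Together they yield $\X^n\star\Y=\sum_{j=0}^n(-1)^j\binom{n}{j}B_j\,\X^{n-j}\ad_\X^j(\Y)$.

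It then remains to interpret this algebraically. The displayed formula is precisely the one characterizing the Gutt star-product on $\Sym(\g)$, as recorded in \cite[Proposition~2.7]{ESW}; since the reduction above shows that an associative unital product on $\Sym(\g)$ is fixed by the values $\X^n\star\Y$, this identifies $\star$ with the Gutt star-product. Finally, I would invoke the classical fact that the Gutt star-product is a realization of the Poincar\'e--Birkhoff--Witt theorem, the symmetrization map furnishing an associative-algebra isomorphism $(\Sym(\g),\star)\cong U\g$. The genuinely delicate point throughout is the appearance of the Bernoulli numbers in the iterated perturbation; this is exactly what the two lemmas settle, after which the identification with $U\g$ is immediate.
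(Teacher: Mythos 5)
Your proposal is correct and follows essentially the same route as the paper: reduce to the binary product via Proposition \ref{prop:strictstructuremaps}, reduce further to $\X^n\star\Y$ by polarization and induction, evaluate the perturbation series using Lemma \ref{lem: dCE H on Xm Y} and the Bernoulli-number recurrence, and identify the result with the Gutt star-product via \cite[Proposition 2.7]{ESW}. The only addition is your explicit mention of the symmetrization/PBW isomorphism with $U\g$, which the paper asserts without elaboration.
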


\subsection{\label{subsec:universal1stMassey}Universal first-order Massey product}
On top of the underlying strict $(\PP_{\bbR^m})_\infty$-algebra
structure on the cohomology $\H\mathfrak{U}\g^{\bbR^m}$ that we have described
in Subsection \ref{sec: structure maps}, there may exist additional higher structures
in the sense of Subsections \ref{subsec:minmod} and \ref{sec: universal Massey}.
The first instance of such higher structures is given by the universal first-order
Massey product, which in the context of locally constant prefactorization algebras
was described in Subsection \ref{sec: lcPFA}. The aim of this subsection
is to compute and interpret the universal first-order Massey products for the factorization
envelopes $\mathfrak{U}\g^{\bbR^m}\in\Alg_{\PP_{\bbR^m}}$ on $\bbR^m$.
\sk

Following the approach in Subsection \ref{sec: lcPFA}, we can
use local constancy of the factorization envelope $\mathfrak{U}\g^{\bbR^m}$ 
to improve the strong deformation retract \eqref{SDR for Ug} as in Lemma \ref{lem:PFASDRimproved}.
Then the $1$-cocycle representing the universal first-order Massey product \eqref{eqn:improvedMassey}
becomes
\begin{flalign}\label{Massey example}
\widetilde{\beta}^{(2)}\big(\iota_{\und{D}}^D,\iota_{\und{D}_i}^{D_i}\big)\,=\,
\widehat{\beta}^{(2)}\big(\iota_{\und{D}}^{\bbR^m},\iota_{\und{D}_i}^{D_i}\big)
- \widehat{\mu}_{\und{D}}~\widehat{\beta}^{(2)}\big(\iota_{D_i}^{\bbR^m},\iota_{\und{D}_i}^{D_i}\big) \quad,
\end{flalign}
where $\widehat{\mu}_{\und{D}}$ denotes the structure maps from Subsection \ref{sec: structure maps}
and $\widehat{\beta}^{(2)}$ is defined by the commutative diagram \eqref{eqn:firstMasseyonRm} 
in terms of the $1$-cocycle $\beta^{(2)}$ associated with the original strong deformation retract \eqref{SDR for Ug}.
In our present example, the identification between $\beta^{(2)}$ and $\widehat{\beta}^{(2)}$ is trivial
because $\H\mathfrak{U}\g^{\bbR^m}\!(\iota_D^{D^\prime}) = \id$, for all $D\subset D^\prime$,
hence we obtain the explicit expression
\begin{flalign}\label{eqn:hatMasseyexample}
\widehat{\beta}^{(2)}\big(\iota_{\und{D}}^{\bbR^m},\iota_{\und{D}_i}^{D_i}\big) \,=\,
\widehat{p}_{\bbR^m}\, \mathfrak{U}\g^{\bbR^m}\!\big(\iota_{\und{D}}^{\bbR^m}\big)~\widehat{h}_{D_i}~ 
\mathfrak{U}\g^{\bbR^m}\!\big(\iota_{\und{D}_i}^{D_i}\big)~i_{(D_1,\dots,\und{D}_i,\dots, D_n)}\quad.
\end{flalign}
We observe that the behavior of this $1$-cocycle depends strongly on the dimension $m$. 
\begin{description}
\item[Dimension $m=1$:] The cohomology $\H\mathfrak{U}\g^{\bbR}\cong \Sym(\g)$ is concentrated
in cohomological degree $0$, hence the degree $-1$ operations
$\widehat{\beta}^{(2)}\big(\iota_{\und{D}}^{\bbR^m},\iota_{\und{D}_i}^{D_i}\big) : \Sym(\g)^{\otimes (n+n_i-1)}\to \Sym(\g)$
all vanish on degree grounds. More generally, by the same reason, 
the transferred $(\PP_{\bbR^m})_\infty$-algebra structure $\beta$ on $\H\mathfrak{U}\g^{\bbR^m}$
vanishes on all trees with more than one vertex. This implies that the 
factorization envelope $\mathfrak{U}\g^{\bbR}\in \Alg_{\PP_{\bbR}}$ in $m=1$ dimensions
is formal, i.e.\ it is $\infty$-quasi-isomorphic to its cohomology 
$\H\mathfrak{U}\g^{\bbR}\cong \Sym(\g)$ endowed with the strict prefactorization algebra structure.
By a similar argument, an analogous result is found in \cite[Section 5.3]{PFAKoszul}. 

\item[Dimension $m\geq 3$:] Using the concrete expression for the
strong deformation retract \eqref{SDR for Ug}, we can write \eqref{eqn:hatMasseyexample} more explicitly as
\begin{flalign}\label{eqn:degreeargument m>2}
\widehat{\beta}^{(2)}\big(\iota_{\und{D}}^{\bbR^m},\iota_{\und{D}_i}^{D_i}\big)
=\sum_{j,k \geq 0} \! p_{\bbR^m} (\dd_\CE \, h_{\bbR^m})^j 
\mathfrak{U}\g^{\bbR^m}\!\big(\iota_{\und{D}}^{\bbR^m}\big)\, h_{D_i} (\dd_\CE \, h_{D_i})^k \, 
\mathfrak{U}\g^{\bbR^m}\!\big(\iota_{\und{D}_i}^{D_i}\big)\, i_{(D_1,\dots,\und{D}_i,\dots, D_n)} \quad.
\end{flalign}
As a consequence of Remark \ref{rem: dCE action}, one would need at
least $m \geq 3$ homotopies $h$ preceding any $\dd_\CE$ to obtain a
non-trivial contribution to the sum. This is obviously impossible,
which implies that all $\widehat{\beta}^{(2)}\big(\iota_{\und{D}}^{\bbR^m},\iota_{\und{D}_i}^{D_i}\big)=0$
vanish in dimension $m\geq 3$, and consequently
all $\widetilde{\beta}^{(2)}\big(\iota_{\und{D}}^D,\iota_{\und{D}_i}^{D_i}\big)=0$ 
in \eqref{Massey example} vanish as well. 
This implies that the factorization envelope $\mathfrak{U}\g^{\bbR^{m}}\in \Alg_{\PP_{\bbR^m}}$ in $m\geq 3$ 
dimensions has a trivial universal first-order Massey product 
$[\widetilde{\beta}^{(2)}] = 0$, but it still may have non-trivial higher-order Massey products
in the sense of \cite{Dimitrova,UniversalMassey3}. See also the related discussion in Remark \ref{rem:PFAEmPm}.
\end{description}

As a consequence of this observation, we restrict our attention
in the remainder of this subsection to the case of dimension $m=2$.
Our goal is to show that the cohomology class $\big[\widetilde{\beta}^{(2)}\big]\in\H^1_{\Gamma}(\H\mathfrak{U}\g^{\bbR^2})$
is non-trivial, which implies that the cohomology $\H\mathfrak{U}\g^{\bbR^2}\cong \Sym(\g[-1])$ carries,
on top of its commutative algebra structure from Subsection \ref{sec: structure maps},
non-trivial higher structures. To show non-triviality of $\big[\widetilde{\beta}^{(2)}\big]$,
we will study the simpler $L$-invariant from Subsection \ref{subsec:2dinvariant}, which is available 
in our present case of $m=2$ dimensions.
\sk

We start by computing explicitly the value of the improved $1$-cocycle $\widetilde{\beta}^{(2)}$ in
\eqref{Massey example} on trees of the form \eqref{eqn:complicatedtrees}, and use this result to argue
that the $L$-invariant \eqref{eqn:Lcombination} and hence the
cohomology class $\big[\widetilde{\beta}^{(2)}\big] \neq 0$ is non-trivial.
For our argument it suffices to focus on trees with only two free edges
\begin{flalign} \label{binary tree}
t\big(\iota_{(D_1^{\prime},D_2)}^{D}, \iota_{D_1}^{D_1^{\prime}}\big)~~=~~
\parbox{2cm}{\begin{tikzpicture}[cir/.style={circle,draw=black,inner sep=0pt,minimum size=2mm},
        poin/.style={rectangle, inner sep=2pt,minimum size=0mm},scale=0.8, every node/.style={scale=0.8}]
\node[poin] (i)   at (0,0) {};
\node[poin] (v1)  at (0,-1) {{$\iota_{(D_1^{\prime},D_2)}^{D}$}};
\node[poin] (v2)  at (-0.75,-2) {{$\iota^{D_1^{\prime}}_{D_1}$}};
\node[poin] (o1)  at (0.75,-2) {};
\node[poin] (o22)  at (-0.75,-3) {};
\draw[thick] (i) -- (v1);
\draw[thick] (v1) -- (v2);
\draw[thick] (v1) -- (o1);
\draw[thick] (v2) -- (o22);
\end{tikzpicture}}\quad,
\end{flalign}
for all $D_1, D_1^{\prime}, D_2, D \in \mathrm{Disk}_{\bbR^2}$ 
such that $D_1 \subset D_1^{\prime}$, $D_2 \cap D_1^{\prime} = \emptyset$ and 
$D_1^{\prime} \sqcup D_2 \subset D$. The values of the $1$-cocycle \eqref{Massey example} on 
such trees are degree $-1$ binary operations
\begin{flalign}
\widetilde{\beta}^{(2)}\big(\iota_{(D_1^{\prime},D_2)}^{D}, \iota_{D_1}^{D_1^{\prime}}\big) 
\,:\, \Sym(\g[-1]) \otimes \Sym(\g[-1])\, \longrightarrow \, \Sym(\g[-1])
\end{flalign}
on $\Sym(\g[-1]) \in \Ch$, which we will compute in Lemma \ref{lem: Massey computation} below.
\sk

In order to ease the notation in what follows, we will always suppress the
extension maps $\mathfrak{U}\g^{\bbR^2}\!\big(\iota_{D}^{D^\prime}\big)$
for all disk inclusion $D \subset D^\prime$, which is justified since
they are simply extension by zero of compactly supported forms from $D$ to $D^\prime$.
Then the first term on the right-hand side of \eqref{Massey example} explicitly reads
\begin{flalign}
\nn \widehat{\beta}^{(2)}\big(\iota_{(D_1^{\prime},D_2)}^{\bbR^2},\iota_{D_1}^{D_1^{\prime}}\big) \,&=\,
\widehat{p}_{\bbR^2}\, \mathfrak{U}\g^{\bbR^2} \!\big(\iota_{(D_1^{\prime},D_2)}^{\bbR^2}\big)\, \widehat{h}_{D_1^{\prime}} 
\, i_{(D_1, D_2)}\\[4pt]
&=\,
\sum_{j,k \geq 0} p_{\bbR^2} \,(\dd_\CE \, h_{\bbR^2})^j \, \mathfrak{U}\g^{\bbR^2} \!\big(\iota_{(D_1^{\prime},D_2)}^{\bbR^2}\big)\, h_{D_1^{\prime}} \,(\dd_\CE \, h_{D_1^{\prime}})^k \, i_{(D_1,D_2)} \quad.
\end{flalign}
It follows from Remark \ref{rem: dCE action} 
that the only non-vanishing term in the sum is given by $k=0$ and $j=1$,
hence the above reduces to
\begin{flalign}
\widehat{\beta}^{(2)}\big(\iota_{(D_1^{\prime},D_2)}^{\bbR^2},\iota_{D_1}^{D_1^{\prime}}\big) \,=\,
p_{\bbR^2} \, \dd_\CE \, h_{\bbR^2} \, \mathfrak{U}\g^{\bbR^2}\! \big(\iota_{(D_1^{\prime},D_2)}^{\bbR^2}\big) 
\, h_{D_1^{\prime}} \, i_{(D_1,D_2)} \quad.
\end{flalign}
Similarly, for the $\widehat{\beta}^{(2)}$ factor in
the second term on the right-hand side of \eqref{Massey example}, we have
\begin{flalign}
\widehat{\beta}^{(2)}\big(\iota_{D_1^{\prime}}^{\bbR^2},\iota_{D_1}^{D_1^{\prime}}\big) \,=\,
\widehat{p}_{\bbR^2}\, \widehat{h}_{D_1^{\prime}} \, i_{D_1} \,=\,
p_{\bbR^2} \, \dd_\CE \, h_{\bbR^2} \, h_{D_1^\prime} \, i_{D_1} \quad.
\end{flalign}
Using also the result from Subsection \ref{sec: structure maps} that 
$\widehat{\mu}_{(D_1^\prime,D_2)} = \mu_2$
is the standard commutative binary multiplication of the symmetric algebra,
we can put the above together and find
that the value of the $1$-cocycle \eqref{Massey example} on the trees \eqref{binary tree} can be written explicitly as
\begin{flalign} \label{Massey explicit R2}
\widetilde{\beta}^{(2)}\big(\iota_{(D_1^{\prime},D_2)}^D,\iota_{D_1}^{D_1^{\prime}}\big)\, =\,
p_{\bbR^2} \, \dd_\CE \, h_{\bbR^2} \, \mathfrak{U}\g^{\bbR^2}\! \big(\iota_{(D_1^{\prime},D_2)}^{\bbR^2}\big) 
\, h_{D_1^{\prime}} \, i_{(D_1,D_2)} -\mu_2 \, p_{\bbR^2} \, \dd_\CE \, h_{\bbR^2} \, h_{D_1^{\prime}} \, i_{D_1} \quad. 
\end{flalign}
Our first result is the following
\begin{lem} \label{lem: Massey computation}
For every pair of elements $A, B \in \Sym(\g[-1])$, we have that
\begin{flalign}\label{eqn:explicitt21Masseypre}
\widetilde{\beta}^{(2)}\big(\iota_{(D_1^{\prime},D_2)}^D,\iota_{D_1}^{D_1^{\prime}}\big)(A \otimes B) \,=\, 
\varphi_0\big(\iota_{(D_1^{\prime},D_2)}^D,\iota_{D_1}^{D_1^{\prime}}\big) ~ \{ A, B \}_{(-1)}^{}\quad,
\end{flalign} 
where 
\begin{flalign} \label{shifted PB}
\{ - , - \}_{(-1)}^{} \, : \, \Sym(\g[-1]) \otimes \Sym(\g[-1]) \,\longrightarrow\, \Sym(\g[-1])
\end{flalign}
denotes the degree $-1$ Poisson bracket (i.e.\ $\mathbb{P}_2$-algebra structure) 
that is defined on the generators by $\{ \X, \Y \}_{(-1)}^{} := [\X, \Y]$, for all $\X,\Y \in \g[-1]$, 
and the numerical prefactor is given explicitly by 
\begin{flalign}\label{eqn:varphi0}
\varphi_0\big(\iota_{(D_1^{\prime},D_2)}^D,\iota_{D_1}^{D_1^{\prime}}\big) 
\,:=\, \frac{1}{2}\, \int_{\bbR^2} \Big( h^1_{\bbR^2} \big( h^2_{D_1^\prime}(\omega_{D_1}) \big) 
\wedge (\omega_{D_2} + \omega_{\bbR^2}) + h^2_{\bbR^2}(\omega_{D_2}) \wedge h^2_{D_1^\prime}(\omega_{D_1}) \Big) \quad.
\end{flalign}
\end{lem}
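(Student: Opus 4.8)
The plan is to evaluate the explicit formula \eqref{Massey explicit R2} on a pair of monomials and to match the outcome, channel by channel, with the degree $-1$ Poisson bracket. Since the generators of $\Sym(\g[-1])$ sit in odd degree, pure powers vanish and it is cleanest to work with monomials $A=\X_1\cdots\X_a$ and $B=\Y_1\cdots\Y_b$ built from (generically distinct) $\X_i,\Y_j\in\g$, recalling that the target of the comparison is the biderivation $\{A,B\}_{(-1)}^{}=\sum_{i,j}\pm\,[\X_i,\Y_j]\,\X_1\cdots\widehat{\X}_i\cdots\X_a\,\Y_1\cdots\widehat{\Y}_j\cdots\Y_b$, with Koszul signs fixed by the shift. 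It therefore suffices to show that the left-hand side is a signed sum over the same cross-pairs $(i,j)$, each weighted by the single scalar $\varphi_0$.

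First I would push $A\otimes B$ through $i_{(D_1,D_2)}$, producing the product of top forms $(\X_1\otimes\omega_{D_1})\cdots(\X_a\otimes\omega_{D_1})$ on $D_1$ and $(\Y_1\otimes\omega_{D_2})\cdots(\Y_b\otimes\omega_{D_2})$ on $D_2$. Applying $h_{D_1^\prime}$ through the symmetrization \eqref{SymhD explicit} turns exactly one $A$-factor into the $1$-form $\X_i\otimes h^2_{D_1^\prime}(\omega_{D_1})$ while leaving the other $A$-factors as top forms, after which $\mathfrak{U}\g^{\bbR^2}(\iota_{(D_1^\prime,D_2)}^{\bbR^2})$ extends by zero and multiplies everything inside $\Sym\big(\g_\cc^{\bbR^2}[1](\bbR^2)\big)$. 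The heart of the computation is a de Rham degree count for the tail $p_{\bbR^2}\,\dd_\CE\,h_{\bbR^2}$, governed by Remark \ref{rem: dCE action}: after $h_{\bbR^2}$ lowers one further factor and the $i_{\bbR^2}p_{\bbR^2}$-factors are replaced by copies of $\omega_{\bbR^2}$, the single $\dd_\CE$ must fuse two factors whose de Rham degrees add to exactly $2$, all others being $2$-forms so as to survive $p_{\bbR^2}$. This leaves precisely two channels: fusing the two $1$-forms $h^2_{D_1^\prime}(\omega_{D_1})$ and $h^2_{\bbR^2}(\omega_{D_2})$, or fusing the $0$-form $h^1_{\bbR^2}\big(h^2_{D_1^\prime}(\omega_{D_1})\big)$ with a $2$-form $\omega_{D_2}$ or $\omega_{\bbR^2}$.

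Because $\dd_\CE$ inserts the Lie bracket $[\X_i,\X_j]$, fusing two factors that carry distinct generators need not vanish, so the unwanted internal fusions must be shown to drop out by other means. The internal-$B$ fusions are impossible on degree grounds, since at most one $B$-factor can be lowered to a $1$-form by $h_{\bbR^2}$. The internal-$A$ fusions are exactly the contributions computed by the $A$-only map $p_{\bbR^2}\,\dd_\CE\,h_{\bbR^2}\,h_{D_1^\prime}\,i_{D_1}$, with the $B$-factors appearing as spectator top forms integrating to $1$; these are cancelled precisely by the subtracted second term $\mu_2\,p_{\bbR^2}\,\dd_\CE\,h_{\bbR^2}\,h_{D_1^\prime}\,i_{D_1}$ of \eqref{Massey explicit R2}. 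What survives are the cross fusions $[\X_i,\Y_j]$, summed over $(i,j)$, which reassemble into $\{A,B\}_{(-1)}^{}$. The two channels reproduce the two summands of $\varphi_0$ in \eqref{eqn:varphi0}: the first gives the $h^2_{\bbR^2}(\omega_{D_2})\wedge h^2_{D_1^\prime}(\omega_{D_1})$ contribution, while the second gives $h^1_{\bbR^2}\big(h^2_{D_1^\prime}(\omega_{D_1})\big)\wedge(\omega_{D_2}+\omega_{\bbR^2})$, the sum $\omega_{D_2}+\omega_{\bbR^2}$ recording whether the bracketed $B$-factor lies before or after the position at which $h_{\bbR^2}$ acts, and hence is left untouched or turned into $\omega_{\bbR^2}$ by $i_{\bbR^2}p_{\bbR^2}$.

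The main obstacle I anticipate is the bookkeeping that turns the two nested symmetrizations into the clean prefactor $\varphi_0$ and the exact signed cross-pair sum $\{A,B\}_{(-1)}^{}$. One must track, in both $h_{D_1^\prime}$ and $h_{\bbR^2}$, which factor receives the homotopy $h[1]$ versus $i\,p$, keeping in mind that an $A$-factor already reduced to a $1$-form is annihilated by $i_{\bbR^2}p_{\bbR^2}$ since its integral vanishes; and one must carry the Koszul signs coming from the degree shifts and from the definition \eqref{dCE def} of $\dd_\CE$, checking that they reproduce the signs in the biderivation expansion of $\{A,B\}_{(-1)}^{}$. The overall normalization $\tfrac12$ and the combination $\omega_{D_2}+\omega_{\bbR^2}$ should emerge once the symmetrization factorials and the two orderings of the fused pair are accounted for; verifying that these multiplicities and signs conspire exactly is the delicate part, most transparently checked first on the base case $a=b=1$, where the subtracted term drops out and $\varphi_0$ is read off directly.
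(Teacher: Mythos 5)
Your proposal is correct and follows essentially the same route as the paper: a direct evaluation of \eqref{Massey explicit R2} on monomials, a de Rham degree count via Remark \ref{rem: dCE action} isolating the surviving fusion channels, cancellation of the internal-$A$ contributions (the paper's $\psi(t)$-terms) against the subtracted $\mu_2$-term, and identification of the two cross-channels with the two summands of $\varphi_0$. Like the paper, you defer the detailed sign and symmetrization bookkeeping, which is precisely the part the authors describe as ``straightforward but rather lengthy.''
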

\begin{proof}
To simplify notation, we will denote in this proof the given tree by 
$t = t\big(\iota_{(D_1^{\prime},D_2)}^D,\iota_{D_1}^{D_1^{\prime}}\big)$.
By linearity, it suffices to consider the case where
$A = \prod_{a=1}^n \X_a \in \Sym^n(\g[-1])$ and $B = \prod_{b=1}^r \Y_b \in \Sym^r(\g[-1])$ 
are two arbitrary monomials in $\Sym(\g[-1])$.
The result follows from a direct computation of the evaluation of \eqref{Massey explicit R2} on $A \otimes B$. 
This computation is straightforward but rather lengthy, so we will only highlight the key points. 
Evaluating the first term on the right-hand side of \eqref{Massey explicit R2} on $A \otimes B$, 
we find
\begin{flalign}
\nn &p_{\bbR^2} \, \dd_\CE \, h_{\bbR^2} \, \mathfrak{U}\g^{\bbR^2} \big(\iota_{(D_1^{\prime},D_2)}^{\bbR^2}\big) 
\, h_{D_1^\prime} \, i_{(D_1,D_2)} (A \otimes B)\\
\nn &\qquad\qquad = \, - \psi(t)\, \sum_{\substack{i,j=1\\ i < j}}^n (-1)^{i+j} \, [\X_i, \X_j] \, 
\prod_{\substack{a=1\\ a\neq i, j}}^n \X_a\,  \prod_{b=1}^r \Y_b \\
&\qquad\qquad\qquad\qquad - \varphi_0(t)\, \sum_{i=1}^n \sum_{j=1}^r (-1)^{n+i+j} \, [\X_i, \Y_j]\, 
\prod_{\substack{a=1\\ a\neq i}}^n  \X_a \, \prod_{\substack{b=1\\ b \neq j}}^r \Y_b\quad,  \label{first term Massey}
\end{flalign}
where $\varphi_0(t) \in \bbK$ is defined as in the statement of the lemma and $\psi(t) \in \bbK$ is given by the integral
\begin{flalign}
\psi(t) \,:=\, \frac{1}{2}\, \int_{\bbR^2} \Big( h^1_{\bbR^2} \big( h^2_{D_1^{\prime}}(\omega_{D_1}) \big)
\wedge (\omega_{D_1} + \omega_{D_1^\prime} + 2 \omega_{\bbR^2}) + h^2_{\bbR^2}(\omega_{D_1} + \omega_{D_1^\prime}) 
\wedge h^2_{D_1^\prime}(\omega_{D_1}) \Big) \quad.
\end{flalign}
On the other hand, evaluating the second term on the right-hand side of \eqref{Massey explicit R2} on $A \otimes B$, 
we find
\begin{flalign}
\mu_2\, p_{\bbR^2} \, \dd_\CE \, h_{\bbR^2} \, h_{D_1^\prime} \, i_{D_1}(A \otimes B)
\, = \, - \psi(t)\, \sum_{\substack{i,j=1\\ i < j}}^n (-1)^{i+j} \, [\X_i, \X_j] \,
\prod_{\substack{a=1\\ a\neq i, j}}^n  \X_a \, \prod_{b=1}^r \Y_b \quad.
\end{flalign}
This coincides with the first term on the right-hand side of \eqref{first term Massey} 
and therefore they cancel in the expression for $\widetilde{\beta}^{(2)}(t)(A \otimes B)$. The latter 
then reads as
\begin{flalign}
\widetilde{\beta}^{(2)}(t)(A\otimes B) \,=\, - \varphi_0(t)\, \sum_{i=1}^n \sum_{j=1}^r (-1)^{n+i+j} \,[\X_i, \Y_j]\, \prod_{\substack{a=1\\ a\neq i}}^n \X_a\,  \prod_{\substack{b=1\\ b \neq j}}^r \Y_b \,=\, 
\varphi_0(t) \, \{ A, B \}_{(-1)}^{}\quad,
\end{flalign}
where in the last equality we have used the definition of the degree $-1$ Poisson bracket \eqref{shifted PB}
and its biderivation property. 
\end{proof}

In the next proposition, we obtain a simpler expression for 
\eqref{eqn:explicitt21Masseypre} by exploiting the freedom to add 
a suitable coboundary term to $\widetilde{\beta}^{(2)}$.
\begin{propo} \label{prop: Massey simplify}
There exists a $0$-cochain $\chi \in \Gamma^0\big(\Sym(\g[-1]) \big)$,
satisfying the conditions from Remark \ref{rem:computingMassey},
such that $\widetilde{\beta}^{\prime(2)} := \widetilde{\beta}^{(2)} + \partial_\Gamma \chi$ takes the values
\begin{flalign}
\widetilde{\beta}^{\prime(2)}\big(\iota_{(D_1^{\prime},D_2)}^D,\iota_{D_1}^{D_1^{\prime}}\big)(A \otimes B)\, =\,
\varphi\big(\iota_{(D_1^{\prime},D_2)}^D,\iota_{D_1}^{D_1^{\prime}}\big) ~ \{ A, B \}_{(-1)}^{}\quad,
\end{flalign}
for all trees $t\big(\iota_{(D_1^{\prime},D_2)}^D,\iota_{D_1}^{D_1^{\prime}}\big)$ and all 
$A, B \in \Sym(\g[-1])$, where the numerical prefactor is now given by the simpler integral
\begin{flalign}\label{eqn:varphi}
\varphi\big(\iota_{(D_1^{\prime},D_2)}^D,\iota_{D_1}^{D_1^{\prime}}\big) 
\,:=\, \int_{\bbR^2} h^1_{\bbR^2} \big( h^2_{D_1^{\prime}}(\omega_{D_1}) \big) \wedge \omega_{D_2} \quad.
\end{flalign}
\end{propo}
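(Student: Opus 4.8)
The plan is to obtain $\widetilde{\beta}^{\prime(2)}$ from $\widetilde{\beta}^{(2)}$ by adding the coboundary $\partial_\Gamma\chi$ of an explicit admissible $0$-cochain $\chi$. By Lemma \ref{lem: Massey computation} both cocycles act on a tree $t := t\big(\iota_{(D_1^\prime,D_2)}^D,\iota_{D_1}^{D_1^\prime}\big)$ of the form \eqref{binary tree} by a scalar times $\{-,-\}_{(-1)}^{}$, so it suffices to find $\chi$ satisfying the conditions of Remark \ref{rem:computingMassey} whose coboundary acts on each such $t$ as $\big(\varphi(t)-\varphi_0(t)\big)\,\{-,-\}_{(-1)}^{}$. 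The first step is therefore to rewrite the scalar difference $\varphi_0-\varphi$ in a manifestly ``telescoping'' shape in the moving disk $D_1$.

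To do this I would set $a := h^2_{D_1^\prime}(\omega_{D_1})$ and use the deformation retract relations of Lemma \ref{lem: deRhamSDR}, namely $\dd_{\dR}a=\omega_{D_1}-\omega_{D_1^\prime}$, $\dd_{\dR}h^1_{\bbR^2}(a)=a-h^2_{\bbR^2}(\dd_{\dR}a)$ and $\dd_{\dR}h^2_{\bbR^2}(\omega_{D_2})=\omega_{D_2}-\omega_{\bbR^2}$ (all three following from \eqref{dh ip id}). Since every form involved is compactly supported, Stokes gives $0=\int_{\bbR^2}\dd_{\dR}\big(h^1_{\bbR^2}(a)\wedge h^2_{\bbR^2}(\omega_{D_2})\big)$; expanding by the Leibniz rule and substituting the three relations expresses $\int_{\bbR^2}h^1_{\bbR^2}(a)\wedge(\omega_{D_2}-\omega_{\bbR^2})$ in terms of wedges of $1$-forms. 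Plugging this back into $\varphi_0-\varphi$ and using that the wedge of two $1$-forms is antisymmetric, the two terms containing $h^1_{\bbR^2}(a)$ and the pair $a\wedge h^2_{\bbR^2}(\omega_{D_2})$ cancel, and one is left with
\[
\varphi_0(t)-\varphi(t)\,=\,\tfrac12\int_{\bbR^2}\big(h^2_{\bbR^2}(\omega_{D_1^\prime})-h^2_{\bbR^2}(\omega_{D_1})\big)\wedge h^2_{\bbR^2}(\omega_{D_2})\quad.
\]
The crucial feature is that, writing $g(E_1,E_2):=\tfrac12\int_{\bbR^2}h^2_{\bbR^2}(\omega_{E_1})\wedge h^2_{\bbR^2}(\omega_{E_2})$, this equals $g(D_1^\prime,D_2)-g(D_1,D_2)$, i.e.\ a difference of one fixed quantity with only the first slot moving and the outgoing disk $D$ inert.

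This telescoping is exactly what the coboundary of a binary $0$-cochain supplies. I would define $\chi\in\Gamma^0\big(\Sym(\g[-1])\big)$ to vanish on all trees except the binary one-vertex trees, on which I set $\chi\big(\iota_{(E_1,E_2)}^E\big)(A\otimes B):=-g(E_1,E_2)\,\{A,B\}_{(-1)}^{}$. Because $g(E_1,E_2)$ is independent of the outgoing disk $E$ and $\chi$ vanishes on unary trees, both sides of \eqref{eqn:improvedgauge2} (and hence \eqref{eqn:improvedgauge1}) are zero, so $\chi$ is admissible. Feeding it into \eqref{eqn:gaugetransformationPFA} with $\und{D}=(D_1^\prime,D_2)$, $\und{D}_1=(D_1)$, and using $\H\mathfrak{U}\g^{\bbR^2}\!\big(\iota_{D_1}^{D_1^\prime}\big)=\id$, $\H\mathfrak{U}\g^{\bbR^2}\!\big(\iota_{(D_1^\prime,D_2)}^{D}\big)=\mu_2$ and $\chi\big(\iota_{D_1}^{D_1^\prime}\big)=0$, the coboundary collapses to $-\chi\big(\iota^D_{(D_1,D_2)}\big)+\chi\big(\iota^D_{(D_1^\prime,D_2)}\big)=\big(g(D_1,D_2)-g(D_1^\prime,D_2)\big)\{-,-\}_{(-1)}^{}=(\varphi(t)-\varphi_0(t))\{-,-\}_{(-1)}^{}$, giving $\widetilde{\beta}^{\prime(2)}(t)=\varphi(t)\{-,-\}_{(-1)}^{}$ as claimed. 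Since $\chi$ respects Remark \ref{rem:computingMassey}, the new cocycle still enjoys the properties of Proposition \ref{prop: improved Masseys}.

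I expect the only genuine obstacle to be the sign bookkeeping, at two places. First, in the Stokes computation above one must track the Koszul signs carefully to obtain the clean cancellations. Second, the coefficient $g(E_1,E_2)$ is \emph{antisymmetric} under $E_1\leftrightarrow E_2$ whereas $\{-,-\}_{(-1)}^{}$ is symmetric ($L\tau=L$), so at face value $\chi$ would violate $\Sigma_2$-equivariance. The resolution I would spell out is that the inputs of $\mathsf{B}\PP=\bbT^{\cc}\overline{\PP}[1]$ are suspended, so the transposition acts on a binary one-vertex generator with an extra sign (the sign representation of the suspension); with this twist the equivariance condition for $\chi$ precisely requires an antisymmetric coefficient, which $g$ provides. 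I would verify this last point directly on generators $\X,\Y\in\g[-1]$, both of odd degree, to pin down every sign. Everything else is a routine substitution into \eqref{eqn:gaugetransformationPFA}.
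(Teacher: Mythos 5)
Your proposal is correct and follows essentially the same route as the paper: your $0$-cochain $\chi$ is literally the paper's (since $-g(E_1,E_2)=\tfrac12\int_{\bbR^2}h^2_{\bbR^2}(\omega_{E_2})\wedge h^2_{\bbR^2}(\omega_{E_1})$), and your identity $\varphi_0-\varphi=g(D_1^\prime,D_2)-g(D_1,D_2)$ is obtained from the same relations \eqref{dh ip id} and the same Stokes argument, merely applied in a slightly different order. Your closing aside about $\Sigma_2$-equivariance of $\chi$ raises a sign-convention point that the paper does not discuss at all, but it is peripheral to the argument and does not change the substance of the proof.
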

\begin{proof}
Recalling the explicit form \eqref{eqn:gaugetransformationPFA} of
the transformation formula $\widetilde{\beta}^{\prime(2)} = \widetilde{\beta}^{(2)} + \partial_\Gamma \chi$,
we find for our trees
\begin{flalign} \label{beta2 change gauge}
\widetilde{\beta}^{\prime(2)}\big(\iota_{(D_1^{\prime},D_2)}^D,\iota_{D_1}^{D_1^{\prime}}\big)  
\,=\, \widetilde{\beta}^{(2)}\big(\iota_{(D_1^{\prime},D_2)}^D,\iota_{D_1}^{D_1^{\prime}}\big) 
- \chi\big( \iota^D_{(D_1, D_2)} \big)
+ \chi\big( \iota^D_{(D_1^\prime, D_2)} \big) + \mu_2 \, \chi \big( \iota_{D_1}^{D_1^\prime} \big)\quad.
\end{flalign}
We define the $0$-cochain $\chi$ on trees with one and two free edges by
\begin{flalign}\label{eqn:gaugetransformationforFE}
\chi\big( \iota^{\widetilde{D}}_{D} \big) (A) = 0 \quad, \qquad
\chi\big( \iota^{\widetilde{D}}_{(D, D^\prime)} \big)(A \otimes B) \,:=\, 
\frac{1}{2} \int_{\bbR^2} h^2_{\bbR^2}(\omega_{D^\prime}) \wedge h^2_{\bbR^2}(\omega_D) ~ \{A, B\}_{(-1)}^{}\quad,
\end{flalign}
for all $A, B \in \Sym(\g[-1])$, and by zero on all other trees.
Note that this satisfies the conditions \eqref{eqn:improvedgauge2} from Remark \ref{rem:computingMassey}.
Then the last term in \eqref{beta2 change gauge} vanishes and evaluation on $A \otimes B$ gives
\begin{flalign}
\nn &\widetilde{\beta}^{\prime(2)}\big(\iota_{(D_1^{\prime},D_2)}^D,\iota_{D_1}^{D_1^{\prime}}\big)(A \otimes B)\\ 
\nn &\quad \,=\, \bigg( \varphi_0\big(\iota_{(D_1^{\prime},D_2)}^D,\iota_{D_1}^{D_1^{\prime}}\big)   
+ \frac{1}{2} \int_{\bbR^2} h^2_{\bbR^2}(\omega_{D_2}) \wedge \big( \! - h^2_{\bbR^2}(\omega_{D_1}) + h^2_{\bbR^2}(\omega_{D_1^{\prime}}) \big) \bigg)\, \{ A, B \}_{(-1)}^{}\\
\nn &\quad \,=\, \frac{1}{2} \int_{\bbR^2} \Big( h^1_{\bbR^2} \big( h^2_{D_1^\prime}(\omega_{D_1}) \big) 
\wedge (\omega_{D_2} + \omega_{\bbR^2}) \, + \\
&\qquad~\qquad~\qquad  h^2_{\bbR^2}(\omega_{D_2}) \wedge \big( h^2_{D_1^\prime}(\omega_{D_1})
- h^2_{\bbR^2}(\omega_{D_1}) + h^2_{\bbR^2}(\omega_{D_1^\prime}) \big) \Big) \,\{ A, B \}_{(-1)} \quad.
\end{flalign}
Now using the identities \eqref{dh ip id}, we have
\begin{flalign}
\nn h^2_{D_1^{\prime}}(\omega_{D_1}) \,&=\, h^2_{\bbR^2}\big( \dd_{\rm dR} h^2_{D_1^{\prime}}(\omega_{D_1}) \big) 
+ \dd_{\rm dR} h^1_{\bbR^2}\big( h^2_{D_1^\prime}(\omega_{D_1}) \big)\\
&=\, h^2_{\bbR^2}( \omega_{D_1} ) - h^2_{\bbR^2}( \omega_{D_1^\prime} ) 
+ \dd_{\rm dR} h^1_{\bbR^2}\big( h^2_{D_1^\prime}(\omega_{D_1}) \big) \quad,
\end{flalign}
where in the second line we have used $\int_{D^\prime_1} \omega_{D_1} = 1$.
The result now follows using this identity in the last line above, 
then using Stokes' theorem and the fact that 
$\dd_{\rm dR} h^2_{\bbR^2}(\omega_{D_2}) = \omega_{D_2} - \omega_{\bbR^2}$, 
which follows again by \eqref{dh ip id a}.
\end{proof}

\begin{theo}\label{theo:LinvariantFacEnvelope}
The $L$-invariant \eqref{eqn:Lcombination} 
for the factorization envelope $\mathfrak{U}\g^{\bbR^2}\in\Alg_{\PP_{\bbR^2}}$ on $\bbR^2$
coincides with the degree $-1$ Poisson bracket from \eqref{shifted PB}, i.e.\
\begin{flalign}
L \,= \, \{-,-\}_{(-1)}^{}\quad.
\end{flalign}
It follows that the cohomology class $\big[ \widetilde{\beta}^{(2)} \big] \in \H^1_\Gamma\big( \Sym(\g[-1])\big)$ 
defining the universal first-order Massey product is non-trivial for $\mathfrak{U}\g^{\bbR^2}\in\Alg_{\PP_{\bbR^2}}$.
\end{theo}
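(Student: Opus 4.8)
The plan is to evaluate the invariant $L$ on the representative $\widetilde{\beta}^{\prime(2)}$ produced in Proposition \ref{prop: Massey simplify}. This is legitimate because, by Proposition \ref{prop:Linvariant}, $L$ depends only on the class $\big[\widetilde{\beta}^{(2)}\big]$ and may be computed from any $1$-cocycle satisfying the properties of Proposition \ref{prop: improved Masseys}. Each of the four trees in the defining combination \eqref{eqn:Lcombination} has the shape $t\big(\iota^{D}_{(D_1^\prime,D_2)},\iota^{D_1^\prime}_{D_1}\big)$ covered by Proposition \ref{prop: Massey simplify}: the first two have top vertex $\iota^{D}_{(D_u,\widetilde{D})}$ and bottom input $D_1$, resp.\ $D_2$, while the last two have top vertex $\iota^{D}_{(D_d,\widetilde{D})}$. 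Hence every summand factors as $\varphi(\,\cdot\,)\,\{-,-\}_{(-1)}^{}$ with the numerical coefficient \eqref{eqn:varphi}, and collecting the four integrals by linearity of $h^1_{\bbR^2}$ gives $L = c\,\{-,-\}_{(-1)}^{}$ with
\[
c \,=\, \int_{\bbR^2} h^1_{\bbR^2}(\Theta)\wedge\omega_{\widetilde{D}}\,,\qquad
\Theta \,:=\, h^2_{D_u}(\omega_{D_1})-h^2_{D_u}(\omega_{D_2})+h^2_{D_d}(\omega_{D_2})-h^2_{D_d}(\omega_{D_1})\,.
\]
It therefore remains to prove the normalization $c=1$.

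The key observation is that the individual $1$-forms $h^2_{D_u}(\omega_{D_i})$ and $h^2_{D_d}(\omega_{D_i})$ are not closed, but their alternating sum $\Theta$ is: using \eqref{dh ip id a} one computes $\dd_{\dR}\Theta = (\omega_{D_1}-\omega_{D_u})-(\omega_{D_2}-\omega_{D_u})+(\omega_{D_2}-\omega_{D_d})-(\omega_{D_1}-\omega_{D_d})=0$. Moreover $\Theta$ is supported in the annulus $D_u\cup D_d$. Since $\H^1\Omega^\bullet_{\cc}(\bbR^2)=0$, the map $h^1_{\bbR^2}$ sends the closed, compactly supported form $\Theta$ to a compactly supported primitive $g:=h^1_{\bbR^2}(\Theta)$, i.e.\ $\dd_{\dR}g=\Theta$. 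Outside the annulus $\dd_{\dR}g=0$, so $g$ is locally constant there; being compactly supported it vanishes on the unbounded component of $\bbR^2\setminus(D_u\cup D_d)$, while on the bounded ``hole'' component containing $\widetilde{D}$ it equals a single constant, which is exactly $c$ because $\omega_{\widetilde{D}}$ is supported in $\widetilde{D}$ with $\int_{\widetilde{D}}\omega_{\widetilde{D}}=1$.

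To pin down this constant I would integrate $\dd_{\dR}g=\Theta$ along a path $\gamma$ running from a point where $g=0$ (far from the annulus) to a point in the hole where $g=c$, so that $c=\int_\gamma\Theta$ (path-independent, as $\Theta$ is exact). Choosing $\gamma$ to cross the annulus only through the upper region $D_u$, the two summands of $\Theta$ supported in $D_d$ integrate to zero, leaving $c=\int_\gamma\eta$ with $\eta:=h^2_{D_u}(\omega_{D_1})-h^2_{D_u}(\omega_{D_2})$ a $1$-form supported in $D_u$ satisfying $\dd_{\dR}\eta=\omega_{D_1}-\omega_{D_2}$. Closing $\gamma$ into a loop by a return arc that leaves the hole through $D_d$ (hence misses $\supp\eta\subseteq D_u$ and contributes nothing) and encircles $D_1$ but not $D_2$, Stokes' theorem gives $c=\int_\Sigma(\omega_{D_1}-\omega_{D_2})=1$, where $\Sigma$ is the enclosed region oriented compatibly with the fixed orientation of $\bbR^2$; this is the analytic incarnation of the ``clockwise'' motion of $D_1$ around $\widetilde{D}$ described after \eqref{eqn:Lcombination}. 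I expect the sign bookkeeping in this last step to be the main obstacle, since $D_u\cup D_d$ is an annulus fully separating the hole from infinity, so one must verify that a path and its closure can be chosen crossing $D_u$ and $D_d$ respectively while enclosing precisely $D_1$; here the freedom granted by Lemma \ref{lem:Lindependentdisks} to pass to a maximally symmetric disk configuration makes the topology transparent.

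Combining the two parts gives $L=\{-,-\}_{(-1)}^{}$. For non-abelian $\g$ this degree $-1$ Poisson bracket is non-zero already on generators, $\{\X,\Y\}_{(-1)}^{}=[\X,\Y]$, so $L\neq0$; by the final assertion of Proposition \ref{prop:Linvariant} this forces $\big[\widetilde{\beta}^{(2)}\big]\neq0$, establishing non-triviality of the universal first-order Massey product of $\mathfrak{U}\g^{\bbR^2}$. (In the abelian case $\mathfrak{U}\g^{\bbR^2}$ is formal and the statement is vacuous.)
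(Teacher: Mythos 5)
Your proposal is correct and follows essentially the same route as the paper: reduce to the representative $\widetilde{\beta}^{\prime(2)}$ of Proposition \ref{prop: Massey simplify} via Proposition \ref{prop:Linvariant}, identify the numerical prefactor as $\int_{\bbR^2} g\wedge\omega_{\widetilde{D}}$ for a compactly supported primitive $g$ of the closed $1$-form $\Theta=\beta_u-\beta_d$, and evaluate the locally constant value of $g$ on the hole by a path integral through $D_u$ closed off through $D_d$ and Stokes' theorem. Your explicit remark that the non-triviality conclusion requires $\g$ non-abelian is a correct refinement that the paper leaves implicit.
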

\begin{proof}
Using Proposition \ref{prop:Linvariant}, 
we can equivalently consider the simpler transformed $1$-cocycle $\widetilde{\beta}^{\prime(2)}$
from Proposition \ref{prop: Massey simplify}. Inserting this into the definition 
of $L$ from \eqref{eqn:Lcombination}, we find
\begin{flalign}
L(A \otimes B) \,=\, \int_{\bbR^2} \Big( h^1_{\bbR^2} \big( h^2_{D_u}(\omega_{D_1} - \omega_{D_2}) \big) -
h^1_{\bbR^2} \big( h^2_{D_d}(\omega_{D_1} - \omega_{D_2}) \big) \Big) \wedge \omega_{\widetilde{D}} ~
\{ A, B \}_{(-1)}^{} \quad.
\end{flalign}
Since $\int_{D_u} (\omega_{D_1} - \omega_{D_2}) = 0$, 
we have $\omega_{D_1} - \omega_{D_2} = \dd_{\rm dR} \beta_u$
for $\beta_u = h^2_{D_u} (\omega_{D_1} - \omega_{D_2}) \in \Omega^1_{\cc}(D_u)$ 
and $\omega_{D_1} - \omega_{D_2} = \dd_{\rm dR} \beta_d $
for $\beta_d = h^2_{D_d} (\omega_{D_1} - \omega_{D_2}) \in \Omega^1_{\cc}(D_d)$
by identity \eqref{dh ip id a}. 
Similarly, because $\dd_{\rm dR} (\beta_u - \beta_d) = 0$ 
we have $\beta_u - \beta_d = \dd_{\rm dR} g$
for $g = h^1_{\bbR^2} (\beta_u - \beta_d) \in \Omega^0_{\cc} (D)$ 
by \eqref{dh ip id b}, 
noting that because $\dd_{\rm dR} g = 0$ outside of $D_u \cup D_d$, 
$g \in \Omega^0_{\cc} (\widehat{D})$ for any disk $\widehat{D}$ containing $D_u$ and $D_d$. 
So  
\begin{flalign}
L(A \otimes B) 
\,=\, \int_{\bbR^2} h^1_{\bbR^2} ( \beta_u - \beta_d) \wedge \omega_{\widetilde{D}} ~ \{ A, B \}_{(-1)}
\,=\, \int_{\bbR^2} g \wedge \omega_{\widetilde{D}} ~ \{ A, B \}_{(-1)}^{} \quad .
\end{flalign}

Now, since $\dd_{\rm dR} g\vert_{\widetilde{D}}  = (\beta_u - \beta_d)\vert_{\widetilde{D}} = 0$, 
it follows that the restriction $g\vert_{\widetilde{D}}$ is constant and we claim that $g\vert_{\widetilde{D}}=1$.
Indeed, $g$ can be constructed explicitly as the line integral $g(p) = \int_{\gamma_p} (\beta_u - \beta_d)$ 
along any path $\gamma_p$ from a fixed base point $p_0 \not \in D$ to $p \in D$. The definition is independent 
of the choice of path by Stokes' theorem since $\dd_{\rm dR} (\beta_u - \beta_d) = 0$. 
Now let $p \in \widetilde{D}$ and pick any path $\gamma_p \in \bbR^2 \setminus D_d$. 
Then $g(p) = \int_{\gamma_p} \beta_u$. Let $\gamma^\prime_p \in \bbR^2 \setminus D_u$ be
another path such that $\gamma_p - \gamma^\prime_p$ forms a cycle
bounding a region $R \supset D_1$ with $R \cap D_2 = \emptyset$. Then $\int_{\gamma^\prime_p} \beta_u = 0$ 
and so
\begin{flalign}
g(p) \,=\, \int_{\gamma_p} \beta_u - \int_{\gamma^\prime_p} \beta_u \,=\, 
\int_{\partial R} \beta_u \,=\, \int_R \dd_{\rm dR} \beta_u \,=\, 
\int_R (\omega_{D_1} - \omega_{D_2}) \,=\, \int_R \omega_{D_1} \,=\, 1
\end{flalign}
as claimed, where in the third equality we used Stokes' theorem. 
So we have 
\begin{flalign}
L(A \otimes B) 
\,=\, \int_{\bbR^2} g \wedge \omega_{\widetilde{D}} ~ \{ A, B \}_{(-1)}^{} 
\,=\, \{ A, B \}_{(-1)}^{} 
\quad 
\end{flalign}
because $\int_{\bbR^2}  \omega_{\widetilde{D}} = 1$ by assumption. 
\end{proof}


\section{\label{sec:CS}Linear Chern-Simons theory}
Throughout this section, we fix the field $\bbK$ to be either the
real numbers $\bbR$ or the complex numbers $\bbC$. Recall
from \cite[Chapter 4.5]{CostelloGwilliam} that the prefactorization
algebra describing linear Chern-Simons theory (i.e.\ with structure group $\bbR$)
on an oriented $3$-manifold $M$ is given by
\begin{flalign}\label{eqn:CSPFA}
\FFF_{\mathrm{CS}}^{} \,:=\, 
\Big(\Sym\big( \Omega^\bullet_{M,\cc}[2] \big) , \dd_{\dR[2]} + \Delta_{\mathrm{BV}}\Big)\,\in \Alg_{\PP_{M}}\quad,
\end{flalign}
where $\Omega^\bullet_{M,\cc}$ is the cosheaf of compactly supported differential forms on $M$
and $\Delta_{\mathrm{BV}}$ is the BV Laplacian. The latter is the second-order differential operator
on $\Sym\big( \Omega^\bullet_{M,\cc}[2] \big)$ defined by
\begin{flalign}
\Delta_{\mathrm{BV}}\bigg( \prod_{a=1}^n \alpha_a \bigg) \,:= \, \sum_{\substack{i,j=1\\i<j}}^n (-1)^{n^{\leftarrow}_{i,j}} \, \bigg(\int_M\alpha_i\wedge \alpha_j\bigg)~ \prod_{\substack{a=1\\ a \neq i, j}}^n \alpha_a \quad,
\end{flalign}
for all $\alpha_a \in \Omega^\bullet_{M,\cc}[2]$ with $a \in \{ 1, \ldots, n\}$, 
where $(-1)^{n^\leftarrow_{i,j}}$ is the Koszul sign arising from bringing 
$\alpha_i$ and $\alpha_j$ to the front of the product.
This prefactorization algebra is locally constant.
\sk

The aim of this section is to describe the universal first-order Massey product
of the prefactorization algebra $\FFF_{\mathrm{CS}}^{}$. By the same argument
as around \eqref{eqn:degreeargument m>2}, one easily sees that the associated 
cohomology class $\big[\widetilde{\beta}^{(2)}\big]=0$ is trivial
for linear Chern-Simons theory on the $3$-dimensional Cartesian space $M=\bbR^3$. This 
is compatible with the more abstract point of view explained in Remark \ref{rem:PFAEmPm}:
Since $\FFF_{\mathrm{CS}}^{}$ on $\bbR^3$ is equivalent to an $\mathbb{E}_3$-algebra,
one does not expect a non-trivial first-order Massey product. In order to observe
a non-trivial cohomology class $\big[\widetilde{\beta}^{(2)}\big]\neq 0$, we will consider
the compactification of linear Chern-Simons theory on the $3$-manifold $M=\bbR^2\times \bbS^1$,
with $\bbS^1$ denoting the circle, which we regard as a $2$-dimensional prefactorization 
algebra on the $\bbR^2$-factor. This example can be treated with similar methods as 
in Section \ref{sec:examples}, hence we can be relatively brief in presenting the results.
\sk

As a first step, we observe that the compactified 
prefactorization algebra $\FFF_{\mathrm{CS}}^{}$ on $\bbR^2\times \bbS^1$
can be replaced by a weakly equivalent model that is more suitable for the
computations in this section. For every $2$-dimensional disk $D\in\mathrm{Disk}_{\bbR^2}$,
we have by K\"unneth's theorem a quasi-isomorphism
\begin{flalign}
\Omega^\bullet_\cc(D) \otimes \Omega^\bullet(\bbS^1) \,\stackrel{\sim}{\longrightarrow}\, \Omega^\bullet_\cc(D\times \bbS^1)\quad.
\end{flalign}
This family of quasi-isomorphisms is natural in $D$, hence we can consider instead of 
the compactification of \eqref{eqn:CSPFA} on $\bbR^2\times \bbS^1$
the weakly equivalent prefactorization algebra
\begin{flalign}
\FFF_{\mathrm{CS}}^{\bbS^1} \,:=\,
\Big(\Sym\big( \Omega^\bullet_{\bbR^2,\cc}[2]\otimes \Omega^\bullet(\bbS^1) \big) , \dd_{\dR[2]}^{\bbR^2} + \dd_{\dR}^{\bbS^1}+\Delta_{\mathrm{BV}}\Big)\,\in \Alg_{\PP_{\bbR^2}}\quad,
\end{flalign}
where now $\Omega^\bullet_{\bbR^2,\cc}$ denotes the cosheaf of compactly supported differential forms on
the $2$-dimensional Cartesian space $\bbR^2$. Using the strong deformation retract from Lemma \ref{lem: deRhamSDR},
we can define a strong deformation retract 
\begin{equation}\label{eqn:CSSDR}
\begin{tikzcd}
\Omega^\bullet(\bbS^1) \,\cong\, \bbK\otimes \Omega^\bullet(\bbS^1) \ar[rr,shift right=-1ex,"{i[2]\otimes \id}"] 
&& \ar[ll,shift right=-1ex,"{p[2]\otimes \id}"] 
\Omega^\bullet_{\bbR^2,\cc}[2] \otimes \Omega^\bullet(\bbS^1) \ar[loop,xshift=10,out=-23,in=20,distance=35,swap,"{h[2]\otimes \id}"]
\end{tikzcd}\qquad.
\end{equation}
To pass to the cohomology $\H^\bullet_{\mathrm{dR}}(\bbS^1)\cong \bbK\oplus\bbK[-1]$ 
of $\Omega^\bullet(\bbS^1)$, we further use the strong deformation retract 
\begin{equation}\label{eqn:SDRHodge}
\begin{tikzcd}
\H^\bullet_{\mathrm{dR}}(\bbS^1) \ar[r,shift right=-1ex,"i^{\bbS^1}"] & \ar[l,shift right=-1ex,"p^{\bbS^1}"] \Omega^\bullet(\bbS^1) \ar[loop,out=-25,in=25,distance=30,swap,"h^{\bbS^1}"]
\end{tikzcd}
\end{equation} 
that is obtained by choosing the standard metric $g = \dd t \otimes \dd t$ on $\bbS^1$
and using Hodge theory. Combining the strong deformation retracts \eqref{eqn:CSSDR} 
and \eqref{eqn:SDRHodge}, lifting along $\Sym$ and applying as in
Proposition \ref{prop: fac env SDR} the homological perturbation lemma, we obtain
the strong deformation retract
\begin{equation} \label{SDR for CS}
\begin{tikzcd}
\Sym\big( \H^\bullet_{\mathrm{dR}}(\bbS^1) \big) \arrow[r, "i", shift right=-1ex] & \FFF_{\mathrm{CS}}^{\bbS^1} 
\arrow[l, "\widehat{p}", shift right=-1ex] \arrow["\widehat{h}"', loop, out=-25,in=25, distance=30]
\end{tikzcd}
\end{equation}
which we use to determine a minimal model and the universal first-order Massey product 
for $\FFF_{\mathrm{CS}}^{\bbS^1}$.
\sk

Similarly to Subsection \ref{sec: structure maps}, one finds that 
the underlying strict $(\PP_{\bbR^2})_\infty$-algebra structure on 
$\Sym\big( \H^\bullet_{\mathrm{dR}}(\bbS^1) \big)$ that is obtained via 
homotopy transfer along the strong deformation retract \eqref{SDR for CS}
is given by the standard associative, unital and commutative algebra structure
on the symmetric algebra. The non-formality of $\FFF_{\mathrm{CS}}^{\bbS^1}\in\Alg_{\PP_{\bbR^2}}$ 
is established again by explicitly computing the $L$-invariant from Subsection \ref{subsec:2dinvariant}.
\begin{theo}
The $L$-invariant \eqref{eqn:Lcombination} 
for the prefactorization algebra $\FFF_{\mathrm{CS}}^{\bbS^1}\in\Alg_{\PP_{\bbR^2}}$
of compactified linear Chern-Simons theory
on $\bbR^2\times\bbS^1$ is given by the degree $-1$ Poisson bracket
\begin{flalign}
L \,= \, \{-,-\}_{(-1)}^{} \, : \, \Sym\big( \H^\bullet_{\mathrm{dR}}(\bbS^1) \big) \otimes \Sym\big( \H^\bullet_{\mathrm{dR}}(\bbS^1) \big) \,\longrightarrow\, \Sym\big( \H^\bullet_{\mathrm{dR}}(\bbS^1) \big)
\end{flalign}
that is defined on the generators $[1],[\dd t]\in \Sym\big( \H^\bullet_{\mathrm{dR}}(\bbS^1) \big)$ by
\begin{flalign}\label{eqn:CSbracketongenerators}
\big\{[1],[1]\big\}_{(-1)}^{} \,=\, 0 \,=\, \big\{[\dd t],[\dd t]\big\}_{(-1)}^{}\quad,\qquad
\big\{[1],[\dd t]\big\}_{(-1)}^{} \,=\,1\,=\,\big\{[\dd t],[1]\big\}_{(-1)}^{}\quad.
\end{flalign}
It follows that the cohomology class 
$\big[ \widetilde{\beta}^{(2)} \big] \in \H^1_\Gamma\big( \Sym\big( \H^\bullet_{\mathrm{dR}}(\bbS^1) \big)\big)$ 
defining the universal first-order Massey product is non-trivial for $\FFF_{\mathrm{CS}}^{\bbS^1}\in\Alg_{\PP_{\bbR^2}}$.
\end{theo}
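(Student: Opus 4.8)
The plan is to transport the computation of Section~\ref{sec:examples} to the present setting, using that the BV Laplacian $\Delta_{\mathrm{BV}}$ plays exactly the role of the Chevalley--Eilenberg differential $\dd_\CE$. The decisive feature is that, in the product model $\Omega^\bullet_{\bbR^2,\cc}[2]\otimes\Omega^\bullet(\bbS^1)$, the BV pairing separates variables,
\[
\int_{\bbR^2\times\bbS^1}(\omega_i\otimes\eta_i)\wedge(\omega_j\otimes\eta_j)\,=\,\pm\Big(\int_{\bbR^2}\omega_i\wedge\omega_j\Big)\,\Big(\int_{\bbS^1}\eta_i\wedge\eta_j\Big)\,,
\]
so that the $\bbR^2$-factor reproduces verbatim the planar integrals of Section~\ref{sec:examples}, while the $\bbS^1$-factor supplies the Poincar\'e-duality pairing on $\H^\bullet_{\mathrm{dR}}(\bbS^1)$ that becomes the bracket.

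First I would record, exactly as in Subsection~\ref{sec: structure maps}, that the transferred strict product is the standard commutative one on $\Sym\big(\H^\bullet_{\mathrm{dR}}(\bbS^1)\big)$: the image of $i$ consists of products of top forms on $\bbR^2$, the homotopy $h[2]$ only lowers the $\bbR^2$-degree, and $\Delta_{\mathrm{BV}}$ can merely contract such factors, so only the leading term of $\widehat{p}$ survives. Next, mirroring Lemma~\ref{lem: Massey computation}, I would evaluate the improved cocycle $\widetilde{\beta}^{(2)}$ on the binary trees \eqref{binary tree}. A form-degree count identical to the one around \eqref{eqn:degreeargument m>2} shows that the perturbation series for $\widehat{p}$ and $\widehat{h}$ leaves a single surviving term containing exactly one $\Delta_{\mathrm{BV}}$, of the schematic shape $p_{\bbR^2}\,\Delta_{\mathrm{BV}}\,h_{\bbR^2}\,h_{D_1^\prime}\,i$ (extension maps suppressed), matching the term retained in Lemma~\ref{lem: Massey computation}. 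On the $\bbR^2$-forms this term reproduces exactly the homotopies and planar integration $\int_{\bbR^2}$ that yield the numerical prefactor $\varphi_0$ of \eqref{eqn:varphi0}, while the harmonic $\bbS^1$-forms ride along unchanged under the $\bbR^2$-homotopies and are contracted by the $\int_{\bbS^1}$-factor of $\Delta_{\mathrm{BV}}$. After cancelling the diagonal self-contraction against the subtracted term of \eqref{Massey example}, exactly as in Lemma~\ref{lem: Massey computation}, this identifies $\widetilde{\beta}^{(2)}$ on binary trees with $\varphi_0\cdot\{-,-\}_{(-1)}$, where $\{[\eta_1],[\eta_2]\}_{(-1)}:=\big(\int_{\bbS^1}\eta_1\wedge\eta_2\big)\,1$; on the generators $[1],[\dd t]$ this yields precisely the stated values.

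I would then repeat the coboundary argument of Proposition~\ref{prop: Massey simplify} with the same $0$-cochain $\chi$ (now carrying the $\bbS^1$-pairing), replacing $\varphi_0$ by the simpler planar integral $\varphi$ of \eqref{eqn:varphi}. By Proposition~\ref{prop:Linvariant} and Lemma~\ref{lem:Lindependentdisks} the $L$-invariant may then be computed from this representative, and its $\bbR^2$-part is evaluated by the identical Stokes-theorem chain used in the proof of Theorem~\ref{theo:LinvariantFacEnvelope}, giving planar prefactor $1$; the surviving $\int_{\bbS^1}$-pairing then yields the stated values of $L$ on $[1],[\dd t]$. Since $L$ is automatically a degree $-1$ Poisson bracket by Theorem~\ref{theo:LPoisson}, this determines $L=\{-,-\}_{(-1)}$ completely, and non-triviality of $\big[\widetilde{\beta}^{(2)}\big]$ follows at once from the last assertion of Proposition~\ref{prop:Linvariant}, since $L\neq 0$.

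The main obstacle is the genuine structural difference between $\Delta_{\mathrm{BV}}$ and $\dd_\CE$: the BV Laplacian \emph{removes} both contracted generators and returns a scalar, whereas $\dd_\CE$ \emph{retains} the wedge $\omega_i\wedge\omega_j$ for later homotopies to act on. One must therefore verify that, after separating variables in the BV pairing, the surviving term on a binary tree still involves exactly one contraction whose planar factor coincides with the integrand of $\varphi_0$, and that the Hodge homotopy $h^{\bbS^1}$ contributes nothing, since the harmonic representatives lie in the image of $i^{\bbS^1}$ and are annihilated by $\dd^{\bbS^1}_{\dR}$. Tracking the Koszul signs produced by the degree-$2$ shift, rather than the degree-$1$ shift of Section~\ref{sec:examples}, together with the bookkeeping of $\bbR^2$- versus $\bbS^1$-form degrees, is the most delicate point, although it does not affect the structural identification $L=\{-,-\}_{(-1)}$.
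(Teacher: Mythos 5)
Your proposal is correct and follows essentially the same route as the paper, whose proof simply declares the computation ``completely analogous'' to Subsection~\ref{subsec:universal1stMassey}: the analogous strong deformation retract yields the prefactor $\varphi_0$ as in Lemma~\ref{lem: Massey computation}, the coboundary of Proposition~\ref{prop: Massey simplify} reduces it to $\varphi$, and the Stokes argument of Theorem~\ref{theo:LinvariantFacEnvelope} gives planar prefactor $1$, with the $\bbS^1$-pairing supplying the bracket values. Your explicit flagging of the structural difference between $\Delta_{\mathrm{BV}}$ (which contracts a pair to a scalar) and $\dd_\CE$ (which retains the wedge for later homotopies) is a point the paper leaves implicit, and your resolution of it is sound.
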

\begin{proof}
The proof follows the same steps as the proof of Theorem \ref{theo:LinvariantFacEnvelope} in Subsection \ref{subsec:universal1stMassey}.
As in Lemma \ref{lem: Massey computation}, 
we find that for any $A,B \in \Sym\big( \H^\bullet_{\mathrm{dR}}(\bbS^1) \big)$
the universal first-order Massey product is
\begin{flalign}\label{eqn:explicitCSMasseyprod}
\widetilde{\beta}^{(2)}\big(\iota_{(D_1^{\prime},D_2)}^D,\iota_{D_1}^{D_1^{\prime}}\big)(A \otimes B) \,=\, 
\varphi_0\big(\iota_{(D_1^{\prime},D_2)}^D,\iota_{D_1}^{D_1^{\prime}}\big) ~ \{ A, B \}_{(-1)}^{}\quad,
\end{flalign} 
where $\{ A, B \}_{(-1)}$ is the bracket determined by \eqref{eqn:CSbracketongenerators}
and $\varphi_0\big(\iota_{(D_1^{\prime},D_2)}^D,\iota_{D_1}^{D_1^{\prime}}\big) $
is the prefactor given by \eqref{eqn:varphi0}. 
The reason that this is the same prefactor as in Lemma \ref{lem: Massey computation}
is that both the strong deformation retract \eqref{gdeRhamSDR} 
and the strong deformation retract \eqref{eqn:CSSDR}
are constructed using the strong deformation retract for de Rham forms
found in Lemma \ref{lem: deRhamSDR}. 

As in Proposition \ref{prop: Massey simplify}
one then finds that after a gauge transformation $\chi$
(the nonzero part of which is given by \eqref{eqn:gaugetransformationforFE},
where $\{ A, B \}_{(-1)}$ now is the bracket determined by \eqref{eqn:CSbracketongenerators})
the prefactor becomes $\varphi\big(\iota_{(D_1^{\prime},D_2)}^D,\iota_{D_1}^{D_1^{\prime}}\big) $, 
which is defined in \eqref{eqn:varphi}. 
Finally, as in the proof of Theorem \ref{theo:LinvariantFacEnvelope} 
one finds that the prefactor of $L$ is $1$, proving the Theorem. 
\end{proof}


\section*{Acknowledgments}
We would like to thank Marco Benini, Victor Carmona and Joost Nuiten
for valuable discussions about this work.
B.V.\ and S.B.\ gratefully acknowledge the support of the Leverhulme Trust 
through a Leverhulme Research Project Grant (RPG-2021-154).
A.S.\ gratefully acknowledges the support of 
the Royal Society (UK) through a Royal Society University 
Research Fellowship (URF\textbackslash R\textbackslash 211015)
and Enhancement Grants (RF\textbackslash ERE\textbackslash 210053 and 
RF\textbackslash ERE\textbackslash 231077).

\section*{Data availability statement}
All data generated or analyzed during this study are contained in this document.


\end{document}